\renewcommand{\title}[1]{\vspace{\fill}
\eject\addtolength{\baselineskip}{4pt}
{\bfseries\LARGE #1}\\[3mm]\addtolength{\baselineskip}{-4pt}}
\renewcommand{\author}[3]{\parbox[t]{75mm}
{\begin{center}{\scshape #1}\\[3mm] #2\\
 {\ttfamily #3} \end{center}}}
\newtheorem{thm}{\bfseries Theorem}[section]
\newtheorem{lem}[thm]{\bfseries Lemma}        
\newtheorem{prop}[thm]{\bfseries Proposition} 
\newtheorem{cor}[thm]{\bfseries Corollary}
\numberwithin{equation}{section}
\newenvironment{proof}{\medskip                    
\noindent{\scshape Proof:}}{\quad $\Box$\medskip}  
\begin{document}

\begin{center}

\title{On the Number of Maximal Cliques in\\
Two-Dimensional Random Geometric Graphs:\\
Euclidean and Hyperbolic} 
\author{
\underline{Hodaka Yamaji}
}{
Graduate School of Information Science and Technology\\
The University of Tokyo\\
Tokyo 113-8656, Japan
}{
    hodakaymj@g.ecc.u-tokyo.ac.jp
}

\end{center}


\begin{quote}
{\bfseries Abstract:}
Maximal clique enumeration appears in various real-world networks, such as social
networks and protein-protein interaction networks for different applications.
For general graph inputs, the number of maximal cliques can be up to $3^{|V|/3}$.
However, many previous works suggest that the number is much smaller than that on real-world networks, and polynomial-delay algorithms enable us to enumerate them in a realistic-time span.
To bridge the gap between the worst case and practice, we consider the number of maximal cliques in two 
popular models
of real-world networks:
Euclidean random geometric graphs and hyperbolic random graphs.
We show that the number of maximal cliques on Euclidean random geometric graphs is lower and upper bounded by
$\exp(\Omega(|V|^{1/3}))$ and $\exp(O(|V|^{1/3+\epsilon}))$ with high probability for any $\epsilon > 0$.
For a hyperbolic random graph, we give the bounds of
$\exp(\Omega(|V|^{(3-\gamma)/6}))$ and $\exp(O(|V|^{(3-\gamma+\epsilon)/6)}))$ where $\gamma$ is 
the power-law degree exponent between 2 and 3.
\end{quote}

\begin{quote}
{\bf Keywords: Maximal Cliques, Random Geometric Graphs, Real-World Networks}
\end{quote}
\vspace{5mm}



\section{Introduction}
\subsection{Background}
Detecting all maximal cliques in a graph is a crucial analysis tool for real-world networks from
various fields: social networks,
protein-protein interaction networks, and web graphs because cliques correspond to 
meaningful components in the networks \cite{cliquesocial,cliqueprotein,cliqueweb}.
Not only does it have many direct applications, but its algorithms and techniques are used in other 
clique-related methods
such as clique percolation \cite{cliquepercolation} and $k$-clique counting \cite{pivoting}. This is because we can detect all cliques by enumerating only the maximal ones.\par
For general graph inputs, the number of maximal cliques $\mathcal{M}$ can be up to $3^{|V|/3}$ \cite{moonmoser}. Therefore, enumerating all of them is NP-hard. However, many studies report that in real-world networks, $\mathcal{M}$ is much smaller than that. Thus, polynomial-delay algorithms, the running time of which is bounded by $poly(|V|)\cdot \mathcal{M}$, can enumerate all maximal cliques in realistic-time span even for networks with millions of vertices \cite{polynomialdelay}.
Also, classic Bron-Kerbosch algorithm \cite{bron-kerbosch} (plus graph orientation \cite{orientation}) is known to be efficient in many instances \cite{MCE}, although its worst running time is $O^*(3^{|V|/3})$ and not bounded in terms of $\mathcal{M}$.
Here we strike upon the question:
\textbf{why is the number of maximal cliques small on real-world networks?}\par

In the study of real-world networks, networks that appear naturally in various fields are considered. In terms of the graph structure, it seems that networks from different domains are entirely different from each other. However, it is known that they share specific common properties.
For example, they have a power-law degree distribution:
the number of nodes with a vertex degree of $k$ is proportional to $k^{-\gamma}$. 
In many cases, $\gamma$ is between two and three, and these networks are called scale-free. 
Additionally, they have the triadic closure property,  
meaning that if two vertices have common neighbors, they are likely to be connected. 
The property is often described with a measure called the clustering coefficient, 
and real-world graphs often have a high clustering coefficient. Other common properties include tree-like structures, small diameter, and small clique number.

One of the most combinatorially studied models of real-world networks 
is hyperbolic random graphs \cite{hyperbolicgraph}. The graph is generated by independently placing vertices according to 
a particular distribution in a two-dimensional space with negative curvature and connecting two vertices 
within a certain distance. It is thus classified as a random geometric graph. It is known that a hyperbolic plane naturally induces power-law degree distribution \cite{hyperbolicgraph}. Also, a hyperbolic random graph satisfies a high clustering coefficient with high probability \cite{clustering}, which distinguishes itself from other power-law models such as Barab\'{a}si-Albert \cite{barabasialbert} and Chung-Lu random graphs. Parameters studied in this model include:
the number of $k$-cliques, clique number \cite{hyperbolicclique}, treewidth \cite{hyperbolictreewidth}, 
modularity \cite{hyperbolicmodularity}, and diameter \cite{hyperbolicdiameter}. \par
\subsection{Our Main Results and Contributions}
In this paper, we consider the number of maximal cliques in hyperbolic random graphs. We 
also consider the number on two-dimensional Euclidean random geometric graphs, which are the Euclidean counterpart
to hyperbolic random graphs. Euclidean random geometric graphs are also thought to be good representations of
some types of real-world graphs \cite{wireless, randomprotein}, although they do not possess power-law degree distribution.
Our findings are as follows:
\begin{thm} [Main 1]
    Let $r<1$ be a constant.
    Let $\mathcal{M}$ be the number of maximal cliques in a two-dimensional Euclidean random geometric graph whose connection distance is $r$.
    There exists positive constants $C_1$ and $C_2$ such that for all $\epsilon > 0$,
    \begin{align*}
        \Pr[\exp(C_1 |V|^{1/3})\leq \mathcal{M} \leq \exp(C_2 |V|^{1/3+\epsilon})] \rightarrow 1 \\
    \end{align*}
    as $|V| \rightarrow \infty$.
\end{thm}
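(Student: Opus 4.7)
Let $X$ denote the random set of $n$ points in $[0,1]^{2}$ and $G=G(X,r)$ the resulting unit-disk graph; ``whp'' means with probability $1-o(1)$. Both bounds rest on Jung's theorem: every clique of $G$ has diameter at most $r$, hence lies in some closed disk of radius $r/\sqrt{3}$. Covering $[0,1]^{2}$ by $L=O(1)$ disks $D_{1},\dots,D_{L}$ of radius $r$ chosen so that every disk of radius $r/\sqrt{3}$ is contained in some $D_{j}$, every maximal clique of $G$ is a maximal clique of some induced subgraph $G[X\cap D_{j}]$, and Chernoff gives $m_{j}:=|X\cap D_{j}|=\Theta(n)$ whp for each $j$. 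It thus suffices to control the maximal-clique count on $m=\Theta(n)$ uniform random points in a radius-$r$ window.

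\textbf{Upper bound.} The plan is to use the characterisation: a clique $C\subseteq X$ is maximal iff $R(C):=\bigcap_{v\in C}\overline{B}(v,r)$ meets $X$ exactly in $C$. The regions $R(C)$ are faces of the arrangement $\mathcal{A}$ formed by the $m$ radius-$r$ disks centred at the points of $X\cap D_{j}$, and the number of relevant faces (those whose containment pattern in $\mathcal{A}$ matches their intersection with $X$) dominates $\mathcal{M}_{j}$. Summing the Aronov--Sharir complexity bound $O(m\,k^{1/3})$ on the $k$-level of an arrangement of congruent disks over $k\le m$, together with an $\epsilon$-net / Chernoff argument exploiting the uniform distribution of $X$ in each level, yields $\mathcal{M}_{j}\le\exp(O(m^{1/3+\epsilon}))$ whp; multiplying by $L=O(1)$ gives the overall upper bound.

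\textbf{Lower bound.} I would exhibit whp a single ``gadget'' region $W\subset[0,1]^{2}$ of constant radius in which $X$ contains a core clique $B$ together with $k=\Theta(n^{1/3})$ competing pairs $(u_{i},v_{i})$ satisfying (a) both $u_{i}$ and $v_{i}$ are within distance $r$ of every vertex of $B$ and of every $u_{j},v_{j}$ with $j\ne i$, and (b) $|u_{i}v_{i}|>r$. Geometrically this ``$K_{2k}$ minus a perfect matching'' distance pattern is realised by placing the $2k$ points near the vertices of a regular $2k$-gon of circumradius $R$ slightly exceeding $r/2$: opposite pairs then sit at distance $2R>r$, while all non-opposite pairs sit at distance at most $2R\cos(\pi/(2k))\le r$ provided $R\le r/(2\cos(\pi/(2k)))$. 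The feasible $R$-window has width $\Theta(r/k^{2})$ and each angular pocket around a template vertex has area $\Theta((r/k^{2})\cdot(r/k))=\Theta(r^{2}/k^{3})$, so requiring the process of intensity $n$ to place at least one point in each of the $2k$ pockets pins the scale to $k=\Theta(n^{1/3})$ via $n\cdot r^{2}/k^{3}=\Theta(1)$. Once the configuration exists, the $2^{k}$ sets $B\cup\{x_{1},\dots,x_{k}\}$ with $x_{i}\in\{u_{i},v_{i}\}$ are distinct cliques and are maximal in $G$ --- the unchosen partner is excluded by (b), and $W$ can be picked isolated enough that external vertices cannot extend either --- giving $\mathcal{M}\ge 2^{k}=\exp(\Omega(n^{1/3}))$.

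\textbf{Main obstacle.} The principal difficulty lies in the upper bound: converting a worst-case arrangement-complexity bound $O(mk^{1/3})$ into a high-probability subexponential bound on \emph{maximal} cliques requires carefully coupling the $k$-level structure of $\mathcal{A}$ with the random placement of $X$, and it is the union bound over levels $k\le m$ combined with Chernoff concentration for point counts per cell that produces the slack $\epsilon>0$ in the exponent. The lower-bound gadget, once the regular-polygon template is identified, reduces to a direct first-moment calculation on the point process.
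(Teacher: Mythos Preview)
Your lower-bound gadget is essentially the paper's construction: both place points near the vertices of a regular $2k$-gon of circumradius just above $r/2$ so that antipodal pairs are at distance $>r$ while all other pairs are at distance $\le r$, producing an induced copy of the cocktail-party graph $O_k$ ($=K_{2k}$ minus a perfect matching), and both compute the pocket area as $\Theta(k^{-3})$ to pin $k=\Theta(n^{1/3})$. (Your core clique $B$ and the isolation requirement are unnecessary: it is a folklore fact, used in the paper, that \emph{any} graph containing an induced $O_t$ already has at least $2^t$ maximal cliques, irrespective of the rest of the graph.)

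The upper bound, however, has a genuine gap. The assertion that ``the regions $R(C)$ are faces of the arrangement $\mathcal{A}$'' is false: $R(C)=\bigcap_{v\in C}\overline{B}(v,r)$ is the intersection of a \emph{subset} of the disks, not a single cell, and the number of distinct such intersections can be exponential in the number of disks. Your own gadget already exhibits $2^k$ pairwise-distinct regions $R(C)$ arising from only $2k$ disks, whereas the full arrangement of those $2k$ disks has only $O(k^{2})$ faces. Hence no polynomial arrangement-complexity quantity (face count, $k$-level complexity, Aronov--Sharir, etc.) can by itself dominate $\mathcal{M}_j$, and the sentence ``summing the $O(mk^{1/3})$ bound over $k\le m$ \dots\ yields $\mathcal{M}_j\le\exp(O(m^{1/3+\epsilon}))$'' has no mechanism behind it: the sum is $O(m^{7/3})$, a polynomial, and nothing you wrote explains how an $\epsilon$-net or Chernoff step converts that into the correct subexponential.

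The paper's upper bound goes by a completely different route. It invokes the theorem of Farber--Hujter--Tuza: if $G$ contains no induced $O_{t+1}$ then $\mathcal{M}(G)\le(|V|/t)^{2t}$. The whole task therefore reduces to showing $\tau(G_{n,r})\le n^{1/3+\epsilon}$ whp, where $\tau$ is the largest $t$ with $O_t$ induced. This is proved geometrically: an induced $O_t$ corresponds to $t$ ``non-edge'' segments $\overline{v_iw_i}$ (each of length $>r$) whose four cross-distances to every other such segment are all $\le r$; one shows any two such segments must cross, defines the angle at their intersection, and by pigeonhole finds among the $t$ segments at least $\lceil t/k\rceil$ making angle $\le\pi/k$ with a fixed one $s$. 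A direct trigonometric computation then confines the endpoints of those segments to a region of area $O(k^{-3})$ about the midpoint of $s$, so their expected count is $O(n/k^{3})$; taking $k=n^{1/3}$ and applying Chernoff plus a union bound over the $O(n^{2})$ possible segments $s$ finishes. This pigeonhole-on-crossing-segments argument, together with the Farber--Hujter--Tuza reduction, is precisely the idea your plan is missing.
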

\begin{thm} [Main 2]
    Let $\gamma \in (2,3)$.
    Let $\mathcal{M}$ be the number of maximal cliques in a hyperbolic random graph 
    whose power-law degree exponent is $\gamma$.
    There exist positive constants $C_1$ and $C_2$ such that for all $\epsilon > 0$,
    \begin{align*}
        \Pr[\exp(C_1 |V|^{(3-\gamma)/6})\leq \mathcal{M} \leq \exp(C_2 |V|^{(3 - \gamma)/6+\epsilon})] \rightarrow 1 \\
    \end{align*}
    as $|V| \rightarrow \infty$.
\end{thm}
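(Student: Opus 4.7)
The strategy is to reduce to Main 1 by identifying a critical radial annulus at $r \approx R/2$ where the hyperbolic random graph locally behaves like a two-dimensional Euclidean random geometric graph on $\Theta(n^{(3-\gamma)/2})$ vertices; substituting this effective vertex count into the Euclidean bound $\exp(n^{1/3})$ yields exactly $\exp(n^{(3-\gamma)/6})$. Recall that with $\alpha=(\gamma-1)/2$, $R = 2\log(n/\nu)$, and density proportional to $\sinh(\alpha r)$, vertices at radii $r_1,r_2$ and angular gap $\theta$ are adjacent iff roughly $r_1+r_2+2\log(\theta/2)\leq R$. At $r=R/2$ the angular connection threshold is of constant order, and an angular sector of that width contains $\Theta(n^{(3-\gamma)/2})$ vertices in expectation.

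For the upper bound, I would decompose the disk into $O(\log n)$ radial bands and, at each radius $r$, into angular sectors of width matching the local connection threshold $e^{R/2-r}$. In each sector the expected vertex count is $n_r := n\,e^{\alpha(r-R)+R/2-r}$, which is maximized at $r = R/2$ with value $n^{(3-\gamma)/2}$. A conformal rescaling sends a sector to a bounded Euclidean patch, so applying Main 1 yields at most $\exp(n_r^{1/3+\epsilon})$ maximal cliques per sector with high probability. Summing over the polynomially many sectors gives $\exp(n^{(3-\gamma)/6+\epsilon})$. Maximal cliques that span multiple radial bands are handled by conditioning on the innermost (smallest-radius) member: the inner vertices act mainly as common neighbors, so a union bound over their choices costs only an extra $\exp(n^{\epsilon})$ factor.

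For the lower bound, I would isolate a single angular sector near $r = R/2$ containing $\Theta(n^{(3-\gamma)/2})$ vertices; after conformal rescaling, the induced subgraph is, up to constants, a Euclidean random geometric graph on that many points. The lower bound in Main 1 then produces $\exp(\Omega(n^{(3-\gamma)/6}))$ distinct maximal cliques inside this sector. To upgrade ``maximal in the sector'' to ``maximal in the full graph,'' I would select a sector whose angular position is not dominated by any core vertex---an event that holds with constant probability by a first-moment computation on core placements---thereby guaranteeing that no outside vertex can simultaneously extend the constructed cliques.

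The main obstacle will be the upper bound's treatment of maximal cliques that combine a few highly connected inner-core vertices with many critical-annulus vertices: one must show that conditioning on the core vertices reduces the remaining problem to a Euclidean clique-count at the correct scale, and that the union over possible core subsets inflates the count only by a $\exp(n^{\epsilon})$ factor. A secondary issue is that the conformal rescaling of a hyperbolic sector to a Euclidean patch only approximately preserves adjacency; quantifying this discrepancy and absorbing it into the $\epsilon$ slack will require a careful coupling between the rescaled hyperbolic model and the Euclidean model used in Main 1.
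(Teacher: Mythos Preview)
Your approach is genuinely different from the paper's and, as written, has a real gap in the upper bound.

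The paper never reduces to the Euclidean theorem. Instead it uses the structural fact that the number of maximal cliques is controlled by $\tau(G)$, the largest $t$ for which the octahedral graph $O_t$ appears as an induced subgraph: if $\tau(G)\le t$ then $\mathcal{M}\le (|V|/t)^{2t}$ (Farber--Hujter--Tuza), and if $\tau(G)\ge t$ then $\mathcal{M}\ge 2^t$. Both halves of Main~2 then reduce to estimating $\tau$ on the hyperbolic model. For the lower bound the paper places $2k$ annular sectors near radius $R/2$ so that opposite sectors are pairwise non-adjacent and all other pairs are adjacent, yielding an induced $O_t$ with $t=\Theta(n^{(1-\alpha)/3})=\Theta(n^{(3-\gamma)/6})$; since an induced $O_t$ already forces $2^t$ maximal cliques, there is no need to promote sector-maximal cliques to global ones. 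For the upper bound the paper interprets an $O_t$-copy as $t$ pairwise intersecting ``long'' segments, pigeonholes on the directed angles between them, and shows that many nearly-parallel independent segments would force too many vertices into a region of measure $O(\theta_0^3)$; the non-uniform density is handled by a dedicated case split (their Lemma on segments with an endpoint inside $B_0(R/2)$), not by any Euclidean coupling.

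The gap in your plan is that maximal-clique counts do not decompose over a partition of the vertex set, so ``apply Main~1 in each sector and sum'' is not a valid upper bound on $\mathcal{M}$. A single maximal clique of $G$ can contain several inner-core vertices together with annulus vertices spread over many angular sectors, and it need not be maximal---or even a clique---when restricted to any one sector. Your proposed fix, a union bound over the inner-core vertices, does not stay within budget: the expected number of vertices with $r\le R/2$ is $\Theta(n^{1-\alpha})=\Theta(n^{(3-\gamma)/2})$, so a union bound over their subsets costs $\exp\bigl(\Theta(n^{(3-\gamma)/2})\bigr)$, which already dwarfs the target $\exp\bigl(n^{(3-\gamma)/6+\epsilon}\bigr)$; and fixing only the single innermost vertex still does not localize the remaining clique to one sector. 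A secondary issue is that the ``conformal rescaling to a Euclidean patch'' is delicate because the connection radius $R$ is the diameter of the whole space rather than a local scale, so invoking Main~1 as a black box would require essentially the same geometric estimates the paper proves directly in hyperbolic coordinates.
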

For hyperbolic random graphs, we consider the case when $\gamma \in (2,3)$ for hyperbolic random graphs. 
In this case, the graphs are scale-free and have $\exp(\Omega(|V|^{(3-\gamma)/2}))$ cliques with high probability \cite{hyperbolicclique}.
In general graphs, the number of maximal cWe and the bound
we obtained is much smaller than that. \par
To prove the main theorem, we consider what is called an octahedral graph $O_t$. The definition of the graph is the following.
Let $tK_2=(V,E)$ where $V=\{1,2,...,2t\}$ and $E=\{(i,i+t): 1 \leq i \leq t\}$. Therefore, $tK_2$ is a graph
with $t$ pairwise disjoint edges. An octahedral graph $O_t$ is the complement of $tK_2$. We have the following
theorems from the previous study.
\begin{thm} [Forklore]
    \label{maximallower}
    If the graph has $O_t$ as a vertex-induced subgraph, then the number of maximal cliques
    is lower-bounded by $2^t$.
\end{thm}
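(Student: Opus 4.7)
The plan is to exhibit $2^t$ distinct cliques inside the induced copy of $O_t$ and show that any extensions of them to maximal cliques of the host graph must remain distinct.

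First I would describe the combinatorial structure of $O_t$. Label the $2t$ vertices so that $\{i, i+t\}$ is the $i$-th non-edge pair for $1 \le i \le t$, and all other pairs are edges. For every choice function $f\colon\{1,\dots,t\}\to\{0,1\}$, form the vertex set $C_f = \{i + f(i)\cdot t : 1 \le i \le t\}$, i.e.\ pick exactly one vertex from each non-edge pair. Any two vertices of $C_f$ come from different pairs, so by definition of $O_t$ they are adjacent. Thus $C_f$ is a clique of size $t$ in $O_t$, and we get $2^t$ such cliques, one per choice function.

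Next, since $O_t$ is a \emph{vertex-induced} subgraph of the host graph $G$, the non-edges of $O_t$ remain non-edges in $G$. In particular, for each $i$, the vertices $i$ and $i+t$ are non-adjacent in $G$. Each $C_f$ is still a clique in $G$, so it can be extended (greedily) to some maximal clique $M_f \supseteq C_f$ in $G$.

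The main point is to show the map $f \mapsto M_f$ is injective. Suppose $f \neq g$; then there is an index $i$ with $f(i) \neq g(i)$, so $C_f$ contains one endpoint of the non-edge $\{i, i+t\}$ while $C_g$ contains the other. If we had $M_f = M_g$, then this common maximal clique would contain both $i$ and $i+t$, contradicting the fact that they are non-adjacent in $G$. Hence the $M_f$ are pairwise distinct, giving at least $2^t$ maximal cliques.

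I do not expect a genuine obstacle: the argument is essentially an observation. The only step requiring care is the use of the \emph{induced} hypothesis, since without it we could not rule out $i$ being adjacent to $i+t$ in $G$, and then the distinctness argument would collapse.
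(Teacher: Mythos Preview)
Your argument is correct and is exactly the standard folklore proof. The paper does not give its own proof of this statement; it merely cites it as folklore, so there is nothing further to compare.
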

\begin{thm} [M. Farber, M. Hujter, and Z. Tuza \cite{Ot}]
    \label{maximalupper}
If $|V| \geq 4t$ and there exists 
no $O_{t+1}$ as a vertex-induced subgraph, then the number of maximal cliques is upper-bounded by $(|V|/t)^{2t}$.\par
\end{thm}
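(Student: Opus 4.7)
I would prove the contrapositive, working in the complement graph $\overline{G}$, where maximal cliques of $G$ become maximal independent sets of $\overline{G}$ and an induced $O_{t+1}$ in $G$ becomes an induced matching of size $t+1$. The statement thus reduces to: a graph $H$ on $n \geq 4t$ vertices whose induced matching number is at most $t$ has at most $(n/t)^{2t}$ maximal independent sets.

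I would proceed by induction on $t$. The base case $t=1$ asserts that $2K_2$-free graphs have $O(n^2)$ maximal independent sets, which is classical and matches the claimed bound $n^2$. For the inductive step, I would pick an edge $e = \{u,v\}$ of $H$ and split the maximal independent sets into three classes: those containing $u$, those containing $v$, and those containing neither. The key combinatorial observation is that $H[V \setminus N_H[\{u,v\}]]$ has induced matching number at most $t-1$, since any induced matching there extends to an induced matching of $H$ by adjoining $\{u,v\}$. For a maximal independent set $I$ containing $u$, the restriction $I \setminus \{u\}$ is a maximal independent set of $H[V \setminus N_H[u]]$, and a further branching on whether $I$ contains some neighbor of $v$ should bring the problem down to the vertex set $V \setminus N_H[\{u,v\}]$, where the induction hypothesis at parameter $t-1$ applies. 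The symmetric branching handles the class containing $v$.

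The main obstacle will be the third class, the maximal independent sets avoiding both $u$ and $v$, which must nevertheless contain some neighbor of $u$ and some neighbor of $v$; these witnesses are not canonically chosen, so a naive charging overcounts. My plan is to choose the edge $e$ adaptively, for example as an edge of a maximum induced matching of $H$, and then amortize over the $n/t$ candidate edges at each recursion step, so that $2t$ vertex choices each drawn from a pool of size at least $n/t$ emerge from the recursion of depth $t$ and produce the bound $(n/t)^{2t}$; the hypothesis $n \geq 4t$ presumably enters exactly here, guaranteeing that enough vertices remain at every recursive step for the averaging to go through. If the sharp constant proves elusive along this route, I would fall back on the weaker bound $n^{2t}$, which still suffices for the applications in the two main theorems, since those absorb an arbitrary $\epsilon$-slack in the exponent.
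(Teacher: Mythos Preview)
The paper does not prove Theorem~\ref{maximalupper}; it is quoted from the literature (Farber, Hujter, and Tuza~\cite{Ot}) and used as a black box, so there is no proof in the paper to compare your proposal against.

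As a standalone sketch, your complementation and induction-on-$t$ framework is the right setup and matches the spirit of the original Farber--Hujter--Tuza argument: pass to the complement so that the forbidden $O_{t+1}$ becomes a forbidden induced matching of size $t+1$, then branch on an edge to drop the induced-matching number by one. The part you flag as an obstacle---the maximal independent sets avoiding both endpoints of the chosen edge---is genuinely the delicate step, and your description of ``amortizing over the $n/t$ candidate edges'' is not yet a proof; you have not specified a charging that avoids overcounting, and the role of the hypothesis $|V|\ge 4t$ remains speculative. Your fallback bound $n^{2t}$ is, as you note, already enough for the applications in this paper, since both main theorems absorb any polylogarithmic factor into the $\epsilon$-slack: from $\tau(G)\le n^{1/3+\epsilon}$ one gets $\mathcal{M}\le n^{2\tau}=\exp(O(n^{1/3+\epsilon}\log n))$, which is still $\exp(O(n^{1/3+\epsilon'}))$.
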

Let $\tau(G)$ be the maximum $t$ such that a graph $G$ has $O_t$ as its vertex-induced subgraph.
With these theorems, all that remains is to bound $\tau$.
If $\tau$ is a constant, then the number of maximal cliques is polynomial. Unfortunately, this is not the case for the two random geometric graphs.
However, our bounds on $\tau$ are much smaller than the obvious $O(|V|)$ bound.

Intuitively, $\tau$ is small because
any pair of unconnected vertices have $2t-2$ common neighbors, which is against the triadic closure property, i.e. two vertices with common neighbors are likely to be connected. 
As $t$ gets larger, the more severe the violation becomes.
This explanation can be mathematically justified on Euclidean and hyperbolic random geometric graphs. The 
arguments on the two different random graphs are basically the same even though the definitions of distance are 
different, and the parallel postulate does not hold on a hyperbolic plane. \par
Our contributions are briefly summarized as follows.
Firstly, to the best of our knowledge, this is the first work that assesses the number of maximal cliques
on random geometric graphs. We shed light on the importance of $O_t$ and develop geometric and probabilistic techniques to determine its
size. Those techniques apply to both Euclidean and hyperbolic planes.
Secondly, what we have found is yet another result followed by $c$-closed graphs \cite{closedness}
supporting that the triadic closure property plays an essential role
in maximal cliques. Lastly, we give an upper bound of $\tau$ on real-world graph datasets.
It turns out that $\tau$ is often at most 5-20, even on networks with hundreds of thousands of vertices (See Table 1 at the end of this paper).
The upper bound of $\mathcal{M}$ given by $\tau$ is still far from the actual value.
However, it is still surprising how small $\tau$ is in practice.
\par
\subsection{Proof Sketch}
Here we explain how we bound $\tau$. For the lower bound, we construct regions so that vertices on them would form $O_t$. Here we use a technique called Poissonization introduced in \cite{penrose}. \par
To derive the upper bound, We define a set of segments $\mathcal{S}:=\{\overline{vw}: v,w\in V(G),$ the distance between $v$ and $w$ is greater than the connection distance$\}$. Since $\mathcal{S}$ is somewhat like the complement of the graph, a vertex-induced subgraph that is isomorphic to $O_t$ corresponds to $t$ segments that are pairwise disjoint (or we call ``independent" in this paper). We soon find out that any two independent segments must intersect. Therefore, we can define an ``angle" between them. By the pigeonhole principle, if $t$ is large enough, then there exists 
a set of segments with decent cardinality whose ``angles" are small (Lemma \ref{pigeon} and Lemma \ref{hyperpigeon}).
We also prove that if two intersecting segments have small ``angles", the geometry 
is very restricted. If we fix one segment, then endpoints of the other segments can only 
exist in small regions (Corollary \ref{regionU} and Corollary \ref{hyperregionU}).
By combining the two facts, we derive that $t$ cannot get too large because the number of vertices
in those small regions cannot be too large with high probability.\par
Unlike Euclidean random geometric graphs, vertices in hyperbolic random graphs are placed by non-uniform distribution. To deal with this, we classify the segments by the density around their endpoints. We treat the segments in dense regions slightly differently from those in sparse regions. However, the overall strategies of the proof are the same as Euclidean random geometric graphs.
\begin{figure}[h]
\centering
\begin{center}
    \includegraphics[width = 6cm]{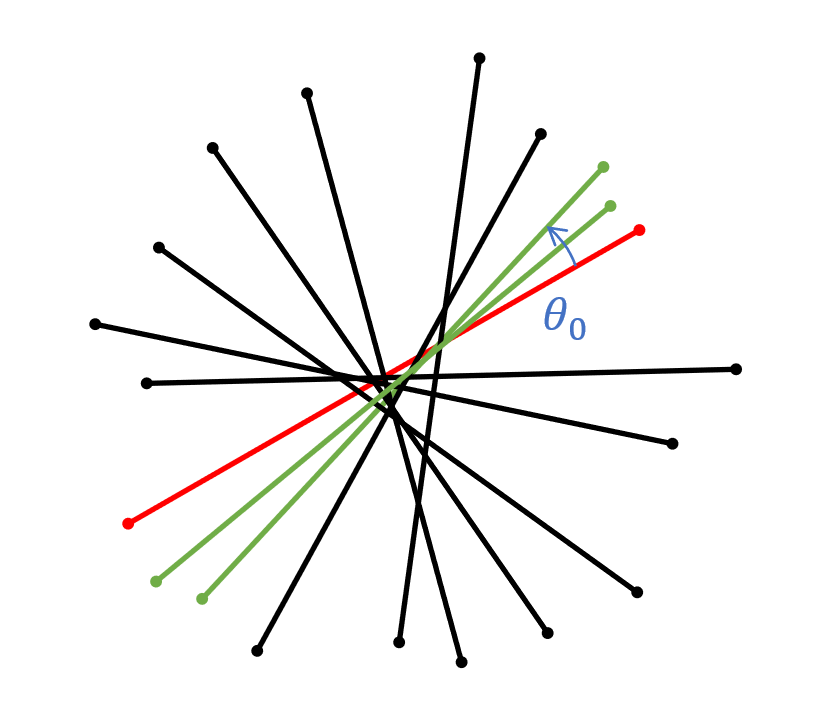}
    \includegraphics[width = 6cm]{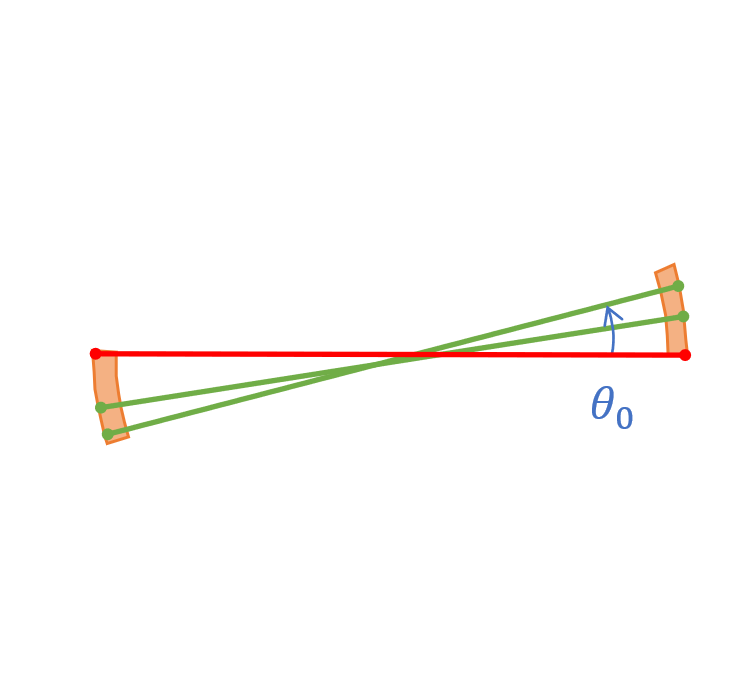}
\end{center}
\caption{Illustrations of the proof sketch. The left one illustrates $t$ independent segments, corresponding to $O_t$ and the claim of Lemma \ref{pigeon} and Lemma \ref{hyperpigeon}. The right one illustrations Corollary \ref{regionU} and Corollary \ref{hyperregionU}.}
\end{figure}

\subsection{Organization of the Paper}
In Section 2, we define Euclidean random geometric graphs and prove the main theorem.
In Section 3.1-3.3, we define hyperbolic random graphs and discuss how the proof of the random geometric graphs on 
a Euclidean plane can be extended to a hyperbolic plane. For the upper bound, we prove the weaker version of the main theorem. In Section 3.4, we discuss how to treat the non-uniformity of hyperbolic random graphs and prove the complete version of the main theorem. Finally, Section 4 is the conclusion.

\section{Euclidean Random Geometric Graphs}
\subsection{Definitions}
Let $n \in \mathbb{N}^+$, and $r \in (0,1)$. A two-dimensional Euclidean random geometric graph $G_{n,r}$ is obtained
as below:
\begin{itemize}
    \item The vertex set is $V =\{1,2,...,n\}$.
    \item The vertices are identically and independently distributed on $[0,1]^2$ according
    to a probability density function $f(x,y)=1$.
    \item The edge set $E$ is given by $\{(u,v):dist(u,v)\leq r\}$
\end{itemize}
Here, $dist(u,v)=\sqrt{(u_x-v_x)^2+(u_y - v_y)^2}$ where $(u_x,u_y)$ and $(v_x,v_y)$ are the $xy$-coordinates of $u$ and $v$ respectively. From here, we often identify a vertex with its position. \par
Given a region $U$ on $[0,1]^2$ (where we can perform integration), define $F(U):=\int_U f(x,y)dx dy$. $F(U)$ is equal to the probability that
a vertex lies on $U$. Note that we can assume no three vertices lie on a single line since the probability is 0. This avoids some corner case arguments.
\subsubsection{Poissonization}
This is introduced in \cite{penrose}.
Let $N_n$ be a Poisson random variable with mean $n$. Consider $G_{N_n,r}$ where the number of vertices is chosen by the Poisson distribution $N_n$. Then, on such a random graph, the number of vertices on a region $U$ follows a Poisson distribution with mean $n F(U)$. In particular, the probability that $U$ has no vertex on itself is $e^{-nF(U)}$. Moreover, for disjoint regions $U_1$ and $U_2$, the number of vertices on each region is an independent Poisson distribution with mean $n F(U_1)$ and $n F(U_2)$, respectively. Therefore, on Poissonized random geometric graphs, arguments of multiple disjoint regions are easier. Let $A$ be some property of a graph. Then
\begin{align*}
    \Pr[G_{N_n,r} \text{ has } A]&\geq \Pr[G_{N_n,r} \text{ has } A | N_{n}=n] \cdot \Pr[N_{n}=n]\\
    &= \Pr[G_{n,r} \text { has } A] \cdot \frac{e^{-n}n^n}{n!}
\end{align*}
Here, $e^{-n}n^n/n! = \Theta(n^{-1/2})$ by the Stirling's approximation. Thus, we get
\begin{align}
    \label{poisson}
    \Pr[G_{n,r} \text { has } A] \leq \Theta(n^{1/2}) \cdot \Pr[G_{N_n,r} \text { has } A]
\end{align}
Although our results are on the normal random geometric graph $G_{n,r}$, we use this technique to derive the lower bound of the number of maximal cliques.

\subsection{Lower Bound of the Number of Maximal Cliques}
Let $k \geq 4$ be an integer. Let $o'=(1/2,1/2)$. 
Consider taking a polar coordinate system whose origin is $o'$.
Let $\theta_0:=\pi/(3k)$.
For $1\leq i \leq 2k$, Define $U_i:=\{(r',\phi): r_1 \leq r' \leq r_2,\ 3(i-1)\theta_0 \leq \phi \leq (3(i-1)+1)\theta_0\}$ where
$r_1:=r/\sqrt{2+2 \cos(\theta_0/2)}$ and $r_2:=r/\sqrt{2+2 \cos (2\theta_0)}$. With some calculations, we can confirm the following.

\begin{figure}[H]
\centering
\begin{center}
    \includegraphics[width = 8cm]{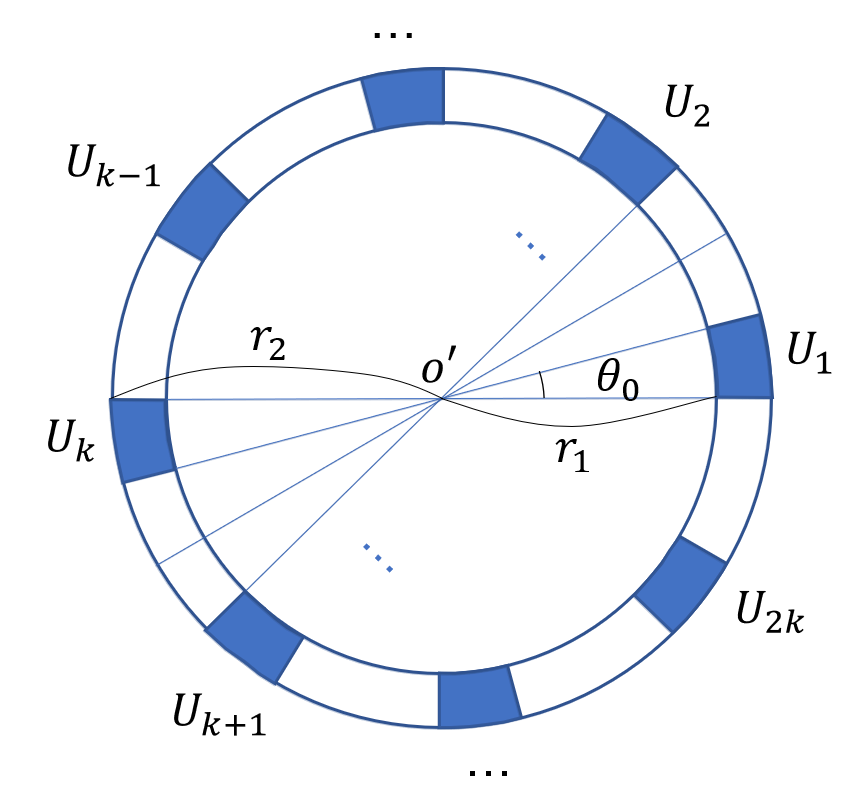}
\end{center}
\caption{Definition of $r_1$, $r_2$, $\theta_0$ and $U_i$ in the lower bound construction}
\end{figure}

\begin{prop}
    Let $1 \leq i \leq k$ and $1 \leq j \leq 2k$. For a vertex $v$ on $U_i$ and a vertex $w$ on $U_j$,
    \begin{align*}
        dist(v, w) &> r\ (j = i + k)\\
        dist(v, w) &\leq r\ (j \neq i + k)
    \end{align*}
\end{prop}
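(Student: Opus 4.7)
The plan is to reduce the proposition to a direct planar computation using the law of cosines. I would set up polar coordinates centered at $o' = (1/2, 1/2)$, so that any $x \in U_m$ has the form $x = o' + (\rho_x\cos\phi_x,\,\rho_x\sin\phi_x)$ with $\rho_x \in [r_1, r_2]$ and $\phi_x \in [3(m-1)\theta_0,\,(3m-2)\theta_0]$. Then
\[
  dist(v,w)^2 \;=\; \rho_v^2 + \rho_w^2 - 2\rho_v\rho_w\cos(\phi_w - \phi_v),
\]
and the whole statement reduces to optimizing this quantity over a rectangular parameter region.

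For the opposite case $j = i+k$, the key observation is that $3k\theta_0 = \pi$, so the angular difference $\phi_w - \phi_v$ lies in $[\pi - \theta_0,\,\pi + \theta_0]$. Writing $\cos(\phi_w-\phi_v) = -\cos\alpha$ with $\alpha \in [0,\theta_0]$, the distance squared becomes $\rho_v^2 + \rho_w^2 + 2\rho_v\rho_w\cos\alpha$, a strictly positive combination of terms that is increasing in each of $\rho_v, \rho_w$ (since $\cos\alpha > 0$) and decreasing in $\alpha$ (since $\rho_v\rho_w > 0$). So the minimum is attained at $\rho_v = \rho_w = r_1$ with $\alpha = \theta_0$, giving $dist(v,w)^2 \geq 4r_1^2\cos^2(\theta_0/2)$. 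Substituting the defining formula for $r_1$ then yields $dist(v,w) > r$.

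For the non-opposite case $j \neq i+k$ I would first reduce to the binding subcase $|i - j|\equiv k\pm 1 \pmod{2k}$: for any other value of $|i-j|$ the angular difference (normalized into $[0,\pi]$) is strictly farther from $\pi$, and by monotonicity of $\cos$ on $[0,\pi]$ the distance is strictly smaller, so it suffices to handle $|i-j| = k-1$. In that subcase $\phi_w - \phi_v$ lies in $[\pi - 4\theta_0,\,\pi - 2\theta_0]$; writing $\cos(\phi_w-\phi_v) = -\cos\beta$ with $\beta \in [2\theta_0,\,4\theta_0]$, the analogous monotonicity argument gives the maximum at $\rho_v = \rho_w = r_2$ and $\beta = 2\theta_0$, hence $dist(v,w)^2 \leq 4r_2^2\cos^2\theta_0$, which by the defining formula for $r_2$ is exactly $r^2$. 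The case $i = j$ (same region) is handled trivially since the annular sector has diameter well below $r$ once $k$ is moderately large.

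The proof is essentially bookkeeping: each optimization is a two-variable monotonicity check on $\rho_v^2 + \rho_w^2 \pm 2\rho_v\rho_w\cos(\cdot)$ over a rectangle, and the extremal configuration is read off by inspecting the signs of partial derivatives. The only real obstacle is organizing the case analysis over the values of $|i-j| \in \{1,\dots,2k-1\}\setminus\{k\}$ and being consistent about the $\pmod{2\pi}$ reduction of the angular difference; once that is done, the geometry collapses to the two endpoint calculations above.
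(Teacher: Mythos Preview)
Your approach is essentially the same as the paper's: both use the law of cosines centered at $o'$, then optimize $\rho_v^2+\rho_w^2-2\rho_v\rho_w\cos(\phi_w-\phi_v)$ by monotonicity to obtain the extremal values $2r_1^2(1+\cos\theta_0)=4r_1^2\cos^2(\theta_0/2)$ and $2r_2^2(1+\cos 2\theta_0)$, after which the defining formulas for $r_1,r_2$ finish it. Your explicit identification of the binding subcase $|i-j|\equiv k\pm 1$ and the separate treatment of $i=j$ just spell out a reduction the paper leaves implicit in its one-line ``otherwise'' step.
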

\begin{proof}
    Let $r_v$ and $r_w$ be radial coordinates of $v$ and $w$, respectively. If $j=i+k$, then
    \begin{align*}
        \{ dist(v, w)\}^2 &= r_v^2 + r_w^2-2r_v r_w \cos \angle wo'v\\
        &\geq r_v^2 + r_w^2-2r_v r_w \cos (\pi - \theta_0)\\
        &\geq 2 r_1^2 + 2 r_1^2\cos \theta_0\\
        & > r^2
    \end{align*}
    Otherwise, $\{ dist(v, w)\}^2 \leq 2 r_2^2 + 2 r_2^2\cos 2\theta_0 \leq r^2$
\end{proof}\par
For $1 \leq i \leq k$, let $t_i$ be 0-1 random variables which are equal to 1 if and only if both $U_i$ and $U_{i+k}$ have at least one vertex on themselves.
Let $t=\sum_{i=1}^k t_i$. Then, there exists $O_{t}$ as a vertex-induced subgraph. We are left to lower bound $t$.
\begin{prop}
    \label{F}
    $F(U_1)=\Omega(\theta_0^3)$ as $\theta_0 \rightarrow 0$
\end{prop}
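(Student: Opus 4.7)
The plan is straightforward since the density $f$ is uniformly $1$, so $F(U_1)$ is simply the Euclidean area of the annular sector $U_1$ (provided $U_1 \subseteq [0,1]^2$, which follows because $r < 1$ and $r_2 < r/\sqrt{2+2\cos(2\theta_0)}$ stays below $1/2$ for small $\theta_0$, while $o'$ is the center of the unit square). Working directly in polar coordinates centered at $o'$,
\begin{align*}
F(U_1) \;=\; \int_{0}^{\theta_0}\!\int_{r_1}^{r_2} r'\, dr'\, d\phi \;=\; \frac{\theta_0}{2}\bigl(r_2^2 - r_1^2\bigr).
\end{align*}
Hence it suffices to show $r_2^2 - r_1^2 = \Omega(\theta_0^2)$ as $\theta_0 \to 0$.

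Substituting the definitions,
\begin{align*}
r_2^2 - r_1^2 \;=\; r^2 \left(\frac{1}{2+2\cos(2\theta_0)} - \frac{1}{2+2\cos(\theta_0/2)}\right).
\end{align*}
I would now apply the Taylor expansion $2+2\cos\alpha = 4 - \alpha^2 + O(\alpha^4)$ to the two denominators, yielding $2+2\cos(2\theta_0) = 4 - 4\theta_0^2 + O(\theta_0^4)$ and $2+2\cos(\theta_0/2) = 4 - \theta_0^2/4 + O(\theta_0^4)$. Combining the two reciprocals over a common denominator, the numerator becomes $\bigl(4 - \theta_0^2/4\bigr) - \bigl(4 - 4\theta_0^2\bigr) + O(\theta_0^4) = (15/4)\theta_0^2 + O(\theta_0^4)$, while the denominator is $16 + O(\theta_0^2)$. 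Therefore $r_2^2 - r_1^2 = r^2 \cdot \tfrac{15}{64}\theta_0^2 + O(\theta_0^4)$, which is $\Theta(\theta_0^2)$.

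Plugging back in, $F(U_1) = \tfrac{\theta_0}{2}(r_2^2 - r_1^2) = \Theta(\theta_0^3)$, giving the desired $\Omega(\theta_0^3)$ bound. The computation is essentially a routine asymptotic expansion; the only thing worth double checking is that the two leading-order $\theta_0^2$ contributions from the two reciprocals do not cancel. They do not, because the two angular offsets $2\theta_0$ and $\theta_0/2$ differ by a constant factor other than $1$ (namely $4$), so the coefficients $4$ and $1/4$ in the expansions are unequal, guaranteeing a surviving $\theta_0^2$ term with positive sign. This is the only subtle point of the proof; everything else is direct calculation.
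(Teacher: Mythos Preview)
Your proof is correct and follows essentially the same approach as the paper: compute the area of the annular sector as $\tfrac{\theta_0}{2}(r_2^2-r_1^2)$ and use a Taylor expansion to extract the leading $\theta_0^2$ term in $r_2^2-r_1^2$, arriving at the same constant $\tfrac{15}{128}r^2\theta_0^3$. The only cosmetic difference is that the paper expands $\tfrac{1}{1+\cos\alpha}$ directly whereas you expand the denominators $2+2\cos\alpha$ and combine over a common denominator; this is the same computation.
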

\begin{proof}
    \begin{align*}
        F(U_1)&=\frac{1}{2}\left(r_2^2 \theta_0-r_1^2\theta_0\right)\\
        &=\frac{r^2\theta_0}{4}\left(\frac{1}{1+\cos(2 \theta_0)}
                        - \frac{1}{1+\cos(\theta_0/2)}\right)\\
        &=\frac{r^2\theta_0}{4} \left(\frac{1}{2}+\frac{\theta_0^2}{2}+O(\theta_0^4)
        -\left(\frac{1}{2}+\frac{\theta_0^2}{32}+O(\theta_0^4) \right) \right) \\
        &=\frac{15}{128} r^2 \theta_0^3 +O(\theta_0^5)\\
        &=\Omega(\theta_0^{3})
    \end{align*}
On the third line, we used the Taylor expansion.
\end{proof}
\begin{prop}
    \label{propcn}
    If $k= n^{1/3}$, then there exists a positive constant $C$ such that $\Pr[t \leq C n^{1/3}] \rightarrow 0$ as $n \rightarrow \infty$
\end{prop}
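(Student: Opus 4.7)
The plan is to pass to the Poissonized graph $G_{N_n,r}$, exploit the rotational symmetry of the construction and the independence of vertex counts on disjoint regions to reduce $t$ to a binomial, apply a Chernoff bound, and then de-Poissonize using (\ref{poisson}).

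First I would work in $G_{N_n,r}$. Because $U_1,\dots,U_{2k}$ are pairwise disjoint (they lie in the annulus $r_1\leq r'\leq r_2$ with angular coordinates in disjoint intervals) and because $r_2<1/2$ for $r<1$ and $\theta_0=\pi/(3n^{1/3})$ sufficiently small, each $U_i$ is contained in $[0,1]^2$. Under Poissonization, the counts $|V\cap U_i|$ are then mutually independent Poisson random variables with means $nF(U_i)$. By rotational symmetry about $o'$, all the $F(U_i)$ are equal to $F(U_1)$, and with $k=n^{1/3}$ one has $\theta_0=\Theta(n^{-1/3})$, so Proposition~\ref{F} yields $\lambda:=nF(U_1)=\Omega(1)$. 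Hence $p:=(1-e^{-\lambda})^2$ is bounded below by an absolute positive constant.

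Next, $\Pr[t_i=1]=(1-e^{-\lambda})^2=p$, and since each $t_i$ depends only on the two disjoint regions $U_i$ and $U_{i+k}$, the indicators $t_1,\dots,t_k$ are mutually independent under the Poisson measure. Therefore $t\sim\mathrm{Binomial}(k,p)$ with mean $pk=pn^{1/3}$, and a standard Chernoff bound gives
\[
\Pr_{\mathrm{Poiss}}\!\left[t\leq \tfrac{p}{2}\,n^{1/3}\right]\leq \exp\!\bigl(-\Omega(n^{1/3})\bigr).
\]

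Finally I would de-Poissonize. Taking $A$ to be the event $\{t\leq (p/2)\,n^{1/3}\}$ in (\ref{poisson}),
\[
\Pr\!\left[t\leq \tfrac{p}{2}\,n^{1/3}\right]\leq \Theta(n^{1/2})\cdot\exp\!\bigl(-\Omega(n^{1/3})\bigr)\longrightarrow 0,
\]
so setting $C=p/2$ proves the proposition. No step is really hard: the only delicate point is keeping the Poissonized model honest, i.e.\ checking that the $U_i$ are genuinely disjoint and inside $[0,1]^2$ so that the vertex counts are independent Poissons, and ensuring that the subexponential loss $\Theta(n^{1/2})$ from de-Poissonization is killed by the Chernoff bound $\exp(-\Omega(n^{1/3}))$. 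Both are straightforward with the parameter choice $k=n^{1/3}$.
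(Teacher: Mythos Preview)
Your proof is correct and follows essentially the same approach as the paper: Poissonize to make the $t_i$ independent Bernoulli variables, use Proposition~\ref{F} to get $\mathbb{E}[t]=\Omega(n^{1/3})$, apply a Chernoff bound, and de-Poissonize via (\ref{poisson}). The only cosmetic difference is notation (the paper writes $p=1-e^{-nF(U_1)}$ and $\mathbb{E}[t]=kp^2$, you fold the square into your $p$), and you are slightly more careful in noting that the $U_i$ lie in $[0,1]^2$; one small point is that your $C=p/2$ technically depends on $n$, but since $p$ is bounded below by a positive absolute constant this is harmless.
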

\begin{proof}
    Consider the Poissonization $G_{N_n, r}$. For each $i$, the number of vertices on $U_i$ follows an independent and identical Poisson distribution with mean $n F(U_1)$. Therefore, $t_1,...,t_k$ are independent and identical random variables. Let $p:=1-e^{-n F(U_1)}$. This is equal to the probability that $U_i$ has at least one vertex on itself. By the Proposition \ref{F}, $p=\Omega(1)$. Also, we have $\mathbb{E}[t]=kp^2=\Omega(n^{1/3})$. Since $t_1,...,t_n$ are independent, we can apply the Chernoff bound and obtain 
    $\Pr[t \leq kp^2/2]\leq \exp(-\Omega(n^{1/3}))$. Now we are back to the normal model $G_{n,r}$. Using (\ref{poisson}), it follows that $\Pr[t \leq kp^2/2] \leq \Omega(n^{1/2}) \cdot \exp(-\Omega(n^{1/3}))$. This goes to 0 as $n \rightarrow \infty$.
\end{proof}\par
By combining the proposition with the Theorem \ref{maximallower}, we obtain the main theorem regarding $\mathcal{M}(G_{n,r})$.
\begin{cor}
    There exists a positive constant $C$ such that $\Pr[\exp(Cn^{1/3}) \leq \mathcal{M}(G_{n,r})] \rightarrow 1$
    as $n \rightarrow \infty$.
\end{cor}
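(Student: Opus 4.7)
The plan is to combine the high-probability lower bound on $t$ from Proposition \ref{propcn} with the deterministic lower bound on $\mathcal{M}$ provided by Theorem \ref{maximallower}. The construction of the regions $U_1,\dots,U_{2k}$ was designed precisely so that picking one vertex from $U_i$ and one from $U_{i+k}$ yields a non-edge (by the first case of the preceding proposition), while all other pairs of vertices drawn from distinct regions are connected (by the second case). Consequently, if we select one representative vertex from each region $U_j$ for which $U_j$ is non-empty (among the $2t$ regions paired up by the indicators $t_i=1$), the induced subgraph on those $2t$ vertices is exactly the octahedral graph $O_t$.

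Concretely, I would set $k=\lceil n^{1/3}\rceil$ and invoke Proposition \ref{propcn} to produce a constant $C>0$ with $\Pr[t \geq C n^{1/3}] \to 1$. On the event $\{t \geq C n^{1/3}\}$ the previous paragraph gives a vertex-induced copy of $O_t$ in $G_{n,r}$, so Theorem \ref{maximallower} yields
\begin{equation*}
\mathcal{M}(G_{n,r}) \;\geq\; 2^{t} \;\geq\; 2^{C n^{1/3}} \;=\; \exp\!\bigl(C' n^{1/3}\bigr),
\end{equation*}
with $C' := C \ln 2 > 0$. Taking probabilities,
\begin{equation*}
\Pr\!\bigl[\exp(C' n^{1/3}) \leq \mathcal{M}(G_{n,r})\bigr] \;\geq\; \Pr\!\bigl[t \geq C n^{1/3}\bigr] \;\longrightarrow\; 1.
\end{equation*}

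Since every nontrivial step (the region construction, the distance estimates, the Poissonized Chernoff bound, and the octahedral lower bound) is already in place, there is no real obstacle. The only minor bookkeeping point is to make sure that when we extract vertex representatives from the non-empty $U_i$'s we are actually in the non-Poissonized model $G_{n,r}$; this is fine because Proposition \ref{propcn} is stated for $G_{n,r}$ after applying the transfer inequality \eqref{poisson}. Thus the corollary follows by a one-line application of Theorem \ref{maximallower} with $C$ in the statement equal to $C \ln 2$.
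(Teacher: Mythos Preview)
Your proposal is correct and follows exactly the paper's approach: the paper likewise derives the corollary as an immediate combination of Proposition~\ref{propcn} (the high-probability lower bound on $t$) with Theorem~\ref{maximallower} (the deterministic bound $\mathcal{M}\ge 2^t$ from an induced $O_t$). Your added remarks spelling out why the representatives from the paired regions induce $O_t$ and converting $2^{Cn^{1/3}}$ to $\exp(C'n^{1/3})$ with $C'=C\ln 2$ are the natural bookkeeping details the paper leaves implicit.
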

\subsection{Upper Bound of the Number of Maximal Cliques}
Let $\overline{vw}$ denote the segment between vertices $v$ and $w$ on a Euclidean plane.
Let $\mathcal{S}:=\{\overline{vw}:v,w\in V(G), dist(v,w)>r\}$. Note that $\mathcal{S}$ is 
like a complement of the random graph.
Two segments $\overline{v_1 v_2}$ and $\overline{w_1 w_2}$ in $\mathcal{S}$ are called
independent if
$\overline{v_1 w_1}$, $\overline{v_1 w_2}$, 
$\overline{v_2 w_1}$, and $\overline{v_2 w_2}$
are not in $\mathcal{S}$.
Two or more segments in $\mathcal{S}$ are
called independent if any pair of the segments are independent.
From the definition, it is obvious that the following proposition holds.
\begin{prop}
There is no $O_{t+1}$ as a vertex-induced subgraph 
$\Leftrightarrow$
There is no set of independent segments $S\subseteq \mathcal{S}$
whose cardinality is $t+1$. 
\end{prop}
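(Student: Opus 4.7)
The plan is to prove the equivalence by a direct translation of the combinatorial description of $O_{t+1}$ into the geometric language of segments in $\mathcal{S}$. The key observation is that $O_{t+1}$ is, by definition, the graph on $2(t+1)$ vertices whose non-edges form a perfect matching of size $t+1$, while the elements of $\mathcal{S}$ are exactly the non-edges of $G$ realised as line segments (pairs of vertices at distance greater than $r$). So the proposition should reduce to matching up the two descriptions pair by pair.

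First I would record the mild but essential observation that any two independent segments in $\mathcal{S}$ have disjoint endpoint sets: if $\overline{v_1 v_2}$ and $\overline{v_1 w_2}$ in $\mathcal{S}$ shared the endpoint $v_1$, then independence would require $\overline{v_1 w_2} \notin \mathcal{S}$, contradicting the assumption. Hence any family of $t+1$ pairwise independent segments actually involves $2(t+1)$ distinct vertices, which is what one needs to build an induced $O_{t+1}$.

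For the forward direction, I would suppose $G$ contains $O_{t+1}$ as a vertex-induced subgraph on $\{v_i, w_i : 1 \leq i \leq t+1\}$ with non-edges exactly the pairs $\{v_i, w_i\}$. Each such non-edge satisfies $dist(v_i, w_i) > r$, so $\overline{v_i w_i} \in \mathcal{S}$. For any $i \neq j$, the four cross pairs are edges of $O_{t+1}$, so at distance at most $r$, meaning the corresponding cross segments lie outside $\mathcal{S}$; the segments $\overline{v_i w_i}$ are therefore pairwise independent. Conversely, given $t+1$ independent segments $\overline{v_i w_i}$, endpoint-disjointness yields $2(t+1)$ distinct vertices; each segment being in $\mathcal{S}$ makes $\{v_i, w_i\}$ a non-edge of $G$, and independence forces every cross pair to be at distance at most $r$ and hence an edge of $G$. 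This is precisely $O_{t+1}$ as a vertex-induced subgraph.

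I do not expect any real obstacle here; the authors themselves flag the statement as obvious. The only point needing care is to rule out shared endpoints between independent segments, which the opening observation handles in one line.
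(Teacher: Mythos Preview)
Your proposal is correct and follows exactly the approach the paper has in mind: the authors simply state that the proposition is ``obvious from the definition'' and give no further argument, and your proof is nothing more than a careful unpacking of that definition. The one point you make explicit that the paper leaves tacit is the observation that independent segments cannot share an endpoint; this is indeed needed to guarantee $2(t+1)$ distinct vertices in the reverse direction, and your one-line argument handles it cleanly.
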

Therefore, we are left to bound such $t$ on $G_{n,r}$. With elementary geometry, we can confirm the following.
\begin{prop}
    \label{EI}
    On a Euclidean plane, two independent segments intersect. \\
\end{prop}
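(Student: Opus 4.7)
The plan is to argue by contradiction: assume $\overline{v_1 v_2}$ and $\overline{w_1 w_2}$ are independent yet disjoint. By definition of $\mathcal{S}$ and of independence we then have $dist(v_1,v_2), dist(w_1,w_2) > r$ while $dist(v_i, w_j) \leq r$ for every cross-pair $(i,j) \in \{1,2\}^2$. We may assume the four endpoints are in general position, since any three of them being collinear is a probability-$0$ event (as noted in the paper).

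The key move is to split by the combinatorial type of the convex hull of $\{v_1,v_2,w_1,w_2\}$. If the hull is a convex quadrilateral, then the two segments $\overline{v_1v_2}$ and $\overline{w_1w_2}$ are either its two diagonals --- in which case they cross, contradicting disjointness --- or a pair of opposite sides. In that latter subcase the diagonals of the quadrilateral are both among the four cross segments $\overline{v_i w_j}$, so each diagonal has length at most $r$. The standard fact that in any convex quadrilateral the sum of the diagonals strictly exceeds the sum of either pair of opposite sides (apply the triangle inequality at the crossing point of the diagonals) then forces $dist(v_1,v_2) + dist(w_1,w_2) < 2r$, contradicting $dist(v_1,v_2), dist(w_1,w_2) > r$.

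If instead the convex hull is a triangle, exactly one of the four points lies strictly inside the triangle spanned by the other three. In the representative subcase where $w_1$ lies inside $\triangle v_1 v_2 w_2$, I would write $w_1 = \alpha v_1 + \beta v_2 + \gamma w_2$ with $\alpha,\beta,\gamma > 0$ and $\alpha+\beta+\gamma=1$; then $w_1 - w_2 = \alpha(v_1 - w_2) + \beta(v_2 - w_2)$, giving
$$dist(w_1,w_2) \leq \alpha\, dist(v_1,w_2) + \beta\, dist(v_2,w_2) \leq (\alpha+\beta)\,r < r,$$
which contradicts $dist(w_1,w_2) > r$. The symmetric subcase where some $v_i$ sits inside the triangle of the other three contradicts $dist(v_1,v_2) > r$ via exactly the same convex-combination estimate.

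The only genuinely subtle point is pinning down the case analysis coming from the shape of the convex hull and checking that the two segments really must be opposite sides whenever they fail to cross in the quadrilateral case; once those sub-configurations are separated, each one collapses to a single triangle-inequality or convex-combination line, so I do not anticipate any further obstacle.
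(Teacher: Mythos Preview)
Your argument is correct. The convex-hull dichotomy is exhaustive for four points in general position, and in each subcase the contradiction you derive is valid: in the opposite-sides subcase the strict triangle inequality at the diagonal crossing gives $dist(v_1,v_2)+dist(w_1,w_2) < dist(v_1,w_j)+dist(v_2,w_{3-j}) \le 2r$, clashing with both long edges exceeding $r$; in the triangle subcase your barycentric estimate is clean and the strict inequality $\alpha+\beta<1$ is guaranteed by general position.

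Your route, however, is genuinely different from the paper's. The paper defers this proposition and proves it simultaneously with its hyperbolic analogue (Proposition~\ref{HI}): it shows that if $w_1,w_2$ lay on the same side of the line through $v_1,v_2$, then one can place auxiliary points $p,q$ on $\overline{v_1v_2}$ at distance exactly $R$ (resp.\ $r$) from each other, use angle comparison (Propositions~\ref{euclid1}--\ref{euclid2}) to show each $w_i$ is within distance $R$ of both $p$ and $q$, and then invoke a lemma from~\cite{hyperbolicclique} to conclude $dist(w_1,w_2)\le R$. The point of that detour is that none of the steps use the parallel postulate, so a single argument covers both geometries. Your proof is shorter and more transparent for the Euclidean case, but it leans on the affine structure (writing $w_1$ as a convex combination, summing vectors), which does not transfer to the hyperbolic setting; you would need a separate argument there.
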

We defer the proof to the next section. Thanks to the proposition, we can define an angle between two independent segments thanks to the proposition.

Let $s:=\overline{v_1 v_2}$ and $s':=\overline{w_1 w_2}$ be independent segments. Suppose that the counter-clockwise order of
four endpoints is $v_1w_1w_2v_2$. Let $q$ be the intersection of two segments.
Define a directed angle $\angle(s,s'):=\angle w_1 q v_1$. For convinience, define $\angle(s,s):=0$.
It holds that $\angle(s,s')=\pi - \angle(s',s)$. Therefore, the order of the two segments matters.\par
From here, we consider conditions that two independent segments mWe angle between them is known.
Again, $s=\overline{v_1 v_2},s'=\overline{w_1 w_2} \in \mathcal{S}$ be independent segments. Suppose $v_1w_1v_2w_2$ is the counter-clockwise order of four endpoints.
    Let $m$ be the midpoint of $s$.
    Consider taking a polar coordinate system whose origin and polar axis are $m$ and $\overrightarrow{mv_1}$.
    Let $(r_0,\phi)$ be the polar coordinate of $w_1$. Note that $r_0=dist(m,w_1)$ and $\phi=\angle w_1 m v_1$ (refer to Figure \ref{definitionr}). The next proposition states that if the directed angle $\angle(s,s')$ is small, then $r_0$ is bounded tight.

\begin{prop}
    \label{restrictedr}
    Let $0 \leq \theta_0 \leq \pi/3$. If $\angle(s,s') \leq \theta_0$, then $r_1 \leq r_0 \leq r_2$ where
    \begin{align*}
        r_1&:=(-1/2+\cos \theta_0) \sqrt{\cos \theta_0} \cdot r\\
        r_2&:=(3/2-\cos \theta_0) r
    \end{align*}
\end{prop}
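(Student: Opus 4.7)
My plan is to set up coordinates with $m$ at the origin, $v_1 = (a/2, 0)$, and $v_2 = (-a/2, 0)$, where $a := dist(v_1, v_2) > r$. The counter-clockwise ordering $v_1 w_1 v_2 w_2$ forces $w_{1,y} > 0 > w_{2,y}$. Since $dist(v_i, w_j) \leq r$ for all $i, j \in \{1, 2\}$, both $w_1$ and $w_2$ lie in the convex lens $L := D(v_1, r) \cap D(v_2, r)$, whose horizontal extent is exactly $[a/2 - r,\, r - a/2]$. By definition, $\beta := \angle(s, s')$ is the angle that the oriented segment from $w_2$ to $w_1$ makes with the positive $x$-axis, so $w_{1,x} - w_{2,x} = |w_1 w_2| \cos\beta > r \cos\beta$.

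The key preliminary step extracts a one-sided bound on $w_{1,x}$ alone. Combining the inequality above with the lens bounds $w_{2,x} \geq a/2 - r$ and $w_{1,x} \leq r - a/2$ yields (i) $a < r(2 - \cos\beta)$ and (ii) $w_{1,x} > a/2 - r(1 - \cos\beta)$; for $\beta \leq \theta_0 \leq \pi/3$ and $a > r$, this lower bound is strictly positive, so $w_1$ lives in the right half of $L$. For the upper bound on $r_0$ I invoke the disk inequality at $v_2$, namely $(w_{1,x} + a/2)^2 + w_{1,y}^2 \leq r^2$, which rearranges to $r_0^2 \leq r^2 - a^2/4 - a\, w_{1,x}$. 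Substituting (ii) produces a function of $a$ that is decreasing on $(r, r(2-\cos\beta)]$, so its supremum is attained as $a \to r^+$ and equals $r^2(5/4 - \cos\beta) \leq r^2(5/4 - \cos\theta_0)$; the elementary identity $(3/2 - \cos\theta_0)^2 - (5/4 - \cos\theta_0) = (1 - \cos\theta_0)^2 \geq 0$ then upgrades this to $r_0 \leq (3/2 - \cos\theta_0) r = r_2$.

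For the lower bound I use only the trivial estimate $r_0 \geq w_{1,x} > a/2 - r(1 - \cos\beta)$. This expression is increasing in $a$, so its infimum over $a > r$ is $r(\cos\beta - 1/2) \geq r(\cos\theta_0 - 1/2)$; since $\sqrt{\cos\theta_0} \leq 1$, this is at least $r(\cos\theta_0 - 1/2)\sqrt{\cos\theta_0} = r_1$. The main obstacle I anticipate is that $\angle(s, s')$ is defined via the auxiliary intersection point $q$ and the partner endpoint $w_2$, so it is not immediately a constraint on $w_1$ in isolation. The key geometric insight resolving this is that $L$ has horizontal width only $2(r - a/2) < r$, so a chord of $L$ of length exceeding $r$ that makes a small angle with the $x$-axis must nearly span $L$ horizontally, pinning $w_{1,x}$ close to the right tip $r - a/2$; once this is made quantitative, both bounds fall out of elementary algebra.
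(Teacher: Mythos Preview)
Your proof is correct and takes a genuinely different route from the paper. The paper works entirely through the intersection point $q$: it introduces $c=dist(v_1,q)$ and $d=dist(w_1,q)$, applies the law of cosines in the triangles $v_1qw_2$ and $v_2qw_1$ to obtain two-sided bounds on $c-d$, and then expresses $r_0^2=(a/2-c)^2+d^2+2(a/2-c)d\cos\theta$ via the law of cosines in $\triangle mqw_1$ and squeezes it between $(a/2-c+d)^2$ and $(a/2-c+d)^2\cos\theta$. You instead set up Cartesian coordinates, observe that both $w_1,w_2$ lie in the lens $L=D(v_1,r)\cap D(v_2,r)$ of horizontal extent $2(r-a/2)$, and use the direction of $\overrightarrow{w_2w_1}$ to pin down $w_{1,x}$; the bounds on $r_0$ then come from $r_0\geq w_{1,x}$ and from the disk inequality at $v_2$.

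Your argument is more elementary (no law of cosines, only Pythagoras and the disk constraints) and in fact yields strictly sharper intermediate bounds, namely $r_0^2\leq r^2(5/4-\cos\theta_0)$ and $r_0\geq r(\cos\theta_0-1/2)$, which you then relax to the stated $r_1,r_2$. The paper's trigonometric framework, on the other hand, buys portability: the identical pattern of manipulations carries over verbatim to the hyperbolic law of cosines and is exactly what the paper does in its hyperbolic analogue, whereas your Cartesian lens picture has no direct hyperbolic counterpart.
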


\begin{proof}
    Let $\theta := \angle(s,s')$.
    Let $q$ be the intersection of $s$ and $s'$.
    Let $a := dist(v_1,v_2), b:= dist(w_1,w_2), c = dist(v_1,q), d = dist(w_1,q)$.
    Note that $a$ is the length of $s$ and $a > r$ holds. The same thing stands for $b$.
    From the definition of independence of segments,
    \begin{align*}
        dist(v_1, w_2)&\leq r 
        \Rightarrow c^2+(b-d)^2-2c(b-d) \cos(\pi -\theta) \leq r^2\\
        dist(v_2, w_1)&\leq r
        \Rightarrow (a-c)^2+d^2-2(a-c)d \cos(\pi -\theta) \leq r^2
    \end{align*}
    must hold. From the first inequality, we get
    \begin{align*}
        &c^2+(b-d)^2 - 2c(b-d) \cos(\pi-\theta)\leq r^2\\
        &\Leftrightarrow (c + (b-d) \cos(\theta))^2+((b-d) \sin(\theta))^2\leq r^2\\
        &\Rightarrow c +(b-d) \cos(\theta) \leq r\\
        &\Rightarrow c - d\leq r- b \cos(\theta)
    \end{align*}
    With the same argument, the second inequality yields $- (r - a \cos(\theta)) \leq c - d$.
    Regardless of how $v_1, m, q, v_2$ are lined up on $s$, we have $r_0^2 = (a/2-c)^2+d^2+2(a/2-c)d \cos(\theta)$.
    We can upper and lower bound $r_0$ as below.
    \begin{align*}
        r_0^2 &=(a/2-c)^2+d^2+2(a/2-c)d \cos(\theta) \\
        &\leq (a/2-c+d)^2\\
        &\leq (a/2+r-a\cos(\theta))^2\\
        &\leq r^2(3/2 - \cos(\theta_0))^2
    \end{align*}
    \begin{align*}
        r_0^2 &= (a/2-c)^2+d^2+2(a/2-c)d \cos(\theta) \\
        &\geq (a/2-c+d)^2 \cos(\theta)\\
        &\geq (a/2-r+b \cos(\theta))^2\cos(\theta)\\
        &\geq r^2(-1/2 + \cos(\theta_0))^2\cos(\theta_0)
    \end{align*}
\end{proof}\\
\begin{figure}[H]
\centering
\begin{center}
    \includegraphics[width = 12cm]{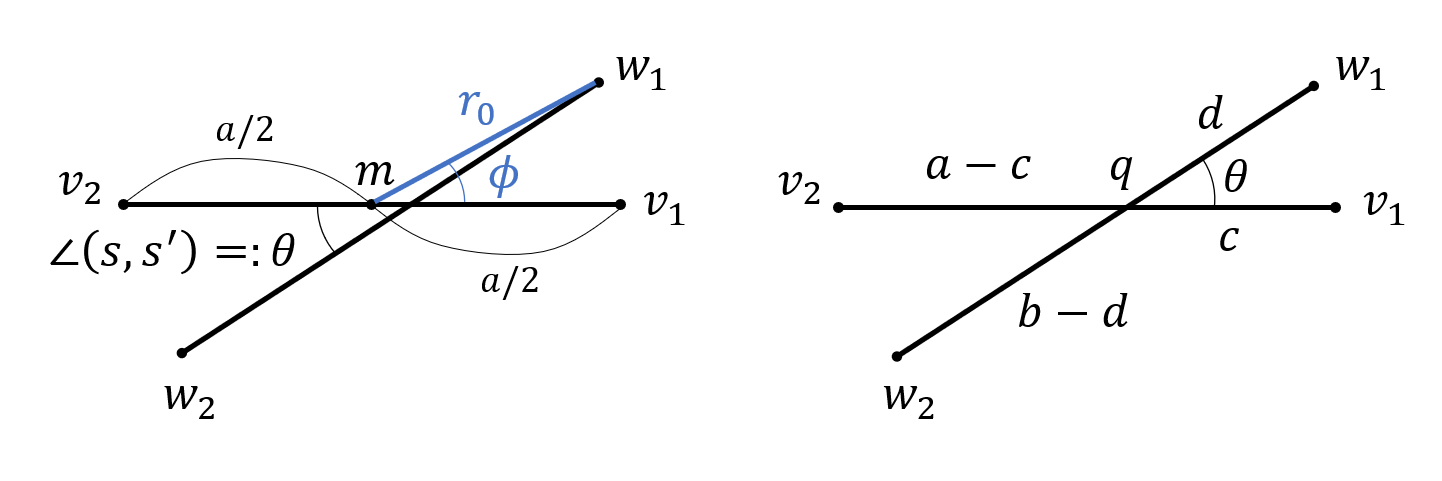}
\end{center}
\caption{Definitions of points, length, and angles of two independent segments in the Proposition \ref{restrictedr}}
\label{definitionr}
\end{figure}

If $v_1,q,m,v_2$ are lined up in this order, we can bound $\phi$ as $0 \leq \phi \leq \theta \leq \theta_0$. By considering the other case ($v_1,m,q,v_2$ is lined up in this order)
in the same way, we have the following.
\begin{cor}
    \label{regionU}
    Let $0 \leq \theta_0 \leq \pi/3$. If $\angle(s,s') \leq \theta_0$, 
    then either $w_1$ or $w_2$ must lie on a region $U_1$ or a region $U_2$ where
    \begin{align*}
        U_1&:=\{(r_0,\phi): r_1 \leq r_0 \leq r_2, 0 \leq \phi \leq \theta_0 \}\\
        U_2&:=\{(r_0,\phi): r_1 \leq r_0 \leq r_2, \pi \leq \phi \leq \pi+\theta_0 \}\\
    \end{align*}
\end{cor}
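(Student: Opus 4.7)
The plan is to combine the radial bound already supplied by Proposition \ref{restrictedr} with an elementary angular estimate, splitting into cases by which side of the midpoint $m$ the intersection point $q$ of $s$ and $s'$ lies on. Proposition \ref{restrictedr} gives $r_1 \leq dist(m, w_1) \leq r_2$, and by the symmetry of the hypothesis (independence of $s$ and $s'$ is symmetric under the simultaneous swap $v_1 \leftrightarrow v_2$, $w_1 \leftrightarrow w_2$, and the CCW cyclic order $v_1 w_1 v_2 w_2$ is preserved by this swap), the same radial bound holds for $dist(m, w_2)$. So only the angular coordinate remains to be controlled.

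In the first case, where the order along $s$ is $v_1, q, m, v_2$, I would place $m$ at the origin and $v_1$ on the positive $x$-axis, so that $q = (q_x, 0)$ with $q_x \geq 0$, while the CCW ordering $v_1 w_1 v_2 w_2$ forces $w_1$ into the upper half-plane. Writing $w_1 = q + d(\cos\theta, \sin\theta)$ with $\theta = \angle(s, s') \leq \theta_0$ and $d = dist(q, w_1) > 0$, a direct comparison of $\tan\phi = d\sin\theta/(q_x + d\cos\theta)$ with $\tan\theta = \sin\theta/\cos\theta$ reduces to the inequality $q_x \geq 0$; since $\theta_0 \leq \pi/3$ places both $\phi$ and $\theta$ in $[0,\pi/2]$, monotonicity of $\tan$ yields $0 \leq \phi \leq \theta \leq \theta_0$. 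Together with the radial bound this places $w_1$ in $U_1$.

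In the second case, where the order is $v_1, m, q, v_2$, I would simply relabel $(v_1, w_1) \leftrightarrow (v_2, w_2)$. This swap preserves independence and the CCW cyclic order, so the first case applies to the relabeled pair. Its conclusion $0 \leq \phi' \leq \theta_0$ is expressed in a new polar frame whose polar axis $\overrightarrow{m v_2}$ points in direction $\pi$ of the original frame, so it translates back to $\pi \leq \phi \leq \pi + \theta_0$ for the original $w_2$. Combined with the radial bound on $dist(m, w_2)$ noted above, this puts $w_2$ in $U_2$. The main obstacle is making the relabeling argument precise enough that both the radial and angular bounds transport cleanly back to the original coordinate system; once that bookkeeping is handled, everything else is straightforward plane trigonometry in the spirit of the proof of Proposition \ref{restrictedr}.
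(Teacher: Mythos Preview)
Your proposal is correct and follows essentially the same route as the paper: the paper's proof is a two-line sketch that splits on the order of $v_1,q,m,v_2$ along $s$, asserts $0\le\phi\le\theta\le\theta_0$ in the first case, and appeals to ``the other case in the same way'' for the second. Your explicit tangent comparison and the relabeling $(v_1,w_1)\leftrightarrow(v_2,w_2)$ are exactly the details behind those two sentences; the only small point left implicit is that $q_x+d\cos\theta>0$ (from $q_x\ge 0$ and $\theta<\pi/2$) is what guarantees $\phi\in[0,\pi/2)$ so that the $\tan$ comparison is valid.
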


\begin{figure}[H]
\centering
\begin{center}
    \includegraphics[width = 8.5cm]{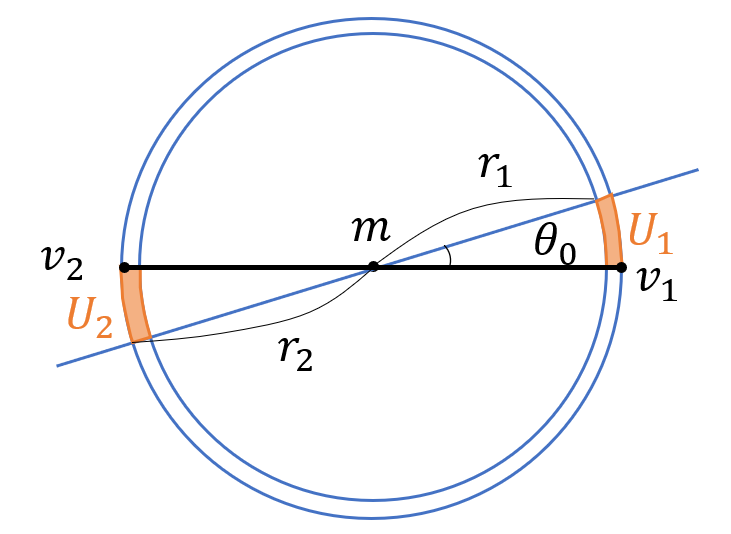}
\end{center}
\caption{$U_1$ and $U_2$ in the Corollary \ref{regionU}}
\end{figure}

Then we obtain the next lemma: if the directed angle is bounded small, the expected number of independent segments is also small.
\begin{lem}
    \label{U}
    Let $s \in \mathcal{S}$ be a segment.
    For a constant $0 \leq \theta_0 \leq \pi/3$,
    let $X$ be a number of segments $s'$ where $s$ and $s'$ are independent and
    $\angle(s,s')\leq \theta_0$. Then $\mathbb{E}[X] = n O(\theta_0^3)$ as $\theta_0 \rightarrow 0$.
\end{lem}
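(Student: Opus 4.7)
The plan is to combine the geometric restriction in Corollary \ref{regionU} with a direct area computation and linearity of expectation. Corollary \ref{regionU} already forces any $s'$ counted by $X$ to have an endpoint in the thin annular region $U_1 \cup U_2$; the proof therefore reduces to estimating $F(U_1 \cup U_2)$ and then converting this per-vertex probability into an expected segment count.

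First I would compute the area of $U_1$. It is an annular sector of angular width $\theta_0$ and radial extent $[r_1, r_2]$, so its area is exactly $\tfrac{1}{2}(r_2^2 - r_1^2)\theta_0$. Using the expansions $\cos\theta_0 = 1 - \theta_0^2/2 + O(\theta_0^4)$ and $\sqrt{\cos\theta_0} = 1 - \theta_0^2/4 + O(\theta_0^4)$, both $r_1 = (-1/2 + \cos\theta_0)\sqrt{\cos\theta_0}\,r$ and $r_2 = (3/2 - \cos\theta_0)\,r$ share the common leading value $r/2$, with corrections of order $r\theta_0^2$. Hence $r_1 + r_2 = r + O(\theta_0^2)$ while $r_2 - r_1 = O(\theta_0^2)$, giving $F(U_1) = O(\theta_0^3)$ and similarly $F(U_1 \cup U_2) = O(\theta_0^3)$. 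Then, by linearity of expectation, the expected number of vertices of $G_{n,r}$ that land in $U_1 \cup U_2$ is $n \cdot F(U_1 \cup U_2) = n\cdot O(\theta_0^3)$; by Corollary \ref{regionU}, each segment counted by $X$ contributes such a vertex, so this quantity controls $\mathbb{E}[X]$ up to the enumeration of the partner endpoint, yielding the claimed bound.

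The main obstacle is the area estimate: read naively, $\tfrac{1}{2}(r_2^2 - r_1^2)\theta_0$ could look like $O(\theta_0)$, and only the observation that $r_1$ and $r_2$ have the common limit $r/2$ as $\theta_0 \to 0$ lifts the area to the sharp $O(\theta_0^3)$ on which the entire lemma rests. This Taylor-cancellation step is routine in principle but is the one place where a sign error or a missed higher-order term would destroy the exponent; once it is in hand, everything else is bookkeeping via linearity of expectation and the geometric reduction already supplied by Corollary \ref{regionU}.
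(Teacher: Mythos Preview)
Your proposal is correct and follows the paper's proof essentially verbatim: invoke Corollary~\ref{regionU} to trap an endpoint of each counted $s'$ in $U_1\cup U_2$, Taylor-expand $r_2^2-r_1^2$ to see $F(U_1)=O(\theta_0^3)$, and conclude $\mathbb{E}[X]\le nF(U_1\cup U_2)=nO(\theta_0^3)$. Your hedge ``up to the enumeration of the partner endpoint'' is unnecessary and slightly muddies the argument---the paper simply asserts that $X$ is bounded by the number of vertices lying in $U_1\cup U_2$, so the expectation bound follows directly without any further accounting for the second endpoint.
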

\begin{proof}
    By the Corollary \ref{regionU}, $X$ is at most the number of vertices in $U:=U_1 \cup U_2$. Therefore, $\mathbb{E}[X]\leq n F(U)$.
    We are left to prove $F(U_1)=F(U_2)=O(\theta_0^3)$ as $\theta_0 \rightarrow 0$, and
    \begin{align*}
        F(U_1) &\leq \frac{1}{2} r_2^2 \theta_0 - \frac{1}{2} r_1^2 \theta_0 \\
        &=\frac{1}{2}r^2 \theta_0 \left( (3/2 - \cos(\theta_0))^2
        - (-1/2+\cos(\theta_0))^2 \cos(\theta_0) \right)\\
        &=\frac{1}{2}r^2 \theta_0 \left( (2-2 \cos \theta_0) +
        (-1/2+\cos \theta_0)^2(1-\cos \theta_0)\right)\\
        &= \frac{9}{16}r^2 \theta_0^3 + O(\theta_0^5)
    \end{align*}
    To derive the fourth line, we use the Taylor expansion.
\end{proof} \par

To fully use the Lemma \ref{U}, we prove another lemma, which states that if there exists large $O_t$, we can always take a set of segments so that its size is unneglectable, and the directed angles between them are small.
\begin{lem}
    \label{pigeon}
    Let $k$ be a positive integer.
    If there exists a set of $t$ independent segments $S\subseteq \mathcal{S}$, then there exists a segment $s'$ 
    and a set of independent segments $S' \subseteq \mathcal{S}$ such that
    \begin{align}
        \label{three}
        \left\{
        \begin{array}{l}
        |S'| \geq \left \lceil t/k \right \rceil\\
        s' \in S'\\
        \forall s''\in S'. \angle(s', s'') \leq \pi/k\\
        \end{array}
        \right.
    \end{align}
\end{lem}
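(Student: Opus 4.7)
The plan is to apply the pigeonhole principle to the line directions of the $t$ independent segments. To each segment $s \in \mathcal{S}$ I would associate its line direction $\theta(s) \in \mathbb{R}/\pi\mathbb{Z}$, namely the angle of the supporting line of $s$ with a fixed axis, taken modulo $\pi$ (so the segment's orientation, not direction, is recorded).

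First, I would partition the circle $\mathbb{R}/\pi\mathbb{Z}$ (of total length $\pi$) into $k$ consecutive half-open arcs of length $\pi/k$ each. Since $|S|=t$, pigeonhole guarantees some arc $I$ contains $\theta(s)$ for at least $\lceil t/k \rceil$ of the segments $s \in S$. I take $S'$ to be this subcollection and fix any $s' \in S'$. Because $S' \subseteq S$, the independence of the segments in $S'$ is inherited for free, and $S' \subseteq \mathcal{S}$. This establishes the first two bullets of (\ref{three}) with $|S'| \geq \lceil t/k \rceil$ and $s' \in S'$.

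The substantive step is to verify $\angle(s', s'') \leq \pi/k$ for every $s'' \in S'$. The case $s''=s'$ is immediate from the convention $\angle(s,s):=0$. For $s''\neq s'$, Proposition \ref{EI} supplies an intersection point $q$, and the four rays emanating from $q$ produce two pairs of vertical angles with measures $\alpha$ and $\pi-\alpha$, where $\alpha \in (0,\pi/2]$ is the unsigned angle between the two supporting lines. Since $\theta(s')$ and $\theta(s'')$ both lie in the common arc $I$ of length $\pi/k$, we obtain $\alpha \leq \pi/k$. By the definition $\angle(s',s''):=\angle w_1 q v_1$ under the alternating CCW endpoint convention $v_1 w_1 v_2 w_2$, the directed angle equals either $\alpha$ or $\pi-\alpha$ depending on which endpoint of $s''$ plays the role of $w_1$; swapping $w_1 \leftrightarrow w_2$ preserves the alternating CCW pattern (it simply reverses the traversal while keeping the labels alternating) and flips between the two vertical angles, so I select the smaller value $\alpha$, giving $\angle(s',s'') \leq \pi/k$.

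The hard part, such as it is, is the bookkeeping with labels: one has to check that the swap $w_1 \leftrightarrow w_2$ is compatible with the convention used to define $\angle(\cdot,\cdot)$, and that $\alpha$ (the unsigned line angle, taken mod $\pi$) is truly bounded by the arc length $\pi/k$ even when the two directions sit near opposite endpoints of $I$. Both are routine once set up carefully, and the lemma then follows at once from the pigeonhole step.
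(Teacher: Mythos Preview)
Your pigeonhole reduction to line directions is morally the same as the paper's (the paper pigeonholes on $\angle(s_0,\cdot)$ for a fixed reference $s_0$, which is the same data as your $\theta(\cdot)$ up to an additive constant). The gap is in the last paragraph: the swap $w_1\leftrightarrow w_2$ is \emph{not} allowed by the paper's convention, so you cannot ``select the smaller value~$\alpha$.''

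Under the counter-clockwise convention $v_1,w_1,v_2,w_2$, the labeling of the endpoints is determined up to the simultaneous swap $(v_1,v_2)\leftrightarrow(v_2,v_1)$, $(w_1,w_2)\leftrightarrow(w_2,w_1)$, and that swap leaves $\angle w_1qv_1$ unchanged (vertical angles). Swapping only $w_1\leftrightarrow w_2$ turns the cyclic order into the clockwise one, which violates the convention. Concretely, one checks that $\angle(s',s'')$ equals the representative in $(0,\pi)$ of $\theta(s'')-\theta(s')\pmod\pi$; in particular the paper's identity $\angle(s',s'')+\angle(s'',s')=\pi$ shows the directed angle is a fixed number, not a choice. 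Thus if you pick an arbitrary $s'\in S'$ and some $s''\in S'$ has $\theta(s'')$ just below $\theta(s')$ in the arc $I$, you get $\angle(s',s'')$ close to $\pi$, not close to $0$.

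The repair is exactly what the paper does: do not take $s'$ arbitrary, but take $s'$ to be the segment in $S'$ with the smallest direction within the arc (equivalently, the paper's choice $\angle(s_0,s')=\min_{s''\in S_i}\angle(s_0,s'')$). Then every $s''\in S'$ has $\theta(s'')\in[\theta(s'),\theta(s')+\pi/k]$, hence $\angle(s',s'')\le\pi/k$. With this single change your argument goes through and matches the paper's.
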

\begin{proof}
    Let $s_0=\overline{v_1v_2}$ be some segment in $S$. For an integer $1\leq i \leq k$, define \\$S_i := \left\{ s\in S: 
        \frac{i-1}{k} \pi \leq \angle(s_0,s) < \frac{i}{k} \pi \right\}$.
    Since $\bigcup_{1\leq i \leq k}S_i=S$, we have
    $\sum_{i=1}^{k} |S_i|=t$. By the pigeonhole principle, there exists an index $i$
    such that $|S_i|\geq \lceil t/k \rceil$. Consider taking the segment $s'$ in $S_i$ so that
        $\angle(s_0,s')=\min_{s''\in S_i} \angle(s_0,s'')$.
    If we can prove $\angle(s',s'')\leq \angle(s'',s_0)-\angle(s',s_0)$, we are done.
    Let $a$ be the intersection of $s_0$ and $s'$, $b$ be that of $s_0$ and $s''$,
    and $c$ be that of $s'$ and $s''$. If $a = b$, then it obviously holds.
    We have two other cases: $v_1,b,a,v_2$ are lined up on $s_0$ in this order, and $v_1,a,b,v_2$ are lined up on $s_0$ in this order.
    For the former case, $\angle a=\angle(s_0,s')$, $\angle b=\pi - \angle(s_0,s'')$ and
    $\angle c = \angle(s',s'')$ where $\angle a$, $\angle b$, $\angle c$ are the inner angle of $\triangle abc$. Therefore,
    \begin{gather}
        \angle a + \angle b + \angle c = \pi \label{triangleangles}\\
        \Leftrightarrow \angle(s',s'') = \pi - (\angle(s_0,s')) - (\pi-\angle(s_0,s''))=\angle(s_0,s'')-\angle(s_0,s') \notag
    \end{gather}
    Here, (\ref{triangleangles}) follows from the fact that the sum of the inner angles of a triangle is equal to $\pi$.\\
    The latter case can be shown similarly.
\end{proof}\par
We finally combine the two lemmas to upper bound $\tau$. After that,
we apply Theorem \ref{maximalupper} to get the upper bound of the number of maximal cliques.

\begin{thm}
    \label{EOt}
    Let $\epsilon$ be a positive constant.
    Let $t=n^{1/3+\epsilon}$. Then,
    \begin{align*}
        \Pr[G_{n,r} \text{ has $O_t$ as its vertex-induced subgraph}] \rightarrow 0
    \end{align*}
    as $n \rightarrow \infty$.
\end{thm}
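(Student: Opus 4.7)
The plan is to combine Lemma \ref{pigeon}, Corollary \ref{regionU}, and a Chernoff-type tail bound via a union bound. Suppose $G_{n,r}$ contains $O_t$ as a vertex-induced subgraph with $t = n^{1/3+\epsilon}$; by the equivalence stated before Proposition \ref{EI}, this is the same as having a set $S \subseteq \mathcal{S}$ of $t$ pairwise independent segments. Apply Lemma \ref{pigeon} to $S$ with $k := \lceil n^{1/3} \rceil$: this yields a segment $s'$ and a family $S' \subseteq S$ of pairwise independent segments with $s' \in S'$, $|S'| \geq \lceil t/k \rceil \geq n^\epsilon/2$ for $n$ large, and $\angle(s', s'') \leq \pi/k$ for every $s'' \in S'$.

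Next, I apply Corollary \ref{regionU} with $\theta_0 = \pi/k$ to each $s'' \in S' \setminus \{s'\}$: every such $s''$ has at least one endpoint in the region $U = U_1 \cup U_2$ attached to $s'$ and $\theta_0$. Because the segments in $S'$ are pairwise independent, their $2|S'|$ endpoints are all distinct vertices of $G$, so $U$ contains at least $|S'| - 1 \geq n^\epsilon/2 - 1$ distinct vertices of $G$. The computation carried out inside the proof of Lemma \ref{U} shows $F(U) \leq O(\theta_0^3) = O(k^{-3}) = O(n^{-1})$, and this bound is uniform in the position of $s'$ since clipping $U$ to $[0,1]^2$ only shrinks its area.

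It therefore suffices to bound the probability that, for some pair of vertices $v_1, v_2$ with $dist(v_1, v_2) > r$, the region $U$ determined by $\overline{v_1 v_2}$ contains at least $m := \lceil n^\epsilon/2 \rceil - 1$ of the other vertices. Conditional on $v_1, v_2$, this count $Y$ is stochastically dominated by $\mathrm{Binomial}(n, Cn^{-1})$ for a constant $C$, with mean $\mu = O(1)$, so the standard tail $\Pr[Y \geq m] \leq (e\mu/m)^m$ gives $\Pr[Y \geq m] = \exp(-\Omega(n^\epsilon \log n))$. A union bound over the at most $n^2$ pairs $(v_1, v_2)$ yields $\Pr[G_{n,r} \text{ has } O_t] \leq n^2 \cdot \exp(-\Omega(n^\epsilon \log n)) \to 0$, which is the claim. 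The main obstacle is the balance of parameters: $k = \Theta(n^{1/3})$ is essentially the only scale at which the pigeonhole output $t/k$ is comfortably larger than $nF(U) = \Theta(1)$, so the tail becomes super-polynomially small and easily absorbs the $n^2$ from the union bound, whereas any other scaling of $k$ collapses one side of the argument.
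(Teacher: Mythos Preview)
Your proposal is correct and follows essentially the same approach as the paper: apply Lemma~\ref{pigeon} with $k\approx n^{1/3}$, use Corollary~\ref{regionU} (equivalently Lemma~\ref{U}) to confine one endpoint of each segment in $S'$ to a region $U$ with $F(U)=O(n^{-1})$, apply a Chernoff-type tail bound, and take a union bound over the $O(n^2)$ possible base segments. Your write-up is in fact slightly more careful than the paper's about why the endpoints are distinct and about the conditioning on $(v_1,v_2)$, and you extract the stronger tail $\exp(-\Omega(n^{\epsilon}\log n))$, but the structure of the argument is the same.
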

\begin{proof}
Let $k:=n^{1/3}$ and $\theta_0:=\pi/k$.
    Suppose $G_{n,r}$ has $O_t$ as its vertex-induced subgraph.
    Apply Lemma \ref{pigeon} for $t$ and $k$. Then there exists a segment $s'$ and a set of independent segments $S'$ such that
    (\ref{three}) holds.
    We claim that for a certain segment $s'$, the probability that
    there exists a set of independent segments $S'$ such that (\ref{three}) holds is at most $\exp(-\Omega(n^{\epsilon}))$.
    By Lemma \ref{U}, the number of segments that satisfy the third condition of (\ref{three}) is at most the number of vertices on $U$, and that expected number is bounded by $O(n \theta_0^3)=O(1)$.
    However, to satisfy $(\ref{three})$, The cardinarity of $S'$ needs be at least $n^{\epsilon}$. This means that the number of vertices on $U$ also needs to be at least $n^{\epsilon}$. 
    Since the vertices are placed independently, we can apply the Chernoff bound to get the probability bound $\exp(-\Omega(n^{\epsilon}))$. \par
    Finally, we apply a union bound over all segments of which there are at most $n(n-1)/2$. We have
    \begin{align*}
        \Pr[G_{n,r} \text{ has $O_t$ as its vertex-induced subgraph}] \leq \frac{n(n-1)}{2} \exp(-\Omega(n^{\epsilon}))
    \end{align*}
    This goes to 0 as $n \rightarrow \infty$.
\end{proof}
\begin{cor}
    There exist a positive constant $C$ such that for all $\epsilon > 0$,
        $\Pr[\mathcal{M}(G_{n,r})\leq \exp(Cn^{1/3+\epsilon})]\rightarrow 1$
    as $n\rightarrow \infty$
\end{cor}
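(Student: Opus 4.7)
My plan is to combine the preceding theorem with the Farber--Hujter--Tuza upper bound (Theorem \ref{maximalupper}) in a direct way. Theorem \ref{EOt} tells us that with high probability $G_{n,r}$ does not contain an induced copy of $O_t$ for $t$ only slightly larger than $n^{1/3}$, and Theorem \ref{maximalupper} converts the absence of such a forbidden induced subgraph into a clean upper bound $(n/t)^{2t}$ on the number of maximal cliques. The only adjustment needed is to invoke Theorem \ref{EOt} with a strictly smaller exponent than the $\epsilon$ appearing in the corollary, so that the logarithmic factor produced when taking logarithms of $(n/t)^{2t}$ can be absorbed into the $1/3+\epsilon$ exponent.

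First I would fix $\epsilon > 0$, set $\epsilon' := \epsilon/2$, and let $t := \lceil n^{1/3+\epsilon'}\rceil$. Applying Theorem \ref{EOt} with exponent $\epsilon'$ yields that with probability tending to $1$, $G_{n,r}$ has no induced $O_t$. Since any induced $O_{t+1}$ automatically contains an induced $O_t$ (delete one matched pair from the underlying matching $(t+1)K_2$), the event ``no induced $O_t$'' already implies ``no induced $O_{t+1}$'', placing us inside the hypothesis of Theorem \ref{maximalupper}; the size condition $n \geq 4t$ is trivially satisfied for $n$ large, and we may assume $\epsilon < 2/3$ since the corollary is otherwise vacuous.

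On that high-probability event, Theorem \ref{maximalupper} gives the deterministic bound $\mathcal{M}(G_{n,r}) \leq (n/t)^{2t} \leq n^{2t}$, so
\[
    \log \mathcal{M}(G_{n,r}) \;\leq\; 2t \log n \;\leq\; 2(n^{1/3+\epsilon'}+1)\log n.
\]
Since $2\log n = o(n^{\epsilon'})$ as $n \to \infty$, for all sufficiently large $n$ the right-hand side is bounded by $n^{1/3+2\epsilon'} = n^{1/3+\epsilon}$, and so the constant $C = 1$ works uniformly in $\epsilon$. There is no real obstacle here: all of the geometric and probabilistic work has already been carried out in Theorem \ref{EOt}, and the remaining step is just the monotonicity of $O_t$ in $t$ together with absorption of the $\log n$ factor into the slightly larger exponent $1/3+\epsilon$.
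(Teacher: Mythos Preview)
Your proposal is correct and follows exactly the approach implicit in the paper: the paper states the corollary without a separate proof, relying on the reader to combine Theorem~\ref{EOt} with Theorem~\ref{maximalupper}. Your explicit use of $\epsilon' = \epsilon/2$ to absorb the $\log n$ factor coming from $(n/t)^{2t}$, together with the monotonicity observation that an induced $O_{t+1}$ contains an induced $O_t$, is precisely the missing bookkeeping needed to make the deduction rigorous with a constant $C$ independent of $\epsilon$.
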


\section{Hyperbolic Random Graphs}
\subsection{Definitions}
Given $n \in \mathbb{N^+}$, $\gamma \in (2, \infty)$, and $C \in \mathbb{R}$, let $R := 2 \log n + C$, and $\alpha = (\gamma-1)/2$.
 A hyperbolic random graph $G_{n,\gamma, C}$ is obtained
as below:
\begin{itemize}
    \item The vertex set is $V =\{1,2,...,n\}$.
    \item The vertices are identically and independently distributed on a hyperbolic plane. 
    The probability density function by a polar coordinate is:
    \begin{align*}
        f(r,\theta)=
        \begin{cases}
        \frac{1}{2\pi}\cdot \frac{\alpha \sinh (\alpha r)}{\cosh(\alpha R)-1} & (r \leq R)\\
        0 & (r > R)
        \end{cases}
    \end{align*}
    \item The edge set $E$ is given by $\{(u,v):dist(u,v)\leq R\}$
\end{itemize}

\begin{figure}[H]
\centering
\begin{center}
    \includegraphics[width = 8cm]{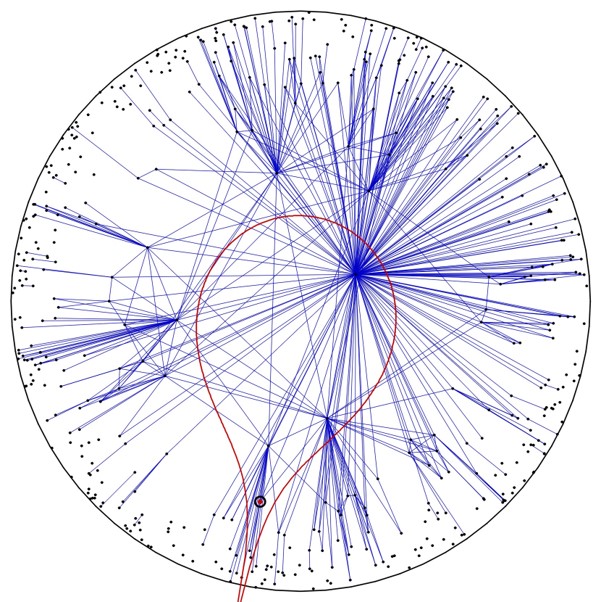}
\end{center}
\caption{A hyperbolic random graph ($n=500$, $\gamma=2.1$) }
\end{figure}

In our work, we are interested in the case $2 < \gamma < 3\ (1/2 < \alpha < 1)$ where the generated graph is ``scale-free" with high probability.
Let $(r_u,\phi_u)$ and $(r_v,\phi_v)$ be polar coordinates of 
$u$ and $v$ respectively.
Let $\theta:=\pi-|\pi - |\phi_u-\phi_v||$. Then, the hyperbolic cosine formula suggests that
\begin{align*}
    \cosh(dist(u,v))=\cosh r_u \cosh r_v - \sinh r_u \sinh r_v \cos \theta
\end{align*}
Again, define $F(U):=\int_U f(r,\theta)dr d\theta$. 
On a hyperbolic plane, the area of $U$ is equal to
$\int \sinh r dr d\theta=: \mu(U)$. Let $\rho(r,\theta):= \frac{dF}{d\mu} = \frac{f(r,\theta)}{\sinh r}$.
Since the value of $\rho(r, \theta)$ does not depend on $\theta$, we sometimes denote it as $\rho(r)$.
Intuitively, $\rho$ is the
the probability that a vertex lies on a single unit square around $(r, \theta)$. By differentiating $\rho$, we can confirm that it is monotonically decreasing with respect to $r$. Thus, the graph is dense near the origin of the plane. \par
Let $o$ be the origin of the plane. We define $B_0(d)$ as the set of points $p$ such that $dist(p,o)<d$. In other words, $B_0(d):=\{(r,\phi): r < d\}$.

\subsection{Lower Bound of the Number of Maximal Cliques}
To obtain the lower bound of $\mathcal{M}(G_{n,r,C})$, 
we do the same thing as the Euclidean random geometric graphs i.e. We take regions so that vertices on them form $O_t$. \par
Let $k \geq 4$ be an integer.
Let $\theta_0:=\pi/(3k)$.
For $1\leq i \leq 2k$, Define $U_i:=\{(r,\phi): r_1 \leq r \leq r_2,\ 3(i-1)\theta_0 \leq \phi \leq (3(i-1)+1)\theta_0\}$ where $r_1$ and $r_2$ are determined as
\begin{align*}
    \cosh(2 r_1)&=\frac{\cosh(R)}{\cos^2(\theta_0/2)}\\
    \cosh(2 r_2)&=\frac{\cosh(R)-1}{\cos^2(\theta_0)}
\end{align*}
We now assume that $r_1<r_2 \leq R$. We later confirm this holds under proper $\theta_0$ and $n$.
\begin{prop}
    Let $1 \leq i \leq k$ and $1 \leq j \leq 2k$. For a vertex $v$ on $U_i$ and a vertex $w$ on $U_j$,
    \begin{align*}
        dist(v, w) &> R\ (j = i + k)\\
        dist(v, w) &\leq R\ (j \neq i + k)
    \end{align*}
\end{prop}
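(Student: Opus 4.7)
The plan is to follow the blueprint of the Euclidean counterpart, replacing the planar distance formula with the hyperbolic cosine law $\cosh(dist(v,w)) = \cosh r_v \cosh r_w - \sinh r_v \sinh r_w \cos\theta$, where $\theta$ is the angular distance between $v$ and $w$. The first step is to pin down the range of $\theta$ allowed by the wedges. Since $3k\theta_0 = \pi$, the wedges $U_i$ and $U_{i+k}$ are angularly antipodal, and accounting for the width $\theta_0$ of each wedge gives $\theta \in [\pi - \theta_0, \pi]$ whenever $j = i+k$, hence $\cos\theta \leq -\cos\theta_0$. When $j \neq i+k$, the wedge $U_j$ that comes nearest to being antipodal to $U_i$ is the one with $|j - i| = k \pm 1$, for which the angular distance is at most $\pi - 2\theta_0$; a short case check going around the circle in both directions confirms this is the worst case, so $\cos\theta \geq -\cos(2\theta_0)$.

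Next I would exploit monotonicity to reduce to extremal radii. In both ranges above $\cos\theta \leq 0$, so $-\sinh r_v \sinh r_w \cos\theta \geq 0$ and both summands of the cosine law are nondecreasing in $r_v, r_w \in [r_1, r_2]$. Consequently $\cosh(dist(v,w))$ is minimised over $U_i \times U_{i+k}$ at $r_v = r_w = r_1$ with $\theta = \pi - \theta_0$, and maximised over $U_i \times U_j$ for $j \neq i+k$ at $r_v = r_w = r_2$ with $\theta = \pi - 2\theta_0$.

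Finally, substitution using $\cosh^2 r = (\cosh(2r)+1)/2$, $\sinh^2 r = (\cosh(2r)-1)/2$, together with the half-angle identities $1 \pm \cos\alpha = 2\cos^2(\alpha/2)$ or $2\sin^2(\alpha/2)$, collapses the two extremal expressions to $\cosh(2r_1)\cos^2(\theta_0/2) + \sin^2(\theta_0/2)$ and $\cosh(2r_2)\cos^2\theta_0 + \sin^2\theta_0$ respectively. Plugging in the defining equations of $r_1$ and $r_2$ yields $\cosh R + \sin^2(\theta_0/2) > \cosh R$ in the antipodal case and $(\cosh R - 1) + \sin^2\theta_0 \leq \cosh R$ in the non-antipodal case, which are precisely the two claimed inequalities after applying $\cosh$ to both sides of $dist(v,w) > R$ and $dist(v,w) \leq R$. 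The main delicate point I anticipate is the angular case check in the first step, since one has to be careful with the wraparound when $|j-i| > k$; the hyperbolic computations themselves are routine once the monotonicity reduction is in place.
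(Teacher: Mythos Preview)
Your approach is correct and matches the paper's proof line for line: apply the hyperbolic cosine law, bound the angular separation by $\pi-\theta_0$ (antipodal) or $\pi-2\theta_0$ (non-antipodal), reduce to the extremal radii, and collapse the resulting expressions via half-angle identities together with the defining equations for $r_1,r_2$. One small wording slip worth fixing: the assertion ``in both ranges $\cos\theta\le 0$'' is false in the non-antipodal case (where $\theta$ can be near~$0$); the clean order is to first maximise over $\theta$ (the expression is increasing in $\theta$ for all radii), note that at the extremal angle $\pi-2\theta_0$ one does have $\cos\theta<0$, and \emph{then} invoke monotonicity in $r_v,r_w$---which is effectively what your final computation already does.
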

\begin{proof}
    Let $r_v$ and $r_w$ be radial coordinates of $v$ and $w$, respectively. If $j=i+k$, then
    \begin{align*}
        \cosh (dist(v,w))&= \cosh r_v \cosh r_w - \sinh r_v \sinh r_w \cos \angle wov \\
        &\geq \cosh r_v \cosh r_w - \sinh r_v \sinh r_w \cos (\pi - \theta_0)\\
        &\geq \cosh^2 r_1 + \sinh^2 r_1 \cos \theta_0\\
        &\geq \cos^2 (\theta_0/2) \cosh(2 r_1)+\sin^2(\theta_0/2)\\
        & >\cosh(R)
    \end{align*}
    Otherwise, 
    \begin{align*}
        \cosh (dist(v,w)) &\leq \cos^2 (\theta_0)\cosh(2 r_2) + \sin^2(\theta_0)\\
        &\leq \cosh(R) - 1 + \sin^2(\theta_0) \leq \cosh(R)
    \end{align*}
\end{proof}\par
For $1 \leq i \leq k$, let $t_i$ be 0-1 random variables which are equal to 1 if and only if both $U_i$ and $U_{i+k}$ have at least one vertex on themselves.
Let $t=\sum_{i=1}^k X_i$. Then, there exists $O_{t}$ as a vertex-induced subgraph. Again, we are left to lower bound $t$.
\begin{prop}
    \label{HF}
    If $\theta_0 > 1/\sqrt{n}$, then
    $F(U_1)=\Omega(n^{-\alpha}\theta_0^3)$ as $n \rightarrow \infty$ and $\theta_0 \rightarrow 0$
\end{prop}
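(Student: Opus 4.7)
The plan is to compute $F(U_1)$ by direct integration against the given density, reduce the problem to estimating $\cosh(\alpha r_2) - \cosh(\alpha r_1)$, and then extract the asymptotic using the defining equations of $r_1$ and $r_2$ together with the hypothesis $\theta_0 > 1/\sqrt{n}$.

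First I would integrate in polar coordinates. Since $f$ depends only on $r$, one obtains
\begin{align*}
F(U_1) \;=\; \int_0^{\theta_0}\!\!\int_{r_1}^{r_2} \frac{1}{2\pi}\cdot\frac{\alpha \sinh(\alpha r)}{\cosh(\alpha R)-1}\,dr\,d\phi
\;=\; \frac{\theta_0}{2\pi}\cdot \frac{\cosh(\alpha r_2)-\cosh(\alpha r_1)}{\cosh(\alpha R)-1}.
\end{align*}
Since $R = 2\log n + C$, the denominator satisfies $\cosh(\alpha R)-1 = \Theta(e^{\alpha R}) = \Theta(n^{2\alpha})$. So it suffices to show $\cosh(\alpha r_2)-\cosh(\alpha r_1) = \Omega(n^{\alpha}\theta_0^{2})$.

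The next step is to control $r_2-r_1$. Expanding $\cos^2(\theta_0/2) = 1 - \theta_0^2/4 + O(\theta_0^4)$ and $\cos^2(\theta_0) = 1 - \theta_0^2 + O(\theta_0^4)$ and substituting into the defining equations yields
\begin{align*}
\cosh(2r_2) - \cosh(2r_1) \;=\; \tfrac{3}{4}\cosh(R)\,\theta_0^2 \;-\; 1 \;+\; O\!\bigl(\cosh(R)\theta_0^{4}\bigr).
\end{align*}
Here the constant $-1$ comes from the ``$-1$'' in the numerator defining $r_2$, and the hypothesis $\theta_0 > 1/\sqrt{n}$ is exactly what is needed so that $\cosh(R)\theta_0^2 \gtrsim n$ dominates that $O(1)$ term. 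Hence $\cosh(2r_2)-\cosh(2r_1) = \Theta(\cosh(R)\theta_0^2)$. Applying the mean value theorem to $\cosh(2\cdot)$ on $[r_1,r_2]$ and using $\sinh(2r^{**}) = \Theta(\cosh(R))$ for the intermediate point $r^{**} \in [r_1,r_2] \subseteq [R/2 - o(1), R/2 + o(1)]$, I conclude $r_2 - r_1 = \Theta(\theta_0^2)$.

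Finally I apply the mean value theorem once more, this time to $\cosh(\alpha\cdot)$: for some $r^\ast \in [r_1,r_2]$,
\begin{align*}
\cosh(\alpha r_2) - \cosh(\alpha r_1) \;=\; \alpha \sinh(\alpha r^\ast)(r_2-r_1) \;=\; \Theta\!\bigl(e^{\alpha R/2}\bigr)\cdot \Theta(\theta_0^2) \;=\; \Theta(n^\alpha \theta_0^2).
\end{align*}
Substituting back gives $F(U_1) = \Omega(n^{-\alpha}\theta_0^3)$, as required. The main technical obstacle is the step verifying $r_2-r_1 = \Theta(\theta_0^2)$: without the assumption $\theta_0 > 1/\sqrt{n}$, the subtractive ``$-1$'' in the definition of $\cosh(2r_2)$ would potentially dominate the $\theta_0^2\cosh(R)$ term and destroy the scaling. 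Everything else is routine Taylor expansion and the asymptotics $\cosh(x),\sinh(x)\sim e^x/2$ for large $x$.
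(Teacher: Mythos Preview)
Your proof is correct. The paper takes a slightly different route: instead of integrating the density exactly to get $F(U_1)=\frac{\theta_0}{2\pi}\cdot\frac{\cosh(\alpha r_2)-\cosh(\alpha r_1)}{\cosh(\alpha R)-1}$, it bounds the density $\rho(r)=f(r)/\sinh r$ below by $\rho(r_2)$ (using monotonicity) and then computes the hyperbolic area $\mu(U_1)=\theta_0(\cosh r_2-\cosh r_1)$, arriving at $F(U_1)\geq \rho(r_2)\,\mu(U_1)=\Theta(\theta_0 n^{-1-\alpha})(\cosh r_2-\cosh r_1)$. It then estimates $\cosh r_2-\cosh r_1$ via the factoring $\cosh^2 r_2-\cosh^2 r_1=\tfrac12(\cosh 2r_2-\cosh 2r_1)$ rather than the mean value theorem. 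Both arguments ultimately reduce to the same Taylor expansion of $\cosh(2r_2)-\cosh(2r_1)$ and use the hypothesis $\theta_0>1/\sqrt{n}$ in exactly the same way (to make the $\Theta(\cosh(R)\theta_0^2)$ term dominate the stray $-1$). Your direct-integration-plus-MVT approach is arguably a bit cleaner since it bypasses the monotonicity step; the paper's density-times-area framing is more geometric and mirrors the structure used in the matching upper-bound calculations later in the paper.
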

\begin{proof}
    We first claim that $r_2=\log n + \Theta(1)$. It is sufficient to prove $\cosh(2r_2)=\Theta(n^2)$, and this follows since $R=2 \log n + \Theta(1)$ and $\cos(\theta_0) > \frac{1}{2}$. With the same argument, we have $r_1=\log n + \Theta(1)$. Then, $F(U_1)$ can be calculated as below.
    \begin{align*}
        F(U_1)&= \int_{r_1}^{r_2} \int_0^{\theta_0} f(r, \theta) dr d\theta \\
        &= \int_{r_1}^{r_2} \int_0^{\theta_0} \rho(r, \theta) \sinh(r) dr d \theta \\
        &\geq \rho(r_2) \int_{r_1}^{r_2} \int_0^{\theta_0}\sinh(r) dr \\
        &= \frac{\theta_0}{2\pi} \frac{\alpha}{\cosh(\alpha R)-1}
        \frac{\sinh(\alpha r_2)}{\sinh(r_2)} (\cosh(r_2) - \cosh(r_1))\\
        &= \Theta(\theta_0 n^{-1-\alpha})(\cosh(r_2) - \cosh(r_1)) \\
    \end{align*}
    On the third line, we use the fact that $\rho(r)$ is a monotonically decreasing function.
    Here,
    \begin{align*}
        &\cosh(r_2) - \cosh(r_1)\\
        &=\frac{1}{\cosh(r_2)+\cosh(r_1)}(\cosh^2(r_2)-
        \cosh^2(r_1))\\
        &=\Theta(n^{-1})(\cosh(2r_2)-\cosh(2r_1))\\
        &=\Theta(n^{-1})\cosh(R)
        \left(\frac{1}{\cos^2(\theta_0)}-\frac{1}{\cos^2(\theta_0/2)}
        +\frac{1}{\cosh(R)\cos^2(\theta_0/2)} \right)\\
        &=\Theta(n)
        \left(\frac{3}{4}\theta_0^2+O(\theta_0^4)+\Theta(n^{-2}) \right)\\
        &=\Theta(n \theta_0^2)
    \end{align*}
    The last equality holds since $\theta_0^2 > n^{-1} \gg n^{-2}$.
\end{proof}\par
As a byproduct of the Proposition \ref{HF}, we obtain the condition which $r_1<r_2\leq R$ holds.
\begin{cor}
    If $\theta_0 > 1/\sqrt{n}$, then with sufficiently large $n$ and sufficiently small $\theta_0$, $r_1<r_2\leq R$.
\end{cor}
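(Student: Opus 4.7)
The strategy is to read off both inequalities directly from the asymptotic estimates already produced inside the proof of Proposition \ref{HF}.

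For the inequality $r_1 < r_2$, I would observe that the computation inside the proof of Proposition \ref{HF} explicitly yielded
\begin{align*}
\cosh(r_2) - \cosh(r_1) = \Theta(n \theta_0^2),
\end{align*}
and this quantity is strictly positive under the hypothesis $\theta_0 > 1/\sqrt{n}$ once $n$ is sufficiently large. Since $\cosh$ is strictly increasing on $[0,\infty)$, this immediately gives $r_1 < r_2$. Thus this half of the corollary is essentially a byproduct of the volume calculation that has just been performed.

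For the inequality $r_2 \leq R$, I would work directly from the defining relation $\cosh(2 r_2) = (\cosh R - 1)/\cos^2 \theta_0$. Since $R = 2 \log n + C$, the numerator is $\Theta(n^2)$, whereas $\cosh(2R) = 2\cosh^2 R - 1 = \Theta(n^4)$. Provided $\theta_0$ is small enough that $\cos^2 \theta_0 \geq 1/2$, one has
\begin{align*}
\cosh(2 r_2) \leq 2(\cosh R - 1) \leq \cosh(2R)
\end{align*}
for all sufficiently large $R$. Applying the monotonicity of $\cosh$ once more gives $r_2 \leq R$.

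I do not expect any genuine obstacle here: the corollary is a minor cleanup whose content is already contained in the estimates derived above. The only point requiring a moment of care is to confirm that the regime ``$n$ sufficiently large and $\theta_0$ sufficiently small'' is jointly compatible with the standing hypothesis $\theta_0 > 1/\sqrt{n}$, but this causes no difficulty since $1/\sqrt{n} \to 0$ as $n \to \infty$, so one may first fix a small $\theta_0$ making $\cos^2 \theta_0 \geq 1/2$ and then take $n$ large enough that $1/\sqrt{n} < \theta_0$ and the two $\Theta$-bounds above dominate their respective targets.
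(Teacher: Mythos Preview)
Your proposal is correct and essentially mirrors the paper's own proof. For $r_1<r_2$ you and the paper both invoke the already-computed $\cosh(r_2)-\cosh(r_1)=\Theta(n\theta_0^2)>0$; for $r_2\le R$ the paper simply recalls the estimate $r_2=\log n+\Theta(1)$ established in Proposition~\ref{HF} and compares it to $R=2\log n+\Theta(1)$, whereas you compare $\cosh(2r_2)$ with $\cosh(2R)$ directly --- a cosmetic difference only.
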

\begin{proof}
    Recall that $r_2=\log n + \Theta(1)$ and $R=2 \log n + \Theta(1)$. Therefore, we have $r_2 \leq R$ with sufficiently large $n$. Also, recall that $\cosh(r_2) - \cosh(r_1) = \Theta(n \theta_0^2) > 0$ as $n \rightarrow \infty$ and $\theta_0 \rightarrow 0$, proving $r_1<r_2$ with sufficiently large $n$ and sufficiently small $\theta_0$.
\end{proof}
\begin{prop}
    If $k= n^{(1-\alpha)/3}$, then there exists a positive constant $C$ such that $\Pr[t \leq C n^{(1-\alpha)/3}] \rightarrow 0$ as $n \rightarrow \infty$
\end{prop}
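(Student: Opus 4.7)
The plan is to mirror the Euclidean argument in Proposition \ref{propcn}: Poissonize, apply a Chernoff bound to the now-independent indicators $t_i$, and pull back to the original model via (\ref{poisson}).

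First I would observe that the Poissonization inequality (\ref{poisson}) carries over verbatim to the hyperbolic setting, since its derivation uses only Stirling's approximation applied to $\Pr[N_n = n]$ and is independent of the underlying density. Hence, for any graph property $A$,
\[
\Pr[G_{n,\gamma,C}\text{ has }A] \leq \Theta(n^{1/2})\cdot \Pr[G_{N_n,\gamma,C}\text{ has }A],
\]
where $N_n$ is Poisson with mean $n$. Under this Poissonized model, the $2k$ pairwise disjoint regions $U_1,\dots,U_{2k}$ each contain an independent Poisson-distributed number of vertices, and by the rotational symmetry of the density all of these Poisson random variables have the same mean $n F(U_1)$.

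Next I would invoke Proposition \ref{HF}. With $\theta_0 = \pi/(3k) = \Theta(n^{-(1-\alpha)/3})$, the hypothesis $\theta_0 > 1/\sqrt{n}$ reduces to $(1-\alpha)/3 < 1/2$, which holds since $\alpha > 1/2$. The proposition then gives $F(U_1) = \Omega(n^{-\alpha}\theta_0^3) = \Omega(n^{-1})$, so $p := 1 - e^{-n F(U_1)} = \Omega(1)$.

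Under the Poissonized model, $t_1,\dots,t_k$ are i.i.d.\ Bernoulli$(p^2)$ (since $t_i$ depends only on whether the two disjoint regions $U_i$ and $U_{i+k}$ are each non-empty), hence $\mathbb{E}[t] = kp^2 = \Omega(n^{(1-\alpha)/3})$. A standard Chernoff bound yields $\Pr[t \leq kp^2/2] \leq \exp(-\Omega(n^{(1-\alpha)/3}))$. Pulling back through the Poissonization inequality, the corresponding event on $G_{n,\gamma,C}$ has probability at most $\Theta(n^{1/2})\exp(-\Omega(n^{(1-\alpha)/3}))$, which tends to $0$ as $n\to\infty$ because $\alpha < 1$. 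Taking the constant in the statement to be $p^2/2$ completes the argument. There is no substantive obstacle: the proof is structurally identical to the Euclidean case, with the only change being that the Euclidean density estimate $F(U_1) = \Omega(\theta_0^3)$ is replaced by the weaker hyperbolic estimate $F(U_1) = \Omega(n^{-\alpha}\theta_0^3)$, which is precisely what forces the exponent $(1-\alpha)/3$ in place of $1/3$.
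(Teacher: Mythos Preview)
Your proposal is correct and follows essentially the same approach as the paper: Poissonize, verify that $\theta_0 = \pi/(3k)$ satisfies the hypothesis $\theta_0 > 1/\sqrt{n}$ so that Proposition \ref{HF} gives $nF(U_1)=\Omega(1)$, apply the Chernoff bound to the i.i.d.\ indicators $t_i$, and de-Poissonize via (\ref{poisson}). The paper's own proof is a two-line referral back to Proposition \ref{propcn}, and you have simply unpacked that referral in full.
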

\begin{proof}
    As in the Proposition \ref{propcn}, consider the Poissonization $G_{N_n,\gamma, C}$. The proof is the same. We only need to comfirm that $\theta_0=\pi/(3n^{(1-\alpha)/3}) > 1/\sqrt{n}$ for sufficiently large $n$ and $n F(U_1)=\Omega(1)$.
\end{proof}

Again, by combining the proposition with the Theorem \ref{maximallower}, we obtain the main theorem regarding $\mathcal{M}(G_{n,\gamma,C})$.

\begin{cor}
    There exists a positive constant $C'$ such that $\Pr[\exp(C'n^{(1-\alpha)/3}) \leq \mathcal{M}(G_{n,\gamma,C})] \rightarrow 1$
    as $n \rightarrow \infty$.
\end{cor}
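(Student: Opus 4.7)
The plan is to derive this corollary as an immediate consequence of the two results just established, without additional probabilistic work. The previous proposition guarantees that with probability tending to $1$ as $n\to\infty$, the hyperbolic random graph $G_{n,\gamma,C}$ contains an induced $O_t$ with $t \geq C n^{(1-\alpha)/3}$ for some positive constant $C$. Theorem \ref{maximallower} (the folklore lower bound) then converts the existence of an induced $O_t$ into a lower bound of $2^t$ on the number of maximal cliques.

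First, I would set $C' := C \log 2$ and work on the event $\mathcal{E}_n := \{t \geq C n^{(1-\alpha)/3}\}$, whose probability tends to $1$ by the preceding proposition. On this event, $G_{n,\gamma,C}$ contains $O_t$ as a vertex-induced subgraph with $t \geq C n^{(1-\alpha)/3}$, so Theorem \ref{maximallower} yields
\begin{align*}
\mathcal{M}(G_{n,\gamma,C}) \;\geq\; 2^{t} \;\geq\; 2^{C n^{(1-\alpha)/3}} \;=\; \exp\bigl(C' n^{(1-\alpha)/3}\bigr).
\end{align*}
Therefore the event $\{\exp(C' n^{(1-\alpha)/3}) \leq \mathcal{M}(G_{n,\gamma,C})\}$ contains $\mathcal{E}_n$, and its probability also tends to $1$.

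There is essentially no obstacle here, since the probabilistic content and the geometric construction of the $2k$ angular regions $U_i$ with $F(U_1) = \Omega(n^{-\alpha}\theta_0^3)$ were already handled in the previous proposition via Poissonization and a Chernoff bound. The only thing to double-check is that the constant $C$ produced there is indeed positive and independent of $n$, which follows from $np^2 = \Omega(n^{(1-\alpha)/3})$ with $p = 1 - e^{-n F(U_1)} = \Omega(1)$, so the resulting $C$ can be taken to be any constant strictly less than $p^2/2$. Rescaling by $\log 2$ gives the desired $C'$.
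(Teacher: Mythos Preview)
Your proposal is correct and follows exactly the paper's approach: the paper simply states that the corollary follows ``by combining the proposition with the Theorem~\ref{maximallower},'' and you have spelled out precisely that combination, setting $C' = C\log 2$ and observing that $\mathcal{M}(G_{n,\gamma,C}) \geq 2^t \geq \exp(C' n^{(1-\alpha)/3})$ on the high-probability event $\{t > C n^{(1-\alpha)/3}\}$. (One trivial slip: in your final paragraph $np^2$ should read $kp^2$, since $\mathbb{E}[t] = kp^2$ with $k = n^{(1-\alpha)/3}$.)
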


\subsection{Upper Bound of the Number of Maximal Cliques}
To upper bound $\mathcal{M}(G_{n, \gamma, C})$, we prove the hyperbolic versions of the Lemma \ref{U} and \ref{pigeon}. Again, we start with defining segments.
Let $\overline{vw}$ denote the segment (i.e., the shortest geodesic) between vertices $v$ and $w$ on a hyperbolic plane.
Let $\mathcal{S}:=\{\overline{vw}:v,w\in G(V), dist(v,w)>R\}$. We define the independence of
segments in the same way as for the Euclidean case. We first confirm that two independent segments intersect. To do so, we use the following lemma from \cite{hyperbolicclique}.

\begin{lem}[\cite{hyperbolicclique}]
    Let $a$ and $b$ be points on a hyperbolic plane where $dist(a,b)=R$. Let $l$ be a line that passes through $a$ and $b$. Let $c$ and $d$ be points on the same halfplane induced by $l$. Suppose $dist(a,c) \leq R$, $dist(b,c) \leq R$, $dist(a,d)\leq R$, and $dist(b,d) \leq R$. Then, $dist(c,d)\leq R$.
\end{lem}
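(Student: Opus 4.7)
The plan is to reduce the lemma to computing the diameter of a specific convex region and then attack that via case analysis on its boundary. I would work in Fermi coordinates aligned with $l$, placing $a=(-R/2,0)$ and $b=(R/2,0)$; each point $p$ is then written as $(s_p,h_p)$ with $h_p$ its signed distance to $l$, and the hypothesis that $c,d$ lie in the same halfplane becomes $h_c,h_d\geq 0$. In these coordinates the hyperbolic distance formula for two points on the same side of $l$ reads
\begin{equation*}
\cosh dist(p,q) = \cosh(h_p)\cosh(h_q)\cosh(s_p-s_q) - \sinh(h_p)\sinh(h_q),
\end{equation*}
and it suffices to prove that the compact region $D := B(a,R)\cap B(b,R)\cap\{h\geq 0\}$ has diameter at most $R$.

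$D$ is geodesically convex, being the intersection of three convex sets. Its boundary consists of the geodesic segment $\overline{ab}$ on $l$, an arc $\alpha_a$ of the hyperbolic circle $\{dist(\cdot,a)=R\}$ running from $b$ up to the apex $p^\star$, and a symmetric arc $\alpha_b$ of $\{dist(\cdot,b)=R\}$ from $p^\star$ back down to $a$, where $p^\star$ is the unique intersection of the two circles in the closed halfplane. Since $dist(a,p^\star)=dist(b,p^\star)=R$, the triangle $\triangle abp^\star$ is equilateral with side $R$. Because $dist$ is a convex function on the product of the hyperbolic plane with itself, the maximum of $dist(c,d)$ over $D\times D$ is attained on $\partial D\times\partial D$, so I would proceed by case analysis. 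If both points lie on $\overline{ab}$ the bound is trivial; if both lie on the same arc (say $\alpha_a$), then the hyperbolic law of cosines at $a$ gives $\cosh dist(c,d) = \cosh^2 R-\sinh^2 R\cos\angle cad$, which is maximized when $\angle cad$ equals the angular span $\angle bap^\star$ of $\alpha_a$ at $a$, yielding $dist(c,d)\leq dist(b,p^\star)=R$. For the mixed subcases where exactly one of $c,d$ lies on $\overline{ab}$, substituting the arc constraint $\cosh(h)\cosh(s\pm R/2)=\cosh R$ into the distance formula reduces the inequality to the elementary comparison $\cosh(s_c-s_d)\leq\cosh(s_d+R/2)$ on the relevant coordinate range.

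The main obstacle is the last subcase, when $c\in\alpha_a$ and $d\in\alpha_b$. Parametrizing by $u:=s_c+R/2\in[R/2,R]$ and $v:=R/2-s_d\in[R/2,R]$, the arc constraints become $\cosh(h_c)\cosh u = \cosh(h_d)\cosh v=\cosh R$, and the desired inequality $dist(c,d)\leq R$ takes the algebraic form
\begin{equation*}
\cosh^2 R\cdot\cosh(u+v-R) \leq \cosh R\cdot\cosh u\cosh v + \sqrt{(\cosh^2 R-\cosh^2 u)(\cosh^2 R-\cosh^2 v)}
\end{equation*}
on $[R/2,R]^2$. I would verify this by checking that the corners $(u,v)\in\{(R,R/2),(R/2,R)\}$ (corresponding to $\{c,d\}\in\{\{b,p^\star\},\{a,p^\star\}\}$) achieve equality, then using symmetry together with the signs of the partial derivatives to show that no interior point of the square exceeds this value. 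A slicker alternative that bypasses the algebra entirely is a CAT$(-1)$ comparison argument: the corresponding Euclidean statement, that the upper half of a two-disk lens with center distance $R$ has diameter $R$, is elementary, and since hyperbolic triangles are ``thinner'' than their Euclidean counterparts, a standard triangle-comparison argument transfers the bound to the hyperbolic plane.
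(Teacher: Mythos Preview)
The paper does not prove this lemma: it is quoted from Friedrich--Krohmer \cite{hyperbolicclique} and used as a black box in the proof of Proposition~\ref{HI}, so there is no in-paper argument to compare against.

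Your outline --- reduce to the diameter of the convex half-lens $D$, push the maximum to $\partial D$ by convexity of the distance function, then do a boundary case analysis --- is a sound strategy, and the easy subcases are handled correctly. The remaining subcase $c\in\alpha_a$, $d\in\alpha_b$, however, is not closed. First, your description of the equality set is inaccurate: in your $(u,v)$ parametrization the edge $u=R$ corresponds to $c=b$, and then $dist(c,d)=dist(b,d)=R$ identically for every $d\in\alpha_b$; likewise the whole edge $v=R$ gives equality. So the maximum is attained along two full edges of the square, not at two isolated corners, and a ``check the corners, then look at the sign of the gradient'' argument cannot work as stated --- the gradient of your two-variable function vanishes along those edges, and you would need a second-order or factorization argument to rule out an interior excess. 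Second, the proposed CAT$(-1)$ shortcut does not go through: CAT$(\kappa)$ comparison controls distances inside \emph{geodesic} triangles, whereas $\partial D$ consists of hyperbolic circular arcs, which are not geodesics. There is no ``standard triangle-comparison argument'' that transfers the Euclidean half-lens diameter bound to the hyperbolic half-lens; the regions are not comparison objects in the CAT sense. To make your approach complete you would have to actually establish the two-variable inequality on $[R/2,R]^2$ (for instance by exhibiting a factorization with a nonpositive sign, which the edge-equality pattern strongly suggests), or else fall back on the original argument in \cite{hyperbolicclique}.
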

Also, we use the following basic facts of non-Euclidean geometry.
\begin{prop} [Euclid's Proposition I-13]
    \label{euclid1}
    If a line stood on another line makes angles, it will make
    two right angles, or the sum of the angles is equal to two right angles.
\end{prop}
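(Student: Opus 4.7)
The plan is to reproduce Euclid's own proof of Proposition I-13, which uses only neutral-geometry tools: the existence of a perpendicular to a line at a point, the additivity of adjacent angles, and the definition of a right angle. Since none of these invoke the parallel postulate, the argument transfers verbatim from the Euclidean to the hyperbolic plane, which is exactly what the paper needs when applying this fact together with the preceding lemma of \cite{hyperbolicclique}.

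Let ray $DC$ stand on line $AB$ at the point $D$, producing adjacent angles $\angle ADC$ and $\angle CDB$ on the side of $AB$ containing $C$. First I would erect at $D$ the perpendicular $DE$ to $AB$, with $E$ chosen on the same side of $AB$ as $C$. If ray $DC$ coincides with ray $DE$, then both $\angle ADC$ and $\angle CDB$ are right angles by the very definition of perpendicularity, and their sum is $2$ right angles, completing the proof.

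Otherwise, ray $DC$ lies strictly in the interior of exactly one of $\angle ADE$ or $\angle EDB$; these two cases are symmetric, so suppose $DC$ is interior to $\angle ADE$. Then by additivity of adjacent angles, $\angle ADE = \angle ADC + \angle CDE$ while $\angle CDB = \angle CDE + \angle EDB$. Substituting,
\[
\angle ADC + \angle CDB \;=\; (\angle ADE - \angle CDE) + (\angle CDE + \angle EDB) \;=\; \angle ADE + \angle EDB,
\]
and the last sum equals $2$ right angles by construction of $DE$.

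The only delicate point is the case split on which side of $DE$ the ray $DC$ falls, together with the implicit use of the crossbar/additivity fact that a ray interior to an angle splits it additively. Both are foundational facts of absolute (neutral) geometry and are valid on the hyperbolic plane, so the argument presents no genuine obstacle; the real content of the proposition is simply that this classical statement survives the loss of the parallel postulate, which is exactly why it is cataloged here as a basic fact of non-Euclidean geometry.
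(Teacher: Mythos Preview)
Your argument is correct and is precisely Euclid's original proof, which lives entirely in neutral geometry and hence carries over to the hyperbolic plane. However, the paper does not actually supply a proof of this proposition: it is listed, together with Euclid I-19 and the Saccheri--Legendre theorem, as a ``basic fact of non-Euclidean geometry'' and invoked without argument. So there is nothing to compare against; you have simply filled in a proof the paper chose to omit by citation.
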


\begin{prop} [Euclid's Proposition I-19]
    \label{euclid3}
     In any given triangle, the side for a larger angle is larger.
\end{prop}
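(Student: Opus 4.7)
The plan is to prove Euclid's Proposition I-19 by reductio ad absurdum, appealing only to statements of neutral (absolute) geometry so that the argument is valid on the hyperbolic plane in which we are working. Let $\triangle ABC$ be a triangle with $\angle B > \angle C$; I want to show that the side $AC$ (opposite $B$) is strictly longer than the side $AB$ (opposite $C$). Suppose for contradiction that $AC \le AB$. There are two cases. If $AC = AB$, then $\triangle ABC$ is isoceles, so by Euclid's Proposition I-5 (pons asinorum, a neutral-geometry theorem) the base angles are equal, giving $\angle B = \angle C$, which contradicts $\angle B > \angle C$. If $AC < AB$, apply Euclid's Proposition I-18 (the side subtending the greater angle is greater; also a neutral-geometry theorem). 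From $AC < AB$ one concludes $\angle B < \angle C$, again a contradiction. Hence $AC > AB$, as desired.

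A more constructive variant, which avoids depending on I-18 (useful if one prefers to derive I-18 and I-19 together), goes as follows. Since $\angle B > \angle C$, one can construct a ray from $B$ into the interior of the triangle making angle equal to $\angle C$ with $BC$; let $D$ be its intersection with side $AC$. Triangle $BDC$ then has $\angle DBC = \angle DCB$, so by the converse of I-5 we have $BD = DC$. Applying the triangle inequality (also a neutral-geometry result) inside $\triangle ABD$ gives
\[
AB \;<\; AD + BD \;=\; AD + DC \;=\; AC,
\]
which again establishes the claim.

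The main obstacle is essentially bookkeeping rather than substance: one must verify that every tool used (I-5 and its converse, I-18, the triangle inequality, and the interior angle construction) belongs to neutral geometry and hence transports verbatim to the hyperbolic setting; none of them requires the parallel postulate. In the present context the check is immediate, since all of the Euclidean propositions invoked above are proved by Euclid before Proposition I-29, the first proposition in the \emph{Elements} whose proof uses the parallel postulate.
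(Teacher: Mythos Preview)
Your argument is correct: both the reductio using I-5 and I-18 and the constructive variant via the crossbar construction are valid in neutral geometry, and your remark that every ingredient precedes I-29 in the \emph{Elements} is the right way to certify that nothing depends on the parallel postulate. There is nothing to compare against, however, because the paper does not supply its own proof of this proposition; it simply quotes I-19 (together with I-13 and the Saccheri--Legendre theorem) as a basic fact of absolute geometry, with a reference to a standard textbook, and then uses it as a black box inside the proof of Proposition~\ref{HI}.
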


\begin{prop} [Saccheri–Legendre Theorem \cite{triangle}]
    \label{euclid2}
    The sum of all three inner angles in a triangle is less than or equal to two right angles.
\end{prop}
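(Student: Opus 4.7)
The plan is to prove the statement by Legendre's classical argument in neutral (absolute) geometry, which goes through verbatim on the hyperbolic plane because it uses only incidence, congruence, and the facts already stated as Propositions \ref{euclid1} and \ref{euclid3}. I would suppose for contradiction that some triangle $\triangle ABC$ has angle sum $\pi + \delta$ with $\delta > 0$, and then build from it a triangle with the same angle sum but one vertex angle at most $\tfrac{1}{2}\angle BAC$; iterating drives that angle below $\delta$ while preserving the sum at $\pi + \delta$, which will eventually force the other two angles to sum to more than $\pi$ and contradict the Exterior Angle Inequality.

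The construction at each step is the classical midpoint doubling. Let $M$ be the midpoint of $BC$, and let $A'$ lie on the ray from $A$ through $M$ with $MA' = MA$. The triangles $\triangle AMB$ and $\triangle A'MC$ are congruent by SAS (equal sides $AM = A'M$ and $BM = CM$, vertical angles at $M$), so $\angle MA'C = \angle MAB$ and $\angle MCA' = \angle MBA$. Since $A$ and $A'$ lie on opposite sides of line $BC$, a short bookkeeping shows that the three angles of $\triangle AA'C$ are $\angle MAC$, $\angle MAB$, and $\angle ACB + \angle MBA$, whose sum is the original $\pi + \delta$. Because $\angle MAC + \angle MAB = \angle BAC$, at least one of these two pieces is at most $\tfrac{1}{2}\angle BAC$, giving the required halving.

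To close the argument I would invoke the Exterior Angle Inequality, which in neutral geometry follows from Propositions \ref{euclid1} and \ref{euclid3}: the exterior angle at any vertex of a triangle is strictly greater than either remote interior angle, so combined with Proposition \ref{euclid1} the sum of any two interior angles of a triangle is strictly less than $\pi$. Once the iterated small angle drops below $\delta$, the remaining two angles of that triangle sum to more than $\pi$, contradicting this inequality.

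The main obstacle is not depth but cleanliness: one must check at every iteration that $M$ lies strictly between the relevant vertices, that $A'$ lands on the opposite side of the base line from $A$ so the decomposition $\angle ACA' = \angle ACB + \angle MCA'$ is additive, and that the Exterior Angle Inequality itself is either cited or derived from the stated Euclid-style primitives. As a one-line alternative, since the ambient geometry here is the hyperbolic plane of constant curvature $-1$, Gauss--Bonnet gives that a geodesic triangle has area $\pi - (\alpha+\beta+\gamma)$, so non-negativity of area immediately yields the bound; I would nonetheless prefer the synthetic route because it matches the level of the other Euclid-style propositions already quoted.
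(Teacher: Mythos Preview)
The paper does not actually prove this proposition: it is quoted as a classical fact with a citation to Ryan's textbook \cite{triangle}, and no argument is given. Your Legendre-style midpoint-doubling proof is the standard neutral-geometry argument and is correct; the Gauss--Bonnet alternative you mention is also perfectly valid in the hyperbolic plane and is arguably the cleanest one-line route here. One small caution: you say the Exterior Angle Inequality (Euclid I-16) follows from Propositions~\ref{euclid1} and~\ref{euclid3}, but in Euclid's own development I-19 is downstream of I-16, so deriving I-16 from I-19 risks circularity; you should either cite I-16 as an additional neutral-geometry primitive or reprove it directly via the same midpoint-doubling construction (which is in fact how Euclid does it).
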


Note that all of those Lemma and Propositions do not rely on the parallel postulate. Therefore, they are valid on both a Euclidean and a hyperbolic plane.\\
\begin{prop}
    \label{HI}
    On a hyperbolic plane, Two independent segments intersect.
\end{prop}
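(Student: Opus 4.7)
The plan is to argue by contradiction: assume two independent segments $s=\overline{v_1v_2}$ and $s'=\overline{w_1w_2}$ fail to intersect, and reach a contradiction with $dist(v_1,v_2)>R$ or $dist(w_1,w_2)>R$. The engine is the cited lemma from \cite{hyperbolicclique}, applied after replacing the too-long segment by one of length exactly $R$; to make that replacement legal I first need a monotonicity property of hyperbolic distance along a geodesic, which is the only non-trivial ingredient.

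The auxiliary fact I would prove is: for any point $p$ and any geodesic segment $\overline{ac}$, every $b\in\overline{ac}$ satisfies $dist(p,b)\le\max(dist(p,a),dist(p,c))$. By Proposition \ref{euclid1} the angles $\angle pba$ and $\angle pbc$ sum to $\pi$, so one of them, say $\angle pba$, is at least $\pi/2$. Proposition \ref{euclid2} applied to $\triangle pba$ then forces the other two angles of that triangle to sum to at most $\pi/2$, so the angle at $b$ is the largest, and Proposition \ref{euclid3} gives $dist(p,a)\ge dist(p,b)$. This ``max-at-an-endpoint'' property will be reused in both branches below.

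For the main argument, recall that the general-position convention rules out three collinear vertices, so both $w_1,w_2$ miss the line $l$ through $v_1,v_2$. In Case~I, where $w_1$ and $w_2$ are in the same open halfplane of $l$, I pick the unique $b\in\overline{v_1v_2}$ with $dist(v_1,b)=R$ (which exists because $dist(v_1,v_2)>R$). The auxiliary fact applied along $\overline{v_1v_2}$ gives $dist(b,w_j)\le\max(dist(v_1,w_j),dist(v_2,w_j))\le R$ for $j=1,2$, so the lemma from \cite{hyperbolicclique} with endpoints $v_1,b$ of a length-$R$ segment and $c=w_1,d=w_2$ on the same side yields $dist(w_1,w_2)\le R$, contradicting $dist(w_1,w_2)>R$. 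In Case~II, $w_1,w_2$ lie in opposite open halfplanes of $l$, so $\overline{w_1w_2}$ crosses $l$ at a single point $p$; since $s$ and $s'$ do not intersect, $p$ sits strictly outside $\overline{v_1v_2}$, so WLOG $v_1$ lies strictly between $p$ and $v_2$. Then $dist(v_1,v_2)<dist(p,v_2)$, while the auxiliary fact applied along $\overline{w_1w_2}$ gives $dist(p,v_2)\le\max(dist(w_1,v_2),dist(w_2,v_2))\le R$, so $dist(v_1,v_2)<R$, again a contradiction.

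The main obstacle is the auxiliary monotonicity step: unlike the Euclidean case, we cannot appeal to coordinate or parallelogram-type identities, so the argument has to pass through Propositions \ref{euclid1}--\ref{euclid3} plus the observation that an angle of at least $\pi/2$ in a triangle must be its largest (which is ultimately why Saccheri--Legendre, and not any Euclidean-specific fact, is needed). Once that monotonicity is in hand, everything else is a routine case split on how the line $l$ separates $w_1$ and $w_2$, and the contradiction drops out of one application of the cited lemma in Case~I and a direct triangle-inequality-style estimate in Case~II. The same argument, specialized to the Euclidean plane, also proves the deferred Proposition~\ref{EI}.
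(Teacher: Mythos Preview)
Your proof is correct and rests on the same ingredients as the paper's---contradiction, the lemma from \cite{hyperbolicclique} for the ``same halfplane'' case, and the angle argument via Propositions~\ref{euclid1}--\ref{euclid2}--\ref{euclid3} to control distances along a geodesic.

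There are two minor differences worth noting. First, in Case~I the paper places two symmetric points $p,q$ on $s$ with $dist(p,q)=R$ (centered on $s$), whereas you choose a single point $b$ with $dist(v_1,b)=R$; both choices work once the monotonicity fact is in hand, and your extraction of that fact as a reusable auxiliary lemma is a clean improvement in presentation. Second, and more substantively, the paper stops after proving that $w_1,w_2$ lie on opposite sides of the line $l$ through $v_1,v_2$, implicitly treating this as equivalent to ``$s$ and $s'$ intersect.'' Your Case~II fills exactly that gap: separation of $w_1,w_2$ by $l$ only guarantees that $\overline{w_1w_2}$ meets the line $l$, not the segment $\overline{v_1v_2}$, and your argument that the crossing point cannot fall outside $\overline{v_1v_2}$ (else $dist(v_1,v_2)<R$) is the missing step. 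So your write-up is in fact more complete than the paper's on this point.
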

\begin{proof}
    Let $s:=\overline{v_1 v_2}$ and $s':=\overline{w_1 w_2}$ be two independent segments. Let $l$ be a line that passes through $v_1$ and $v_2$. We claim that $w_1$ and $w_2$ are located on the different halfplanes induced by $l$. We prove this by contradiction. Suppose $w_1$ and $w_2$ are on the same halfplane. Take points $p$ and $q$ on the segment $s$ so that $dist(v_1, p)=(a - R) / 2$, $dist(p,q)=R$, and $dist(q,v_2)=(a-R)/2$, where $a:=dist(v_1,v_2)$. If we can prove $dist(p,w_1) \leq R$, then we also have $dist(q,w_1) \leq R$, $dist(p,w_2) \leq R$, and $dist(q,w_2) \leq R$ by symmetricity. By Lemma 3.5, it follows that $dist(w_1,w_2) \leq R$. However, this is a contradiction since $s' \in \mathcal{S}$. \par 
    Now the goal is to prove $dist(p,w_1) \leq R.$ Focus on $\triangle v_1 w_1 v_2$. By the definition of independent segments, $\overline{w_1 v_2}$ is shorter than $\overline{v_1 v_{2}}$. Therefore, by the Proposition \ref{euclid2}, $\angle w_1 v_1 v_2 < \angle v_1 w_1 v_2$. We have $\angle w_1 v_1 v_2 < \pi/2$. Otherwise, $\angle w_1v_1v_2 \geq \pi/2$ and $\angle v_1w_1 v_2 > \pi/2$, and these contradict with the Proposition \ref{euclid3}. With the same argument, $\angle w_1 p v_2 < \pi/2$. By the Proposition \ref{euclid1}, $w_1 p v_1 = \pi - \angle w_1 p v_2 \geq \pi/2$. Therefore, on $\triangle w_1 p v_1$, $\angle w_1 p v_1 > \angle w_1 v_1 p$. Again, by the Proposition \ref{euclid3}, the length of $\overline{v_1 w_1}$ is greater than $\overline{p w_1}$. Recall that $dist(v_1, w_1) \leq R$. Thus, We can conclude that $dist(p, w_1) \leq R$.
\end{proof}\par
Notice that the proof of the Proposition \ref{HI} does not rely on the parallel postulate. Therefore, this proof is also valid on a Euclidean plane, proving the Proposition \ref{EI}. \par
We can also define the angle of segments on a hyperbolic plane. Again, $s=\overline{v_1 v_2},s'=\overline{w_1 w_2} \in \mathcal{S}$ be independent segments. Suppose $v_1w_1v_2w_2$ is the counter-clockwise order of four endpoints.
    Let $q$ be the intersection of the two segments. Define $\angle(s,s'):=\angle w_1qv_1$. Let $m$ be the midpoint of $s$.
    Consider taking a polar coordinate system whose origin and polar axis are $m$ and $\overrightarrow{mv_1}$. 
    Let $(r_0,\phi)$ be the polar coordinate of $w_1$. Also, define $a := dist(v_1,v_2), b:= dist(w_1,w_2), c = dist(v_1,q), d = dist(w_1,q)$. 
    The definitions are summarized in Figure \ref{definitionr}.

\begin{lem}
    \label{hypersegments}
    Let $\theta := \angle(s,s')$. For all $0 < \theta < \pi$,
    \begin{gather}
        R < a, b\leq R + 2 \Delta(\theta) \label{lemsegment1}\\
        e^c,e^d \geq \frac{\sin \theta}{2} \frac{1}{1+e^{-2R}}e^{R/2} \label{lemsegment2}\\
        \cosh(R/2- \Delta(\theta)) \frac{1+\cos(\theta)}{2}\leq \cosh(r_0)\leq \cosh(R/2+2\Delta(\theta))
        \label{lemsegment3}
    \end{gather}
    where $\Delta(\theta):=\log \left( 2(1+e^{-2 R})/(1+\cos(\theta))\right)$
\end{lem}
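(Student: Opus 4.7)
The plan is to apply the hyperbolic law of cosines in each of the four triangles $\triangle v_i q w_j$ formed at the intersection $q$, and then to rewrite the resulting inequalities via the identity
\[
\cosh x\,\cosh y \pm \sinh x\,\sinh y\,\cos\theta \;=\; \cosh(x\pm y)\cos^2(\theta/2) + \cosh(x\mp y)\sin^2(\theta/2).
\]
In the counterclockwise arrangement $v_1 w_1 v_2 w_2$, the angle at $q$ equals $\theta$ in $\triangle v_1 q w_1$ and $\triangle v_2 q w_2$, and $\pi-\theta$ in the remaining two. Combined with the independence conditions $dist(v_i,w_j)\leq R$, each of the four constraints therefore takes the shape $\cosh(\text{sum})\,\alpha + \cosh(\text{diff})\,\beta \leq \cosh R$ where $\{\alpha,\beta\}=\{\cos^2(\theta/2),\sin^2(\theta/2)\}$; since each term on the left is nonnegative, dropping either yields a clean pointwise bound on the corresponding ``sum'' length.

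The lower bound $R<a,b$ in (\ref{lemsegment1}) is immediate from $s,s'\in\mathcal{S}$. For the upper bound, the two angle-$(\pi-\theta)$ constraints (from $\triangle v_1 q w_2$ and $\triangle v_2 q w_1$) give $c+(b-d)\leq R+\Delta(\theta)$ and $(a-c)+d\leq R+\Delta(\theta)$; summing and invoking $a,b>R$ delivers $a,b\leq R+2\Delta(\theta)$. Rearranging the same pair of inequalities also produces $|c-d|\leq\Delta(\theta)$, a side bound I reuse below.

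For (\ref{lemsegment2}), the angle-$\theta$ constraint at $\triangle v_2 q w_2$ gives $(a-c)+(b-d)\leq R+\log((1+e^{-2R})/\sin^2(\theta/2))$, so $a,b>R$ forces $c+d>R+2\log\sin(\theta/2)-\log(1+e^{-2R})$. Combined with $|c-d|\leq\Delta(\theta)$, both $c$ and $d$ exceed
\[
\tfrac{c+d}{2}-\tfrac{|c-d|}{2} \;>\; \tfrac{R}{2}+\log(\sin(\theta/2)\cos(\theta/2))-\log(1+e^{-2R}),
\]
and since $\sin(\theta/2)\cos(\theta/2)=\sin\theta/2$, exponentiating delivers (\ref{lemsegment2}).

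For (\ref{lemsegment3}), the law of cosines in $\triangle m q w_1$ with the signed length $\tilde c := a/2 - c$ (which handles the sub-cases $c\leq a/2$ and $c>a/2$ uniformly, analogous to the Euclidean trick in Proposition~\ref{restrictedr}) gives
\[
\cosh r_0 \;=\; \cosh(\tilde c + d)\cos^2(\theta/2) + \cosh(\tilde c - d)\sin^2(\theta/2).
\]
The lower bound follows by discarding the $\sin^2$-term and observing $\tilde c + d \geq a/2 - \Delta(\theta) > R/2 - \Delta(\theta)$, using $d\geq c-\Delta(\theta)$ and $a>R$. For the upper bound, the first term is controlled via $\tilde c + d \leq a/2 + \Delta(\theta) \leq R/2 + 2\Delta(\theta)$; the second term is the delicate one, since $|\tilde c - d| = |a/2 - (c+d)|$ can grow with $1/\sin(\theta/2)$ when $\theta$ is small. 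One must not bound $\cosh(\tilde c - d)$ and $\sin^2(\theta/2)$ separately, but instead pipe the product through the $\triangle v_1 q w_1$ constraint $\cosh(c+d)\sin^2(\theta/2)\leq\cosh R$ together with $\cosh(\tilde c - d)\leq\cosh(c+d)/\cosh(a/2)$, yielding $\cosh(\tilde c - d)\sin^2(\theta/2)\leq\cosh R/\cosh(a/2)\leq\cosh R/\cosh(R/2)$. Assembling this with the $\cos^2$-weighted term into the claimed single bound $\cosh(R/2 + 2\Delta(\theta))$ — especially in the small-$\theta$ regime where $\Delta(\theta)$ itself shrinks and the two contributions must balance precisely — is the technically finicky point I expect to be the main obstacle.
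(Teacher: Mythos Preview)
Your treatment of (\ref{lemsegment1}), (\ref{lemsegment2}) and the lower half of (\ref{lemsegment3}) is essentially the paper's proof: the same four hyperbolic-cosine inequalities at $q$, the same rewriting via $\cosh(x\pm y)$, the same extraction of $|c-d|\le\Delta(\theta)$ as the workhorse bound, and the same multiplication of the $e^{c+d}$ lower bound (from the $\triangle v_2qw_2$ constraint) against $e^{-\Delta(\theta)}\le e^{\pm(c-d)}$ to get (\ref{lemsegment2}).

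Where you part ways is the upper bound in (\ref{lemsegment3}), and here you have made things much harder than the paper does. The paper does not try to control the $\sin^2(\theta/2)$-weighted term at all; it simply asserts
\[
\cosh(a/2-c+d)\,\tfrac{1+\cos\theta}{2}\ \le\ \cosh r_0\ \le\ \cosh(a/2-c+d)
\]
in one line and then reads off $a/2-c+d\le a/2+\Delta(\theta)\le R/2+2\Delta(\theta)$ from the bounds already in hand. No piping through the $\triangle v_1qw_1$ constraint, no $\cosh R/\cosh(R/2)$ factor, no ``balancing'' --- the convex combination is just bounded by one of its endpoints.

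Your instinct that something is delicate is not baseless: the paper's one-liner $\cosh r_0\le\cosh(\tilde c+d)$ is equivalent to $\cosh(\tilde c-d)\le\cosh(\tilde c+d)$, i.e.\ to $c\le a/2$, and the paper does not explicitly dispose of the other case. But you have traded a tacit case assumption for a genuinely incomplete argument of your own --- your claimed bound $\cosh(\tilde c-d)\le\cosh(c+d)/\cosh(a/2)$ fails whenever $c+d<a/2$, and you concede the final assembly is unresolved. The cleaner route is the paper's: take the convex-combination bound directly, and if you want to be rigorous about the sign of $\tilde c$, observe that Corollary~\ref{hyperregionU} only needs the conclusion for \emph{one} of $w_1,w_2$, which the relabeling $v_1\!\leftrightarrow\!v_2$, $w_1\!\leftrightarrow\!w_2$ (sending $c\mapsto a-c$) supplies.
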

\begin{proof}
    By the definition of independent segments, 
    $dist(v_1, w_2)$,
    $dist(v_2, w_1)$,
    $dist(v_1, w_1)$, and
    $dist(v_2, w_2)$ are all less than or equal to $R$. Therefore,
    \begin{align}
        \cosh(c) \cosh(b - d) + \sinh(c) \sinh(b - d) \cos(\theta) &\leq \cosh(R) \label{hcf1} \\
        \cosh(a-c) \cosh(d) + \sinh(a-c) \sinh(d) \cos(\theta) &\leq \cosh(R) \label{hcf2}\\
        \cosh(c) \cosh(d) - \sinh(c) \sinh(d) \cos(\theta) &\leq \cosh(R) \label{hcf3}\\
        \cosh(a - c) \cosh(b - d) - \sinh(a - c) \sinh(b - d) \cos(\theta) 
        &\leq \cosh(R) \label{hcf4}
    \end{align}
    From (\ref{hcf1}), we have
    \begin{align}
        (\ref{hcf1}) \Leftrightarrow
        \cosh(b-d+c) &\frac{1+\cos(\theta)}{2} + 
        \cosh(b-d-c) \frac{1-\cos(\theta)}{2} \leq \cosh(R) \notag \\
        \Rightarrow &\frac{1+\cos(\theta)}{2} e^{b-d+c}\leq e^R + e^{-R} \notag \\
        \Rightarrow &b-d+c \leq R + \Delta(\theta) \label{delta1}
    \end{align}
    By doing the same thing on (\ref{hcf2}), we obtain
    \begin{align}
        -(R-a)-\Delta(\theta)&\leq c-d \leq R-b+\Delta(\theta) \notag \\
        \Rightarrow -\Delta(\theta) &\leq c-d \leq \Delta(\theta)
        \label{delta2}
    \end{align}
    The last inequalities hold since $a,b > R$. Plugging this in to (\ref{delta2}) yields (\ref{lemsegment1}).\\
    From (\ref{hcf4}), we have
    \begin{align*}
        (4.6)\Leftrightarrow
        \cosh(a-c+b-d) &\frac{1-\cos(\theta)}{2}+\cosh(a-c-b+d)\frac{1+\cos(\theta)}{2}
        \leq \frac{e^R+e^{-R}}{2}\\
        \Rightarrow &\frac{1-\cos(\theta)}{2} e^{2R-c-d}\leq e^R + e^{-R}\\
        \Rightarrow &\frac{1-\cos(\theta)}{2} \frac{e^{R}}{1+e^{-2R}} \leq e^{c+d}
    \end{align*}
    Multiplying this with (\ref{delta2}) gives us
    \begin{align*}
        e^{c}, e^{d} \geq \frac{\sin \theta}{2} \frac{1}{1+e^{-2R}} e^{R/2}
    \end{align*}
    For $r_0$,
    \begin{align*}
        \cosh(a/2-c) \cosh(d) + \sinh(a/2-c) \sinh(d) \cos(\theta) = \cosh(r)\\
    \end{align*}
    Note that this holds no matter how $m$ and $q$ are  lined up on $s$. We have
    \begin{align*}
        \cosh(a/2-c+d)\frac{1+\cos(\theta)}{2} \leq \cosh(r)\leq \cosh(a/2-c+d)
    \end{align*}
    Finally, use (\ref{delta2}) to derive (\ref{lemsegment3}).
\end{proof}

\begin{cor}
    \label{hyperregionU}
    Let $\theta_0 \in (0, \pi)$. If $\angle(s,s') \leq \theta_0$, then either $w_1$ or $w_2$ must lie on $U_1$ or $U_2$ where
    \begin{align*}
        U_1&:=\{(r_0,\phi): r_1 \leq r_0 \leq r_2, 0 \leq \phi \leq \theta_0 \}\\
        U_2&:=\{(r_0,\phi): r_1 \leq r_0 \leq r_2, \pi \leq \phi \leq \pi+\theta_0 \}\\
    \end{align*} Here,
    \begin{align*}
        r_1&:=\cosh^{-1} \left(\cosh(R/2- \Delta(\theta_0)) \frac{1+\cos(\theta_0)}{2} \right)\\
        r_2&:=R/2+2 \Delta(\theta_0)
    \end{align*}
\end{cor}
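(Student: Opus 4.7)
The plan is to mirror the argument of Corollary \ref{regionU} for the Euclidean case, with Lemma \ref{hypersegments} providing the radial bound and the Saccheri--Legendre theorem (Proposition \ref{euclid2}) replacing the Euclidean angle-sum identity in the angular bound. Setting $\theta := \angle(s,s')$, one has $\theta \le \theta_0$ by hypothesis, and the intersection point $q := s \cap s'$ exists by Proposition \ref{HI}.

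For the radial bound I would apply (\ref{lemsegment3}) of Lemma \ref{hypersegments}. The upper estimate $\cosh(r_0) \le \cosh(R/2 + 2\Delta(\theta))$ is increasing in $\theta$ (since $\Delta$ is), hence is maximized at $\theta_0$ and gives $r_0 \le R/2 + 2\Delta(\theta_0) = r_2$. The lower estimate $\cosh(r_0) \ge \cosh(R/2 - \Delta(\theta))\,\tfrac{1+\cos\theta}{2}$ is decreasing in $\theta$ once one checks that $R/2 - \Delta(\theta_0) \ge 0$ (automatic for the small $\theta_0$ of interest), so evaluating at $\theta_0$ yields $r_0 \ge r_1$. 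The symmetry $w_1 \leftrightarrow w_2$ in the setup of Lemma \ref{hypersegments} gives the same two bounds for $dist(m, w_2)$.

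For the angular bound, the degenerate case $q = m$ is immediate (the polar angle of $w_1$ is exactly $\theta \le \theta_0$), so assume $q \ne m$. Up to simultaneously swapping $v_1 \leftrightarrow v_2$ and $w_1 \leftrightarrow w_2$ (which rotates the polar axis by $\pi$ and interchanges $U_1$ with $U_2$), I may assume $v_1, q, m, v_2$ appear in this order along $s$. In the triangle $\triangle w_1 m q$, the rays $\overrightarrow{q v_1}$ and $\overrightarrow{q m}$ are opposite, so Proposition \ref{euclid1} yields $\angle w_1 q m = \pi - \theta$. Saccheri--Legendre applied to this triangle then gives $\angle w_1 m q \le \theta \le \theta_0$. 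Since $q$ lies on the polar ray $\overrightarrow{m v_1}$, this unsigned angle coincides with $\angle w_1 m v_1$; the counterclockwise ordering $v_1 w_1 v_2 w_2$ places $w_1$ on the half-plane where $\phi \in [0, \pi]$, so $\phi = \angle w_1 m v_1 \in [0, \theta_0]$, hence $w_1 \in U_1$.

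The main delicate point, beyond the routine monotonicity check for the radial bound, is matching the unsigned triangle angle to the signed polar coordinate $\phi$ via the counterclockwise convention; this is what forces one to track $w_2$ rather than $w_1$ in the mirror case $v_1, m, q, v_2$. There, an identical argument applied to $\triangle w_2 m q$ — using the vertical-angle identity $\angle w_2 q v_2 = \angle w_1 q v_1 = \theta$ — yields $\angle w_2 m v_2 \le \theta_0$, placing $w_2$ into $U_2$. Either way, at least one of $w_1, w_2$ lies in $U_1 \cup U_2$, as claimed.
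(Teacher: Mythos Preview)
Your proof is correct and follows essentially the same approach the paper implies: the radial bound comes directly from (\ref{lemsegment3}) of Lemma \ref{hypersegments} via monotonicity in $\theta$, and the angular bound is the hyperbolic analogue of the brief case-split the paper gives before Corollary \ref{regionU}, with Saccheri--Legendre replacing the Euclidean angle-sum exactly as the paper itself does in the proof of Lemma \ref{hyperpigeon}. One minor remark: your hedging that the lower radial bound is decreasing ``once $R/2-\Delta(\theta_0)\ge 0$'' is unnecessary, since writing $x=\tfrac{1+\cos\theta}{2}$ gives $\cosh(R/2-\Delta(\theta))\cdot x=\tfrac12\bigl(e^{R/2}x^2/(1+e^{-2R})+e^{-R/2}(1+e^{-2R})\bigr)$, which is decreasing in $\theta$ for all $\theta\in(0,\pi)$.
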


\begin{cor}
    \label{hyperU}
    Let $s$ be a segment.
    Let $\theta_0 > 1/\sqrt{n}$.
    Let $X$ be a number of segments $s'$ which is independent from $s$ and
    $\angle(s,s')\leq \theta_0$, then
    \begin{align*}
        \mathbb{E}[X] = \sup_{(r,\theta)\in U} \{\rho(r,\theta) \}O(n^2\theta_0^3)
    \end{align*}
    as $n \rightarrow \infty$ and $\theta_0 \rightarrow 0$.
\end{cor}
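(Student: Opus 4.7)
The plan is to follow the Euclidean template of Lemma \ref{U}, substituting the hyperbolic area element $\sinh(r)\,dr\,d\theta$ for $r\,dr\,d\theta$ and accounting for the non-uniformity of $f$ by peeling off a factor of $\rho$. First I would apply Corollary \ref{hyperregionU}: any segment $s'$ independent from $s$ with $\angle(s,s')\leq\theta_0$ has an endpoint lying in $U:=U_1\cup U_2$, so $X$ is bounded by the number of vertices falling in $U$, and hence $\mathbb{E}[X]\leq nF(U)$.

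Next I would separate the density from the hyperbolic area element and write
\begin{align*}
F(U)=\int_U \rho(r)\sinh(r)\,dr\,d\theta \;\leq\; \sup_{(r,\theta)\in U}\rho(r,\theta)\cdot \mu(U),
\end{align*}
where $\mu(U_1)=\theta_0\bigl[\cosh(r_2)-\cosh(r_1)\bigr]$ by direct integration of $\sinh(r)$ against $d\theta$, and $\mu(U_2)$ is identical by the $\pi$-rotational symmetry of the construction. The crux is therefore to show $\cosh(r_2)-\cosh(r_1)=\Theta(n\theta_0^2)$, which yields $\mu(U)=\Theta(n\theta_0^3)$ and in turn $\mathbb{E}[X]\leq n\cdot \sup_U\rho\cdot O(n\theta_0^3)=\sup_U\rho\cdot O(n^2\theta_0^3)$, as claimed.

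For this asymptotic I would reuse the bookkeeping already done in Proposition \ref{HF}. Since $\theta_0>1/\sqrt n$, the term $e^{-2R}=O(n^{-2})$ is dominated by $\theta_0^2$, so $\Delta(\theta_0)=-\log\bigl((1+\cos\theta_0)/2\bigr)+O(n^{-2})=\Theta(\theta_0^2)$ by a Taylor expansion of $\cos\theta_0$. Writing $r_2=R/2+2\Delta(\theta_0)$ and $\cosh(r_1)=\cosh(R/2-\Delta(\theta_0))\cdot(1+\cos\theta_0)/2$, both $\cosh(r_1)$ and $\cosh(r_2)$ agree with $\cosh(R/2)$ to leading order; expanding to the next order in $\Delta$ and $\theta_0^2$ and using $\tanh(R/2)=1-O(e^{-R})$ makes the leading $\cosh(R/2)$ terms cancel, leaving a $\Theta(\cosh(R/2)\cdot\theta_0^2)=\Theta(n\theta_0^2)$ survivor, exactly the scaling produced by the byproduct computation in Proposition \ref{HF}. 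The main obstacle is precisely this cancellation: because $r_1$ and $r_2$ both sit within $O(\theta_0^2)$ of $R/2$, the leading $\cosh(R/2)=\Theta(n)$ terms annihilate and one must carry the second-order corrections consistently for the $\theta_0^2$ remainder to emerge---a delicate but essentially mechanical rerun of the calculation already performed for the lower bound.
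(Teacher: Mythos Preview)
Your proposal is correct and follows essentially the same strategy as the paper: bound $X$ by the vertex count in $U$, factor $F(U)\leq \sup_U\rho\cdot\mu(U)$, and then show $\cosh(r_2)-\cosh(r_1)=O(n\theta_0^2)$. The only differences are tactical---the paper handles that difference via a convexity/mean-value bound $\cosh(R/2+2\Delta)-\cosh(R/2-\Delta)\leq 3\Delta\sinh(R/2+2\Delta)$ together with $\log x\leq x-1$ to get $\Delta=O(\theta_0^2)$, rather than your direct Taylor expansion---and be aware that the $r_1,r_2$ in Proposition~\ref{HF} are \emph{not} the same quantities as in Corollary~\ref{hyperregionU}, so you are reusing only the style of that computation, not its literal output.
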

\begin{proof}
    In this proof, we simply denote $\Delta(\theta_0)$ as $\Delta$.
    It is sufficient to prove that \\$F(U_1)=\sup_{(r,\theta)\in U_1} \{\rho(r,\theta) \}O(n \theta_0^3)$. Here,
    \begin{align*}
        F(U_1)
        &=\int_{U_1} f(r,\theta) dr d\theta\\
        &=\int_{U_1} \rho(r,\theta) d\mu\\
        &\leq\sup_{U_1}\rho(r,\theta)\int_{U_1} d\mu\\
        &=2 \theta_0\sup_{U_1}\rho(r,\theta) (\cosh(r_2) - \cosh(r_1))\\
        &=2 \theta_0\sup_{U}\rho(r,\theta) (\cosh(R/2+2 \Delta) - \cosh(R/2 - \Delta) 
        \\&+ \frac{1-\cos(\theta_0)}{2} \cosh(R/2-\Delta) )
    \end{align*}
    Here, we can bound
    \begin{align*}
        &\cosh(R/2+2\Delta) - \cosh(R/2-\Delta)\\&=
        \frac{\cosh(R/2+2\Delta) - \cosh(R/2-\Delta)}{3\Delta} \cdot 3\Delta\\
        &\leq 3\sinh(R/2+2 \Delta) \cdot \Delta\\
        &\leq 3\sinh(R/2+2 \Delta) \cdot \left( \frac{2(1+e^{-2R})}{1+\cos \theta_0}-1 \right)\\
        &= O(n) \cdot \left( \frac{1-\cos \theta_0+2e^{-2R}}{1+\cos \theta_0} \right)\\
        &= O(n) \cdot \left(O(\theta_0^2) + O(n^{-4}) \right)=O(n\theta_0^2)
    \end{align*}
    The inequality follows because $\cosh(x)$ is a convex function. The last equality follows by the assumption $\theta_0^2>n^{-1} \gg n^{-4}$.
    Also,
    \begin{align*}
        \frac{1-\cos(\theta_0)}{2}\cosh(R/2-\Delta) =O(\theta_0^2) \cdot O(n)=O(n\theta_0^2)
    \end{align*}
\end{proof}\par
To fully use the Lemma \ref{hyperU}, we prove the hyperbolic version of the Lemma \ref{pigeon}. The statement is identical.
\begin{lem}
    \label{hyperpigeon}
    Let $k$ be a positive integer.
    If there exists a set of $t$ independent segments $S\subseteq \mathcal{S}$, then there exists a segment $s'$ 
    and a set of independent segments $S' \subseteq \mathcal{S}$ such that
    \begin{align}
        \label{hthree}
        \left\{
        \begin{array}{l}
        |S'| \geq \left \lceil t/k \right \rceil\\
        s' \in S'\\
        \forall s''\in S'. \angle(s', s'') \leq \pi/k\\
        \end{array}
        \right.
    \end{align}
\end{lem}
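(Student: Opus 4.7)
The plan is to reuse the skeleton of the Euclidean proof of Lemma \ref{pigeon} almost verbatim, substituting two Euclidean ingredients by their hyperbolic counterparts: the fact that two independent segments intersect (now Proposition \ref{HI}) and the equality ``the inner angles of a triangle sum to $\pi$'' (replaced by the Saccheri--Legendre inequality, Proposition \ref{euclid2}). Because the conclusion only demands an upper bound on $\angle(s', s'')$, the weakening from equality to $\leq$ is costless.

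In detail, I would fix an arbitrary $s_0 = \overline{v_1 v_2} \in S$ and partition
\begin{equation*}
S = \bigcup_{i=1}^{k} S_i, \qquad S_i := \left\{ s \in S :\ \tfrac{i-1}{k}\pi \leq \angle(s_0, s) < \tfrac{i}{k}\pi \right\}.
\end{equation*}
The pigeonhole principle furnishes an index $i$ with $|S_i| \geq \lceil t/k \rceil$. Let $s'$ be any segment in $S_i$ minimizing $\angle(s_0, s')$, and set $S' := S_i$. The first two conditions of (\ref{hthree}) hold by construction, and $S'$ inherits independence from $S$; the content is the third condition.

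To bound $\angle(s', s'')$ for an arbitrary $s'' \in S_i$, let $a$, $b$, $c$ denote the pairwise intersection points of $(s_0, s')$, $(s_0, s'')$, and $(s', s'')$ respectively; all three exist by Proposition \ref{HI}. If $a = b$ the conclusion is immediate from the definition of the directed angle. Otherwise $\triangle abc$ is a genuine geodesic triangle, and there are two orderings of $a, b$ along $s_0$; I would treat the case $v_1, b, a, v_2$, the other being symmetric. Direct inspection identifies the inner angle at $a$ with $\angle(s_0, s')$ and the inner angle at $c$ with $\angle(s', s'')$; Proposition \ref{euclid1} applied at $b$, together with the chosen ordering, identifies the inner angle at $b$ with $\pi - \angle(s_0, s'')$. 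The Saccheri--Legendre Theorem then gives
\begin{equation*}
\angle(s_0, s') + (\pi - \angle(s_0, s'')) + \angle(s', s'') \leq \pi,
\end{equation*}
which rearranges to $\angle(s', s'') \leq \angle(s_0, s'') - \angle(s_0, s') < \pi/k$; the strict inequality uses minimality of $\angle(s_0, s')$ together with the fact that $\angle(s_0, s')$ and $\angle(s_0, s'')$ both lie in the same sub-interval of length $\pi/k$.

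No substantive obstacle is anticipated: this is arguably the simplest lemma in the hyperbolic section, precisely because all the non-Euclidean geometry has already been packaged into Propositions \ref{HI}, \ref{euclid1}, and \ref{euclid2}, each of which holds independently of the parallel postulate. The only point that deserves a moment's care is confirming that replacing the Euclidean triangle-angle equality by the Saccheri--Legendre inequality still delivers the desired bound; since we seek only an upper bound on $\angle(s', s'')$, it does, and the argument transports without further modification.
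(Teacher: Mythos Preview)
Your proposal is correct and follows essentially the same approach as the paper: reuse the Euclidean pigeonhole argument of Lemma \ref{pigeon} verbatim, replacing the triangle angle-sum equality by the Saccheri--Legendre inequality (Proposition \ref{euclid2}), which suffices because only an upper bound on $\angle(s',s'')$ is needed. The paper's own proof says exactly this in two sentences (it even cites the wrong proposition number---\ref{euclid3} instead of \ref{euclid2}---whereas you identify the correct one).
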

\begin{proof}
The proof of Lemma \ref{pigeon} is also valid on a hyperbolic plane, 
except for the part (\ref{triangleangles}) where we used the fact that the sum of inner angles of a triangle is equal to $\pi$. This can be replaced with the Proposition \ref{euclid3}. Then we get
\begin{gather}
        \angle a + \angle b + \angle c \leq \pi \\
        \Leftrightarrow \angle(s',s'') \leq \pi - (\angle(s_0,s')) - (\pi-\angle(s_0,s''))= \angle(s_0,s'')-\angle(s_0,s') \notag
    \end{gather}
    Thus, we can prove the same statement.
\end{proof}\par
It is not so hard from here to prove the limited version
of the main theorem. The next proposition states that if vertices of $O_t$ do not fall on a region near the origin,
we can bound its size.
\begin{prop}
    Let $V' :=\{v \in V(G_{n,\gamma,C}): \text{the radial coordinate of $v$ is greater than $R/2$}\}$. Let $G_{n,\gamma,C}'$ be a subgraph of $G_{n,\gamma,C}$ induced by $V'$.
    Let $t=n^{(1-\alpha)/3+\epsilon}$ where $\epsilon$ is arbitrary positive constant. Then,
    \begin{align*}
        \Pr[G_{n,\gamma,C}'\ \text{has $O_t$ as its vertex-induced subgraph}] \rightarrow 0
    \end{align*}
    as $n \rightarrow \infty$.
\end{prop}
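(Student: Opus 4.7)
The plan is to mirror the proof of Theorem \ref{EOt} in the Euclidean setting, replacing the Euclidean lemmas by their hyperbolic counterparts Lemma \ref{hyperpigeon} and Corollary \ref{hyperU}, and exploiting the fact that the restriction to $V' = \{v : r_v > R/2\}$ forces all endpoints to lie in a region where the density $\rho$ is small. This density bound is the whole point of introducing $V'$: without it, vertices concentrated near the origin (where $\rho$ is large) would inflate the expected count.

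Set $k := n^{(1-\alpha)/3}$ and $\theta_0 := \pi/k$, so that $\lceil t/k \rceil \geq n^{\epsilon}$, the hypothesis $\theta_0 > 1/\sqrt{n}$ of Corollary \ref{hyperU} holds for large $n$, and $n^2 \theta_0^3 = \Theta(n^{1+\alpha})$. Suppose $G'_{n,\gamma,C}$ contains $O_t$ as a vertex-induced subgraph. Then there exist $t$ pairwise independent segments in $\mathcal{S}$ whose endpoints all lie in $V'$. Applying Lemma \ref{hyperpigeon} to these segments, one obtains a segment $s'$ and a set $S'$ of independent segments satisfying $|S'| \geq \lceil t/k \rceil \geq n^{\epsilon}$, $s' \in S'$, and $\angle(s', s'') \leq \pi/k$ for every $s'' \in S'$.

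Now fix such a segment $s'$. By Corollary \ref{hyperU} (adapted to count only segments whose endpoints lie in $V'$), the expected number of such $s''$ is bounded by $\sup_{(r,\theta) \in U \cap V'} \rho(r,\theta) \cdot O(n^2 \theta_0^3)$, where $U$ is the region from Corollary \ref{hyperregionU}. Since $\rho(r)$ is monotonically decreasing in $r$, and every vertex of $V'$ satisfies $r > R/2$, this supremum is at most $\rho(R/2)$. From the explicit formula $\rho(r) = \alpha \sinh(\alpha r) / [2\pi (\cosh(\alpha R)-1) \sinh(r)]$ together with $R = 2 \log n + O(1)$, a direct asymptotic computation yields $\rho(R/2) = \Theta(n^{-(1+\alpha)})$. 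Hence the expected count is $O(n^{1+\alpha} \cdot n^{-(1+\alpha)}) = O(1)$.

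Since $|S'| \geq n^{\epsilon}$ forces the number of qualifying endpoints in $U \cap V'$ to be at least $n^{\epsilon}$, which far exceeds its $O(1)$ expectation, a Chernoff bound (using independence of vertex placements) gives probability at most $\exp(-\Omega(n^{\epsilon}))$ for each fixed $s'$. A union bound over the $\binom{n}{2}$ possible choices of $s'$ then yields total probability $n^2 \exp(-\Omega(n^{\epsilon})) \to 0$, as required. The main obstacle is the clean bookkeeping for the density: one must verify carefully that $\rho(R/2) = \Theta(n^{-(1+\alpha)})$ and check that intersecting the region $U$ of Corollary \ref{hyperregionU} with $V'$ does not affect the argument of Corollary \ref{hyperU} beyond replacing $\sup_U \rho$ with $\sup_{U \cap V'} \rho$, the latter still being bounded by $\rho(R/2)$ thanks to monotonicity.
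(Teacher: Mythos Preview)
Your proposal is correct and follows essentially the same route as the paper: the paper's proof is only a sketch that says to mimic Theorem~\ref{EOt} using the hyperbolic lemmas and to verify the key estimate $\rho(R/2)\cdot O(n^{2}\theta_0^{3})=O(1)$ with $\theta_0=\pi/n^{(1-\alpha)/3}$ and $\rho(R/2)=O(n^{-1-\alpha})$, which is exactly what you carry out in detail. Your extra care in noting that the restriction to $V'$ simply replaces $\sup_U\rho$ by $\sup_{U\cap V'}\rho\le\rho(R/2)$ is a point the paper leaves implicit.
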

\begin{proof}
    We will not go into detail here. However, the statement can be proved in the same way as the Theorem \ref{EOt}. We end up claiming $\rho(R/2) \cdot O(n^2 \theta_0^3)=O(1)$ when $\theta_0:=\pi/n^{(1-\alpha)/3}$. Since $\rho(R/2)=O(n^{-1-\alpha})$, it follows.
\end{proof}
\subsection{Dealing With Non-Uniformity}
We prove the additional facts about two independent segments.
Again, we will use the definitions of points, angles, and length in Figure \ref{definitionr}.
\begin{cor}
    For all $0 < \angle(s,s') < \pi$,
        $\frac{1}{2}\cosh(R/2- \Delta(\pi/2)) \leq \cosh(r_0)\leq \cosh(R/2+2\Delta(\pi/2))$.
\end{cor}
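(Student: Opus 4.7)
The goal is to eliminate the $\theta$-dependence from the bounds on $\cosh(r_0)$ supplied by Lemma \ref{hypersegments}. The plan is to split on the two cases $\theta := \angle(s,s') \leq \pi/2$ and $\theta > \pi/2$; in each, the stated constant $\Delta(\pi/2)$ will absorb the $\theta$-dependence.

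In the case $\theta \leq \pi/2$, the bounds follow from Lemma \ref{hypersegments} by pure monotonicity. Since $\cos\theta$ is strictly decreasing on $(0,\pi)$, the function $\Delta(\theta)=\log\bigl(2(1+e^{-2R})/(1+\cos\theta)\bigr)$ is non-decreasing in $\theta$, hence $\Delta(\theta)\leq \Delta(\pi/2)$. This immediately gives $\cosh(R/2+2\Delta(\theta))\leq \cosh(R/2+2\Delta(\pi/2))$, proving the upper bound. For the lower bound, $R/2-\Delta(\pi/2)>0$ for sufficiently large $R$, so $\cosh(R/2-\Delta(\theta))\geq \cosh(R/2-\Delta(\pi/2))$; combining with $(1+\cos\theta)/2\geq 1/2$ yields $\cosh(r_0)\geq \tfrac12\cosh(R/2-\Delta(\pi/2))$.

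For $\theta > \pi/2$ the bound from Lemma \ref{hypersegments} degrades because $\Delta(\theta)\to\infty$ as $\theta\to\pi$. The plan here is to revisit the independence constraints used in the proof of Lemma \ref{hypersegments}. In particular, the inequality corresponding to $\text{dist}(v_1,w_1)\leq R$, namely $\cosh(c)\cosh(d)-\sinh(c)\sinh(d)\cos\theta\leq \cosh R$, can be rewritten as $\cosh(c-d)\tfrac{1+\cos\theta}{2}+\cosh(c+d)\tfrac{1-\cos\theta}{2}\leq \cosh R$. When $\cos\theta\leq 0$ the coefficient $(1-\cos\theta)/2$ is at least $1/2$, so $\cosh(c+d)\leq 2\cosh R$ and hence $c+d\leq R+O(1)$ uniformly in $\theta\in(\pi/2,\pi)$. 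Substituting this into the hyperbolic law of cosines expression for $\cosh(r_0)=\cosh(\text{dist}(m,w_1))$ in terms of $a/2-c$, $d$, and the angle at $q$ gives $\cosh(r_0)\leq \cosh(R/2+2\Delta(\pi/2))$. The lower bound in this regime should follow symmetrically, by combining the length constraints $a,b>R$ on $s$ and $s'$ (from Lemma \ref{hypersegments}) with the triangle inequality at $q$ to prevent $w_1$ from collapsing too close to $m$.

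The main obstacle is the $\theta>\pi/2$ case: one must carefully choose which of the four independence inequalities yields the tight bound and keep track of signs in the hyperbolic law of cosines depending on the relative order of $v_1,m,q,v_2$ along $s$. An alternative route, if the direct case analysis proves cumbersome, is a symmetry argument: apply Lemma \ref{hypersegments} after swapping the labelings $(v_1,v_2)\leftrightarrow(v_2,v_1)$ and $(w_1,w_2)\leftrightarrow(w_2,w_1)$, so that the relevant distance becomes $\text{dist}(m,w_2)$ at effective angle $\pi-\theta\leq \pi/2$, and then relate $\text{dist}(m,w_1)$ and $\text{dist}(m,w_2)$ using that $b=\text{dist}(w_1,w_2)\leq 2R$.
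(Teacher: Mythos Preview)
Your treatment of the case $\theta\le\pi/2$ is correct and coincides with the paper's argument. For $\theta>\pi/2$, the paper's approach is a one-line symmetry: inequality (\ref{lemsegment3}) of Lemma~\ref{hypersegments} holds for the \emph{same} $r_0=\mathrm{dist}(m,w_1)$ with $\theta$ replaced by $\pi-\theta$. Concretely, the four independence constraints (\ref{hcf1})--(\ref{hcf4}) split into two pairs carrying $+\cos\theta$ and $-\cos\theta$ respectively; the substitution $c\mapsto a-c$, $\theta\mapsto\pi-\theta$ permutes these pairs while leaving the formula $\cosh(r_0)=\cosh(a/2-c)\cosh d+\sinh(a/2-c)\sinh d\cos\theta$ invariant (because $a/2-c\mapsto-(a/2-c)$ and $\cos\theta\mapsto-\cos\theta$). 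Hence for any $\theta$ one may apply (\ref{lemsegment3}) with $\min(\theta,\pi-\theta)\le\pi/2$ and finish by monotonicity.

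Both of your routes for $\theta>\pi/2$ have real gaps. In your alternative route, swapping \emph{both} pairs $(v_1,v_2)$ and $(w_1,w_2)$ sends $\angle w_1qv_1$ to the vertical angle $\angle w_2qv_2=\theta$, not to $\pi-\theta$; and even if some relabeling gave bounds on $\mathrm{dist}(m,w_2)$ at angle $\pi-\theta$, passing back to $\mathrm{dist}(m,w_1)$ via the triangle inequality and $b\le 2R$ costs an additive $O(R)$, hence a multiplicative $e^{O(R)}=n^{O(1)}$ in $\cosh(r_0)$, which is fatal. Your primary direct approach is the right instinct---it is essentially the first step of re-deriving (\ref{lemsegment3}) at $\pi-\theta$---but bounding $c+d$ alone from (\ref{hcf3}) does not control the term $\cosh(a/2-c+d)\cdot\tfrac{1+\cos\theta}{2}$ in the expansion of $\cosh(r_0)$; you also need (\ref{hcf4}) to bound $a$ and $(a-c)-d$, at which point you have simply reproduced the paper's symmetry by hand.
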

\begin{proof}
    Let $\theta:=\angle(s,s')$. Due to symmetricity, ($\ref{lemsegment3}$) in the Lemma \ref{hypersegments} also holds for $\pi - \theta$. Especially, when $\theta = \pi - \theta'$ for some $0 < \theta' \leq \pi/2$, we have
    \begin{align*}
\cosh(R/2- \Delta(\theta')) \frac{1+\cos(\theta')}{2}\leq \cosh(r_0)\leq \cosh(R/2+2\Delta(\theta'))
    \end{align*}
    Since $\cosh(R/2-\Delta(\theta))\frac{1+\cosh(\theta)}{2}$ and $\cosh(R/2+2 \Delta(\theta))$ are monotonic functions on $0 < \theta \leq \pi/2$, we have the bound on $\cosh(r_0)$.
\end{proof}\par
\begin{cor}
    \label{hyperr0}
    With sufficiently large $n$, $\frac{1}{8} \cosh(R/2) < \cosh(r_0)$ and $R/2- 3 \log 2 < r_0$.
\end{cor}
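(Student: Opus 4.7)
The plan is to unpack the lower bound $\tfrac{1}{2}\cosh(R/2 - \Delta(\pi/2)) \leq \cosh(r_0)$ from the immediately preceding corollary and take asymptotics as $n \to \infty$. First I would evaluate $\Delta(\pi/2) = \log(2(1 + e^{-2R}))$ directly from its definition; since $R = 2\log n + C \to \infty$, this decreases to $\log 2$ with an additive correction of size $O(e^{-2R})$, and in particular $e^{-\Delta(\pi/2)} \to \tfrac{1}{2}$.

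Next, for large $R$ I would use $\cosh(x) = \tfrac{1}{2}(e^x + e^{-x}) \sim \tfrac{1}{2}e^x$ to obtain
\[
\tfrac{1}{2}\cosh(R/2 - \Delta(\pi/2)) \;\sim\; \tfrac{1}{4}\,e^{R/2}\,e^{-\Delta(\pi/2)} \;\longrightarrow\; \tfrac{1}{8}e^{R/2} \;\sim\; \tfrac{1}{4}\cosh(R/2).
\]
So for $n$ sufficiently large the left-hand side exceeds $\tfrac{1}{8}\cosh(R/2)$, and combined with the preceding corollary this gives the first conclusion $\tfrac{1}{8}\cosh(R/2) < \cosh(r_0)$.

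For the second conclusion, I would use monotonicity of $\cosh$ on $[0,\infty)$: since $r_0 \geq 0$ and $R/2 - 3\log 2 > 0$ for large $n$, it suffices to show $\cosh(r_0) > \cosh(R/2 - \log 8)$. A direct expansion gives $\cosh(R/2 - \log 8) = \tfrac{1}{16}e^{R/2} + 4 e^{-R/2} \sim \tfrac{1}{16}e^{R/2}$, which is strictly smaller asymptotically than $\tfrac{1}{2}\cosh(R/2 - \Delta(\pi/2)) \sim \tfrac{1}{8}e^{R/2}$ (the gap comes from $e^{-\Delta(\pi/2)} \to \tfrac{1}{2}$ being strictly larger than $\tfrac{1}{4} = e^{-\log 8}/2$). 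Hence the preceding corollary gives $\cosh(r_0) > \cosh(R/2 - 3\log 2)$ for large $n$, and therefore $r_0 > R/2 - 3\log 2$.

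The only substantive step is the computation $\Delta(\pi/2) \to \log 2$; after that, both inequalities are routine asymptotic comparisons of hyperbolic cosines, and I do not anticipate any real obstacle. The numerical constants $\tfrac{1}{8}$ and $3\log 2 = \log 8$ appearing in the statement simply encode the product $\tfrac{1}{2}\cdot\tfrac{1}{2}\cdot\tfrac{1}{2}$ coming from the prefactor in the preceding corollary and the limit $e^{-\Delta(\pi/2)} \to \tfrac{1}{2}$, repackaged either as a multiplicative factor in front of $\cosh(R/2)$ or as an additive shift inside the argument.
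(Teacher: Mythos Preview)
Your proposal is correct and follows essentially the same approach as the paper: both reduce the corollary to the chain of inequalities $\tfrac{1}{8}\cosh(R/2) < \cosh(R/2 - 3\log 2) < \tfrac{1}{2}\cosh(R/2 - \Delta(\pi/2))$ for large $n$, using the preceding corollary and the limit $\Delta(\pi/2)\to\log 2$. The paper states this chain and declares it ``a simple calculation task''; your asymptotic expansions are exactly those omitted calculations.
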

\begin{proof}
    It is sufficient to prove that we have $ \frac{1}{8} \cosh(R/2) < \cosh(R/2-3 \log 2) < \frac{1}{2} \cosh(R/2-\Delta(\pi/2))$ if $n$ is large enough. We omit the detail here since it is a simple calculation task.
\end{proof}\par
The following proposition states that $\phi$ can be lower bounded in terms of $\theta:=\angle(s,s')$.

\begin{prop}
    \label{polar}
    if $\theta > 1/\sqrt{n}$, then
    with sufficiently large $n$,
    \begin{align*}
        \sin \phi \geq \frac{\sin^2 \theta}{9}
    \end{align*}
\end{prop}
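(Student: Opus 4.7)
The plan is to analyze the hyperbolic triangle $\triangle m q w_1$ and apply the hyperbolic law of sines. In this triangle the side opposite the vertex at $m$ has length $d = \mathrm{dist}(q,w_1)$, and the side opposite the vertex at $q$ has length $r_0 = \mathrm{dist}(m,w_1)$. Since $q$ and $m$ both lie on the geodesic through $s$, the angle $\angle m q w_1$ is either $\theta$ or $\pi-\theta$ (depending on which side of $m$ the point $q$ sits), so its sine equals $\sin\theta$; the angle at $m$ is either $\phi$ or $\pi-\phi$ and has sine $\sin\phi$. If $q = m$ the triangle degenerates, but then $\phi = \theta$ gives $\sin\phi = \sin\theta \geq \sin^2\theta/9$ directly. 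Otherwise the hyperbolic law of sines gives
\[
\sin\phi \;=\; \frac{\sinh d}{\sinh r_0}\,\sin\theta,
\]
so it suffices to show $\sinh d/\sinh r_0 \geq \sin\theta/9$ for $n$ large.

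For the lower bound on $\sinh d$, inequality (\ref{lemsegment2}) of Lemma \ref{hypersegments} gives $e^d \geq \sin\theta\cdot e^{R/2}/(2(1+e^{-2R}))$. The hypothesis $\theta > 1/\sqrt{n}$, together with $\sin\theta = \sin(\pi-\theta)$, forces $\sin\theta \geq c/\sqrt{n}$ (replacing $\theta$ by $\min(\theta,\pi-\theta)$ if necessary), so $\sin\theta\cdot e^{R/2} = \Omega(\sqrt{n}) \to \infty$ and $\sinh d = (e^d - e^{-d})/2 \geq (1-o(1))\,e^d/2$. For the upper bound on $\sinh r_0$, the corollary immediately preceding the proposition yields $\cosh r_0 \leq \cosh(R/2 + 2\Delta(\pi/2))$; since $\Delta(\pi/2) = \log(2(1+e^{-2R})) \to \log 2$ as $n\to\infty$, this gives $\sinh r_0 \leq \cosh r_0 \leq (2 + o(1))\,e^{R/2}$.

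Putting the two pieces together,
\[
\frac{\sinh d}{\sinh r_0} \;\geq\; \frac{(1-o(1))\,\sin\theta\cdot e^{R/2}/4}{(2+o(1))\,e^{R/2}} \;=\; (1-o(1))\,\frac{\sin\theta}{8},
\]
which for sufficiently large $n$ exceeds $\sin\theta/9$. Multiplying through by $\sin\theta$ delivers $\sin\phi \geq \sin^2\theta/9$.

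The main obstacle is quantitative: the natural asymptotic ratio coming from law of sines is $1/8$, and the stated $1/9$ leaves only a thin margin, so the $o(1)$ errors from $(1+e^{-2R})^{-1}$ in Lemma \ref{hypersegments} and from $\cosh(R/2 + 2\Delta(\pi/2))$ approaching $2\,e^{R/2}$ must be shown to decay quickly enough under $\theta > 1/\sqrt{n}$. The only subtlety beyond this is that the bound $\sin\theta \geq \Omega(1/\sqrt n)$ requires invoking the symmetry $\sin\theta = \sin(\pi-\theta)$ when $\theta > \pi/2$, which is precisely the symmetry already exploited in the preceding corollary to obtain the $\theta$-uniform upper bound on $\cosh r_0$.
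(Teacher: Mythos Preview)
Your argument mirrors the paper's: apply the hyperbolic law of sines to $\triangle mqw_1$ to get $\sin\phi=(\sinh d/\sinh r_0)\sin\theta$, lower-bound $e^d$ via (\ref{lemsegment2}) of Lemma~\ref{hypersegments}, upper-bound $r_0$ via the preceding corollary (the paper writes $e^{r_0}\le e^{R/2+2\Delta(\pi/2)}$ directly rather than routing through $\cosh r_0$), and observe that the limiting constant $1/8$ leaves room for $1/9$ once $n$ is large. The one loose step---asserting that ``$\theta>1/\sqrt n$ together with $\sin\theta=\sin(\pi-\theta)$ forces $\sin\theta\ge c/\sqrt n$''---does not follow as written (nothing rules out $\theta$ close to $\pi$), but the paper makes the identical unjustified jump to $e^{R/2}\sin\theta=\Omega(\sqrt n)$, and in the only downstream application (Lemma~\ref{toobigangle}) the hypothesis is on $\sin\theta$ rather than $\theta$, so nothing is actually lost.
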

\begin{proof}
    If $\theta = \pi/2$, it clearly holds.
    Otherwise, 
    Let $\theta := \angle(s,s')$.
    Let $q$ be the intersection of $s$ and $s'$.
    Let $a := dist(v_1,v_2), b:= dist(w_1,w_2), c = dist(v_1,q), d = dist(w_1,q)$.
    By the sine formula on a hyperbolic plane, it holds (no matter how $m$ and $q$ are lined up on $s$) that 
    \begin{align*}
        \frac{\sin \theta}{\sinh r_0} = \frac{\sin \phi}{\sinh d}
    \end{align*}
    Therefore,
    \begin{align*}
        \sin &\phi \geq  \frac{\sinh d}{\sinh r_0} \sin \theta \\
        &\geq \frac{e^d - e^{-d}}{e^{r_0}} \sin \theta \\
        &\geq \frac{\frac{\sin \theta}{2} \frac{e^{R/2}}{1+e^{-2R}} - 1}{e^{R/2} e^{2 \Delta(\pi/2)}} \sin \theta\\
        &\geq \frac{\sin^2 \theta}{8}\left( \frac{1-e^{-R}}{(1+e^{-2R})^3} - \frac{8}{e^{R/2} e^{2 \Delta(\pi/2)} \sin \theta} \right)\\
        &\geq \frac{\sin^2 \theta}{9}
    \end{align*}
    For sufficiently large $n$ since $e^{-R} \rightarrow 0$ and $e^{R/2} \sin \theta =\Omega(\sqrt{n}) \rightarrow +\infty$
\end{proof}

\begin{lem}
    \label{toobigangle}
    Let $s$ and $s'$ be independent segments.
    Let $h$ be the distance between the midpoint $m$ of $s$ and the origin.
    Suppose one of $s$'s endpoints is in $B_0(R/2)$, and $\sin (\angle(s,s')) \geq 12 e^{-h/4}$, then with sufficiently large $n$,
    both endpoints of $s'$ are outside of $B_0(R/2)$.
\end{lem}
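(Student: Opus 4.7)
The plan is to argue by contradiction. Suppose first that $w_1 \in B_0(R/2)$, and derive $\cosh(\mathrm{dist}(w_1,o)) > \cosh(R/2)$, a contradiction. The case $w_2 \in B_0(R/2)$ is easier and follows from the same machinery: because $w_2$ lies on the opposite side of $s$ from $w_1$, its polar angle in the system at $m$ is close to $\pi$, so the analogous distance estimate is satisfied unconditionally. I would work in the polar coordinate system centered at the midpoint $m$ of $s$ with polar axis $\overrightarrow{mv_1}$, writing $(r_0, \phi)$ for $w_1$ and $(h, \psi_0)$ for the hyperbolic origin $o$. By the hyperbolic cosine formula,
\begin{equation*}
\cosh(\mathrm{dist}(w_1, o)) = \cosh(r_0 - h) + 2 \sinh(r_0)\sinh(h)\sin^2((\phi - \psi_0)/2),
\end{equation*}
so it suffices to produce lower bounds on $r_0$, $\sinh(h)$, and $|\phi - \psi_0|$ that push the right-hand side past $\cosh(R/2)$.

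First, since $\sin(\angle(s,s')) \leq 1$, the hypothesis $\sin(\angle(s,s')) \geq 12 e^{-h/4}$ forces $h \geq 4 \log 12$, so $h$ is bounded below by a universal constant and $\sinh(h) e^{-h} \to 1/2$. The hypothesis also gives $\angle(s,s') > 1/\sqrt{n}$, so Proposition~\ref{polar} applies and yields $\sin\phi \geq \sin^2(\angle(s,s'))/9 \geq 16 e^{-h/2}$, whence $\phi \geq 16 e^{-h/2}$ (the case $\phi$ close to $\pi$ only strengthens the conclusion). Next, I would apply the hyperbolic cosine formula to the triangle $\triangle omv_1$, whose side lengths are $h$, $a/2 > R/2$, and $\mathrm{dist}(o, v_1) < R/2$, obtaining
\begin{equation*}
1 - \cos \psi_0 \leq \frac{\cosh(R/2)}{\sinh(h)\sinh(a/2)} \leq \frac{\coth(R/2)}{\sinh(h)} = (2 + o(1))e^{-h},
\end{equation*}
so $\psi_0 \leq (2 + o(1)) e^{-h/2}$; subtracting gives $\phi - \psi_0 \geq (14 + o(1)) e^{-h/2}$. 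Finally, Corollary~\ref{hyperr0} gives $\sinh(r_0) \geq (1 - o(1)) e^{R/2}/16$.

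Combining these bounds via $\sin x \geq 2x/\pi$ on $[0, \pi/2]$, the second summand of the distance formula is at least $2 \sinh(r_0) \sinh(h) \cdot (4/\pi^2)(7 e^{-h/2})^2$, which, using $\sinh(r_0) \sim e^{R/2}/16$ and $\sinh(h) e^{-h} \to 1/2$, exceeds $\cosh(R/2) \sim e^{R/2}/2$ by a constant factor for large $n$. The main technical obstacle is constant bookkeeping: the coefficient $12$ in the hypothesis and the coefficient $9$ in Proposition~\ref{polar} are calibrated precisely so that, after losing $(2+o(1))e^{-h/2}$ to $\psi_0$, the residual $\phi - \psi_0 \geq 14 e^{-h/2}$ still drives the ratio $\cosh(\mathrm{dist}(w_1, o))/\cosh(R/2)$ to a constant strictly greater than $1$. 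Verifying that the chain of $o(1)$ errors does not erode this margin, and that the alternative case $\phi$ near $\pi$ is handled uniformly (where $\sin^2((\phi - \psi_0)/2)$ is bounded below by an absolute constant and the estimate only improves), is the delicate part.
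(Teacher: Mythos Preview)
Your argument is correct and uses the same ingredients as the paper's proof: Proposition~\ref{polar} to lower bound $\phi$ by $16e^{-h/2}$, Corollary~\ref{hyperr0} to lower bound $r_0$, and the hyperbolic law of cosines at $m$ to force $\operatorname{dist}(w_1,o)>R/2$. The only organizational difference is that the paper, instead of bounding $\psi_0=\angle v_1mo$ directly, argues by dichotomy: from $\phi=\angle v_1mo+\angle w_1mo$ one of the two angles is at least $\phi/2$, and in either case the law of cosines gives $\operatorname{dist}(v_1,o)>R/2$ (contradicting the hypothesis on $s$) or $\operatorname{dist}(w_1,o)>R/2$ (contradicting the assumption on $w_1$). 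Your route---bounding $\psi_0\le(2+o(1))e^{-h/2}$ from $v_1\in B_0(R/2)$ and subtracting---is the contrapositive of the paper's first case, and is arguably cleaner since it avoids the case split; the paper's version is slightly less delicate on constants because it only needs $\phi/2$ rather than $\phi-\psi_0$.
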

\begin{proof}
    Note that we would never have the case where both endpoints of a segment in $\mathcal{S}$ are in $B_0(R/2)$. We prove the proposition by contradiction. By symmetricity and the fact that $\sin (\pi - \angle(s,s')) = \sin(\angle(s,s'))$, We can assume that $w_1$ is in $B_0(R/2)$ without loss of generality.\par
    Recall that $\phi = \angle v_1 m w_1$. We have
    $\phi = \angle v_1 m o + \angle w_1 m o$. Therefore, either $\angle v_1 m w_1$ or $\angle w_1 m o$ must be greater than or equal to $\phi/2$. First, suppose $\angle v_1 m o \geq \phi / 2$. By the Lemma \ref{polar},
    \begin{align*}
        \sin \frac{\angle v_1mo}{2} \geq \frac{\sin \phi}{4} \geq \frac{\{\sin(\angle(s,s')) \}^2}{36} \geq 4 e^{-h/2}
    \end{align*}
    Therefore,
    \begin{align*}
    \cosh \left \{ dist(v_1, o) \right \} &= \cosh(a/2) \cos(h) - \sinh (a/2) \sinh(h) \cos (\angle v_1 m o)\\
    &> \cosh(a/2) \cosh(h) \sin^2 \left(\frac{\angle v_1 m o}{2} \right) \\
    &\geq \cosh(R/2) \frac{1}{2}e^h 16 e^{-h}\\
    &\geq 8\cosh(R/2)
    \end{align*}
    This is a contradiction since $dist(v_1,o)$ must be less than or equal to $R/2$.
    On the other hand, suppose $\angle w_1 m o \geq \phi / 2$. Then,
    \begin{align*}
    \cosh \left \{ dist(w_1, o) \right \} &\geq \cosh(r_0) \cosh(h) \sin^2 \left(\frac{\angle w_1 m o}{2} \right) \\
    &> \frac{1}{8} \cosh(R/2) \frac{1}{2}e^h 16 e^{-h}\\
    &\geq \cosh(R/2)
    \end{align*}
    Again, this is a contradiction.
\end{proof}
\begin{figure}[H]
\centering
\begin{center}
    \includegraphics[width = 9cm]{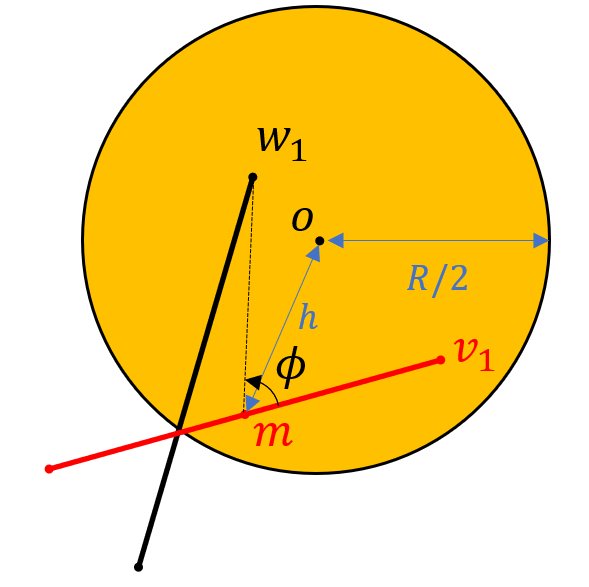}
\end{center}
\caption{The situation of the Lemma \ref{toobigangle}. The yellow region is $B_0(R/2)$. If $\angle(s,s')$ is large, then by the Proposition \ref{polar}, $\phi$ must be large. Thus, either length of $\overline{v_1o}$ or $\overline{w_1o}$ exceeds $R/2$.}
\end{figure}

Using the Lemma \ref{toobigangle}, we divide independent segments into some sets so that segments in each set satisfy either condition.
\begin{itemize}
    \item Angles between them are not so small but are located at the sparse area (See (\ref{Hthree1}))
    \item Angles between them are small but might be located at the dense area (See (\ref{Hthree2}))
\end{itemize}

\begin{lem}
    \label{pigeonplus}
    Let $k$ be a positive integer.
    If there exists a set of $t$ independent segments $S\subseteq \mathcal{S}$, then there exists a segment $s'$,
    a set of independent segments $S' \subseteq \mathcal{S}$ such that,
    \begin{align}
        \label{Hthree1}
        \left \{
        \begin{array}{c}
        |S'| \geq \left \lceil \frac{t}{3k} \right \rceil\\
        s'\in S'\\
        \forall s''\in S'. \angle(s', s'') \leq \pi/k\\
        \forall s''\in S'. \text{Both endpoints of $s''$ are outside of $B_0(R/2)$}
        \end{array}
        \right.
    \end{align}
    or 
    \begin{align}
        \label{Hthree2}
    \left \{
        \begin{array}{c}
        |S'| \geq \left \lceil \frac{t}{3k} \right \rceil\\
        s'\in S'\\
        \forall s''\in S'. \angle(s', s'') \leq 12 e^{-h/4}\pi/k\\
        \forall s''\in S'. \text{Both endpoints of $s''$ are outside of $B_0(\max\{R/2- 3 \log 2-h,0 \}$})\\
    \end{array}
    \right.
    \end{align}
    for some $h \in [0, R]$.
\end{lem}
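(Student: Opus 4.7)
The plan is to dichotomize $S$ according to whether each segment's endpoints lie well away from the origin. Define $S^{out}\subseteq S$ as the segments with both endpoints outside $B_0(R/2)$ and set $S^{in}:=S\setminus S^{out}$. Since $|S^{out}|+|S^{in}|=t$, one of the two classes has size at least $t/2$, so I split into two cases.

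In Case A, when $|S^{out}|\geq t/2$, I apply Lemma \ref{hyperpigeon} directly to $S^{out}$ with parameter $k$, obtaining $s'\in S'\subseteq S^{out}$ with $|S'|\geq \lceil t/(2k)\rceil\geq \lceil t/(3k)\rceil$ and $\angle(s',s'')\leq \pi/k$ for all $s''\in S'$, which is exactly condition (\ref{Hthree1}).

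In Case B, when $|S^{in}|\geq t/2$, I pick $s_0\in S^{in}$ that maximizes $h:=dist(m_{s_0},o)$ over $S^{in}$ and set $\delta:=12e^{-h/4}$. The contrapositive of Lemma \ref{toobigangle} forces every $s\in S^{in}$ to satisfy $\sin(\angle(s_0,s))<\delta$, since otherwise both endpoints of $s$ would lie outside $B_0(R/2)$, contradicting $s\in S^{in}$. Hence $\angle(s_0,s)\in [0,\arcsin\delta]\cup[\pi-\arcsin\delta,\pi]$, a region of total angular measure at most $\pi\delta$ via $\arcsin x\leq \pi x/2$. I partition this region into at most $k$ buckets of width $\delta\pi/k$ and pick the densest one to be $S'$, which by pigeonhole has $|S'|\geq \lceil t/(2k)\rceil\geq \lceil t/(3k)\rceil$. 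Choosing $s'\in S'$ of minimal angle to $s_0$ and reusing the triangle-style argument from the proof of Lemma \ref{hyperpigeon} (which only needs Proposition \ref{euclid2}, valid on a hyperbolic plane) gives $\angle(s',s'')\leq \delta\pi/k=12e^{-h/4}\pi/k$. The endpoint condition of (\ref{Hthree2}) then falls out: since $s'\in S^{in}$ and $h$ was chosen as the maximum midpoint-distance over $S^{in}$, we have $dist(m_{s'},o)\leq h$, so Corollary \ref{hyperr0} combined with the triangle inequality places every endpoint of every $s''\in S'$ at distance at least $R/2-3\log 2-h$ from the origin, i.e. outside $B_0(\max\{R/2-3\log 2-h,0\})$.

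The main obstacle is making the single parameter $h$ serve both constraints of (\ref{Hthree2}): the angle bound naturally demands $h=dist(m_{s_0},o)$, while the endpoint bound naturally demands $h=dist(m_{s'},o)$. Choosing $s_0$ as the midpoint-distance maximizer over $S^{in}$ reconciles the two, because the $s'$ picked later from $S^{in}$ automatically has smaller midpoint-distance. A secondary edge case is $\delta\geq 1$, where Lemma \ref{toobigangle} gives no angle constraint and the narrow-angle partition degenerates; but then $h$ is bounded by a constant, and applying Lemma \ref{hyperpigeon} directly to $S^{in}$ suffices because the resulting bound $\pi/k$ is no worse than the required $\delta\pi/k$.
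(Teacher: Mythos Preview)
Your argument is correct and takes a genuinely different route from the paper's. The paper does not first split $S$ into $S^{out}$ and $S^{in}$; instead it picks a single $s_0\in S$ with an endpoint in $B_0(R/2)$ (if none exists it falls back on Lemma \ref{hyperpigeon}), sets $\theta_{s_0}=\arcsin(\min\{12e^{-h/4},1\})$ with $h=dist(m_{s_0},o)$, and partitions \emph{all} of $S$ into $3k$ angular buckets: $k$ buckets of width $\theta_{s_0}/k$ near $0$, $k$ of the same width near $\pi$, and $k$ buckets of width $(\pi-2\theta_{s_0})/k$ in the middle. A dense middle bucket yields (\ref{Hthree1}) via Lemma \ref{toobigangle}, while a dense extreme bucket yields (\ref{Hthree2}); the endpoint condition in (\ref{Hthree2}) is obtained by applying Corollary \ref{hyperr0} to the pair $(s_0,s'')$ and the triangle inequality centred at $m_{s_0}$. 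Your two-phase decomposition is cleaner conceptually --- it makes the dichotomy explicit rather than burying it in the bucket geometry --- and it separates the two uses of pigeonhole. Your maximisation of $h$ over $S^{in}$ is a nice device to reconcile the angle and endpoint constraints, but it is in fact unnecessary: you could simply apply Corollary \ref{hyperr0} to $(s_0,s'')$ (both lie in $S$, hence are independent) rather than to $(s',s'')$, exactly as the paper does, and then any choice of $s_0\in S^{in}$ works. One minor bookkeeping point: covering $[0,\arcsin\delta]\cup[\pi-\arcsin\delta,\pi]$ by intervals of width $\delta\pi/k$ can require up to $2\lceil k/2\rceil=k+1$ buckets when $k$ is odd, so your pigeonhole actually gives $|S'|\geq |S^{in}|/(k+1)\geq t/(2k+2)$, which is still $\geq \lceil t/(3k)\rceil$ once $k\geq 2$; this is immaterial for the application where $k=n^{(1-\alpha)/3}$.
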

\begin{proof}
    Take a segment $s_0 \in S$ whose endpoint is in $B_0(R/2)$. If there is none, then we are done with
    the Lemma \ref{hyperpigeon}.
    Let $\theta_{s_0}:= \arcsin (\min \{ 12 e^{-h/4},1 \})$.
    For each index $1\leq i \leq k$, we define 
    \begin{align*}
        S_i &:= \left\{ s\in S: 
        \frac{i-1}{k} \theta_{s_0}\leq \angle(s_0,s) < \frac{i}{k} \theta_{s_0}\right\}\\
        T_i &:= \left\{ s\in S: 
        \theta_{s_0} + \frac{i-1}{k} (\pi - 2\theta_{s_0}) \leq \angle(s_0,s) < \theta_{s_0}+\frac{i}{k} (\pi - 2\theta_{s_0}) \right\}\\
        U_i &:= \left\{ s\in S: 
        \pi - \theta_{s_0} + \frac{i-1}{k} \theta_{s_0} \leq \angle(s_0,s) < \pi - \theta_{s_0}+\frac{i}{k} \theta_{s_0} \right\}
    \end{align*}
    Since $S_i \cup T_i \cup U_i = S$, $\sum_{i=0}^{k-1} |S_i|+|T_i|+|U_i| \geq t$. By the pigeonhole principle, there exists an index $i$
    such that either $S_i$, $U_i$, or $T_i$ has the cardinality greater or equal to $\lceil t/(3k) \rceil$. Let $S'$ be such a set. 
    Consider taking the segment $s'$ in $S'$ so that
        $\angle(s_0,s')=\min_{s''\in S'} \angle(s_0,s'')$.
    If $S'=S_i$ or $S'=U_i$, $\angle(s',s'')\leq \angle(s_0,s'')- \angle(s_0,s') \leq \theta_{s_0}/k \leq 12e^{-h/4} \pi /k$. The third condition of (\ref{Hthree2}) is satisfied. Also, by the Corollary \ref{hyperr0}, endpoints of $s''$ must be outside of a circle whose radius and center are $R/2-3 \log 2$ and $m$. Therefore, the fourth condition of (\ref{Hthree1}) is satisfied. We have proved that $s'$ and $S'$ fulfill (\ref{Hthree2}).
    If $S'=T_i$, then we must have $12e^{-h/4}< 1$ so that $S'(=T_i)$ is not empty. Therefore, $\sin(\angle(s_0,s)) \geq \sin(\theta_{s_0})= 12e^{-h/4}$. By Lemma \ref{toobigangle}, both endpoints of $s'' \in S'$ must be outside of $B_0(R/2)$, and (\ref{Hthree1}) holds.
\end{proof}

\begin{thm}
    Let $\epsilon$ be a positive constant.
    Let $t=n^{(1-\alpha)/3+\epsilon}$. Then,
    \begin{align*}
        \Pr[G_{n,\gamma,C} \text{ has $O_t$ as its vertex-induced subgraph}] \rightarrow 0
    \end{align*}
    as $n \rightarrow \infty$.
\end{thm}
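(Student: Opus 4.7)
The plan is to mirror Theorem \ref{EOt}, substituting Lemma \ref{pigeonplus} for Lemma \ref{pigeon} and Corollary \ref{hyperU} for Lemma \ref{U}. I would set $k := n^{(1-\alpha)/3}$ and assume for contradiction that $G_{n,\gamma,C}$ contains $O_t$ with $t := n^{(1-\alpha)/3 + \epsilon}$; the $t$ independent segments in $\mathcal{S}$ then feed Lemma \ref{pigeonplus}, yielding a segment $s'$ and a set $S' \subseteq \mathcal{S}$ with $|S'| \geq \lceil t/(3k) \rceil = \Omega(n^\epsilon)$ satisfying either (\ref{Hthree1}) or (\ref{Hthree2}) for some $h \in [0,R]$. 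The remaining task is to bound, for each fixed $s'$ and $h$, the probability that such an $S'$ exists, and then to union-bound.

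For alternative (\ref{Hthree1}), every endpoint of a segment in $S'$ lies outside $B_0(R/2)$, so the region $U$ of Corollary \ref{hyperU} may be replaced by its intersection with the complement of $B_0(R/2)$, and monotonicity of $\rho$ gives $\sup_U \rho \leq \rho(R/2) = \Theta(n^{-1-\alpha})$. With $\theta_0 = \pi/k$, Corollary \ref{hyperU} produces $\mathbb{E}[X] \leq \sup_U \rho \cdot O(n^2 \theta_0^3) = O(n^{-1-\alpha}) \cdot O(n^{1+\alpha}) = O(1)$, and since $|S'| \leq X$, with $X$ dominated by the number of independently placed vertices in $U$, a Chernoff bound yields $\Pr[|S'| \geq n^\epsilon] \leq \exp(-\Omega(n^\epsilon))$.

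Alternative (\ref{Hthree2}) is the heart of the matter. Endpoints of segments in $S'$ may reach radius as small as $R/2 - 3\log 2 - h$, where the explicit form of $\rho$ gives $\rho(R/2 - 3\log 2 - h) = \Theta(n^{-1-\alpha} e^{(1-\alpha)h})$; on the other hand, the tighter angle bound $12 e^{-h/4}\pi/k$ contributes $e^{-3h/4}$ to $\theta_0^3$. Multiplying,
\begin{align*}
\mathbb{E}[X] = O(n^{-1-\alpha} e^{(1-\alpha)h}) \cdot O(n^{1+\alpha} e^{-3h/4}) = O(e^{(1/4 - \alpha)h}) = O(1),
\end{align*}
where the final bound uses $\alpha > 1/2 > 1/4$, so the exponent is non-positive. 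Chernoff again delivers $\exp(-\Omega(n^\epsilon))$.

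Finally, I would union-bound over the at most $n(n-1)/2$ choices of $s'$ and, in case (\ref{Hthree2}), over an $O(\log n)$-size discretization of $h \in [0,R]$ into unit-length intervals; rounding $h$ within a bucket only inflates hidden constants in $\mathbb{E}[X]$, since the bounds are continuous and piecewise monotone in $h$. The total failure probability is $O(n^2 \log n) \cdot \exp(-\Omega(n^\epsilon)) \to 0$. The hardest part will be precisely the exponential balancing in (\ref{Hthree2}): the density growth $e^{(1-\alpha)h}$ near the origin must be dominated by the angle-cube decay $e^{-3h/4}$, which requires $\alpha > 1/4$ and is exactly why Lemma \ref{pigeonplus} was crafted with an exponentially tightening angle threshold; any cruder version of that lemma would leave a residual factor that blows up as $h \to R$ and defeat the union bound.
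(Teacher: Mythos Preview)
Your proposal is correct and follows essentially the same route as the paper: set $k=n^{(1-\alpha)/3}$, apply Lemma~\ref{pigeonplus}, and in each of the two alternatives bound $\mathbb{E}[X]$ via Corollary~\ref{hyperU}, with the identical cancellation $e^{(1-\alpha)h}\cdot e^{-3h/4}=e^{(1/4-\alpha)h}=O(1)$ (using $\alpha>1/2$) in case~(\ref{Hthree2}), then Chernoff plus a union bound over $s'$. Your explicit $O(\log n)$ discretization of $h$ is a detail the paper leaves implicit---it union-bounds only over the $n(n-1)/2$ segments---so your version is if anything slightly more careful, but the argument is the same.
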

\begin{proof}
    Let $k:=n^{(1-\alpha)/3}$ and $\theta_0:=\pi/k$.
    Suppose $G_{n,r}$ has $O_t$ as its vertex-induced subgraph.
    Apply Lemma \ref{pigeonplus} for $t$ and $k$. As in the proof
    of $\ref{EOt}$, Then there exists a segment $s'$ and a set of independent segments $S'$ such that
    either (\ref{Hthree1}) or (\ref{Hthree2}) holds.
    We first claim that for a certain segment $s'$, the probability that
    there exists a set of independent segments $S'$ such that (\ref{Hthree1}) holds is at most $\exp(-\Omega(n^{\epsilon}))$. By the Corollary \ref{hyperU}, the expected number of segments which is possibly in $S'$ can be bounded by $\rho(R/2) \cdot O(n^2 \theta_0^3)=O(n^{-1-\alpha}) \cdot O(n^2\cdot n^{-(1-\alpha)})=O(1)$. However, the size of $S'$ must be $n^{\epsilon}$. By the Chernoff bound, we have the probability bound. We next claim that for a certain segment $s'$, the probability that
    there exists a set of independent segments $S'$ such that (\ref{Hthree2}) holds is at most $\exp(-\Omega(n^{\epsilon}))$. Again, it is sufficient to prove that the expected number of segments that satisfy the third and the fourth condition of (\ref{Hthree2}) is $O(1)$. Let $\theta_{s_0}:=12e^{-h/4}\pi/k=12e^{-h/4}\theta_0$. Here,
    \begin{align*}
        (\text{The expected value})
        &\leq n\rho(\max\{R-3 \log 2 - h, 0\}) \cdot O(n\theta_{s_0}^3)\\
        &= n \cdot \frac{\alpha}{2\pi}\cdot (\cosh(\alpha R)-1)^{-1} e^{-(1-\alpha)(R/2-3 \log 2-h)} \cdot O(n e^{-3h/4} \theta_0^3)\\
        &= n O(n^{-2\alpha}n^{-1+\alpha} e^{(1-\alpha)h}) \cdot O(ne^{-3h/4} \theta_0^3)\\
        &= O(n^{1-\alpha}e^{-h(3/4-(1-\alpha))} \theta_0^3)\\
        &= O(n^{1-\alpha} \theta_0^3)\ (\because 1/2 < \alpha < 1)\\
        &= O(1)
    \end{align*}
    Finally, we apply a union bound over all segments and obtain
    \begin{align*}
        \Pr[G_{n,\gamma,C} \text{ has $O_t$ as its vertex-induced subgraph}] \leq \frac{n(n-1)}{2} \cdot 2\exp(-\Omega(n^{\epsilon}))
    \end{align*}
    which goes to 0 as $n \rightarrow \infty$.
\end{proof}

\begin{thm}
    There exist a positive constant $C'$ such that for all $\epsilon > 0$,
    \begin{align*}
        \Pr[\mathcal{M}(G_{n,\gamma,C})\leq \exp(C'n^{(1-\alpha)/3+\epsilon})]\rightarrow 1
    \end{align*}
    as $n\rightarrow \infty$

\end{thm}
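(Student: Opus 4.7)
The plan is to combine the high-probability upper bound on $\tau(G_{n,\gamma,C})$ from the preceding theorem with the Farber--Hujter--Tuza estimate of Theorem \ref{maximalupper}. This is the same glue step that turned the corresponding $\tau$-bound into the $\mathcal{M}$-bound in the Euclidean section, so the structure is dictated.

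Given $\epsilon > 0$, I would first invoke the preceding theorem with exponent $\epsilon' := \epsilon/2$, yielding that with probability tending to $1$ the graph $G_{n,\gamma,C}$ contains no vertex-induced $O_t$ for $t := \lceil n^{(1-\alpha)/3+\epsilon'} \rceil$. On this high-probability event we have $\tau(G_{n,\gamma,C}) < t$, and the cardinality hypothesis $n \geq 4t$ of Theorem \ref{maximalupper} is satisfied for all sufficiently large $n$ (since $t = o(n)$).

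Second, I would apply Theorem \ref{maximalupper} to conclude that, on the same event,
\[
\mathcal{M}(G_{n,\gamma,C}) \leq (n/t)^{2t},
\]
so that taking logarithms gives
\[
\log \mathcal{M}(G_{n,\gamma,C}) \leq 2t \log(n/t) \leq 2t \log n \leq 2 n^{(1-\alpha)/3 + \epsilon/2} \log n.
\]
Because $\log n = o(n^{\epsilon/2})$ as $n \to \infty$, this right-hand side is bounded by $2 n^{(1-\alpha)/3 + \epsilon}$ for all large enough $n$. Choosing $C' := 2$ (or any larger fixed constant) therefore gives the claim uniformly in $\epsilon$.

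No substantive obstacle remains; the heavy lifting was carried out in the preceding theorem that controlled $\tau$, and the present argument is a routine logarithm-versus-polynomial absorption. The only subtle point is that the statement requires $C'$ to be independent of $\epsilon$, which is handled by invoking the preceding theorem with $\epsilon/2$ in place of $\epsilon$ so that the stray $\log n$ factor produced by Theorem \ref{maximalupper} can always be absorbed into the remaining $\epsilon/2$-sized enlargement of the exponent.
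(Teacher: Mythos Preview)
Your proposal is correct and matches the paper's intended approach: the paper states this theorem (and the analogous Euclidean corollary) without an explicit proof, treating it as an immediate consequence of the preceding $\tau$-bound combined with Theorem~\ref{maximalupper}. Your care in applying the previous theorem with $\epsilon/2$ so that the $\log n$ factor from $(n/t)^{2t}$ can be absorbed---yielding a constant $C'$ independent of $\epsilon$---is exactly the small bookkeeping detail the paper leaves implicit.
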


\section{Conclusions}
In this paper, we started with the question: why is the number of maximal cliques
small on real-world networks? To give an explanation to it, we consider the number of 
maximal cliques in two models of real-world networks: Euclidean random geometric graphs and
hyperbolic random graphs. To bound the number, we focus on vertex-induced subgraphs, which are isomorphic to $O_t$. Similar geometric and probabilistic 
techniques apply to both random graphs.
For future works, we are interested in whether the property of $O_t$ has positive effects on
clique enumeration algorithms on real-world graphs. Also, the generalization to
a higher dimensional space is an open question.

\begin{table}[b]
\centering
\begin{tabular}{lrrrrr}
\hline
Graph &          $|V|$ &           $|E|$ &   $\tau$  \\
\hline
amazon0601 & 403394 & 2443408 & 5\\
as-skitter & 1696415 & 11095298 & 15\\
ca-AstroPh & 18772 & 198050 & 7\\
com-dblp & 317080 & 1049866 & 4\\
com-youtube & 1134890 & 2987624 & 7\\
email-Enron & 36692 & 183831 & 7\\
email-EuAll & 265214 & 364481 & 8\\
facebook\_combined & 4039 & 88234 & 19\\
hrg & 180705 & 968205 & 4\\
loc-gowalla\_edges & 196591 & 950327 & 8\\
soc-buzznet & 101164 & 2763067 & 14\\
soc-digg & 770800 & 5907133 & 17\\
soc-Epinions1 & 75879 & 405740 & 10\\
web-Google & 875713 & 4322051 & 5\\
web-NotreDame & 325729 & 1090108 & 5\\
web-Stanford & 281903 & 1992636 & 5\\
wiki-Talk & 2394385 & 4659565 & 13\\
wiki-topcats & 1791489 & 25444207 & 10\\
wiki-Vote & 7115 & 100762 & 8\\
\hline
\end{tabular}
\caption{Upper bound of $\tau$ (i.e. maximum $t$
such that a graph contains $O_t$ as its vertex-induced subgraph) on SNAP graph datasets}
\end{table}

\end{document}